\newcommand{\myparagraph}[1]{\noindent\emph{\textbf{#1\,}}}
\newif\ifappendix
\let\csname putmaybeappendix#1\endcsname\BODY%
\newcommand\csname putmaybeappendix#1\endcsname{}\BODY%
\newcommand{\putmaybeappendix}[1]{\csname putmaybeappendix#1\endcsname}
\newif\iffull
\newtheorem{theorem}{Theorem}[section]
\newtheorem{lemma}[theorem]{Lemma}
\newtheorem{definition}[theorem]{Definition}
\newtheorem{observation}[theorem]{Observation}
\newcommand{\VerMain}{Version Maintenance}
\newcommand{\set}[1][]{\texttt{set}$_{#1}$}
\newcommand{\acquire}[1][]{\texttt{acquire}$_{#1}$}
\newcommand{\release}[1][]{\texttt{release}$_{#1}$}
\newcommand{\getdata}[1][]{\texttt{getData}}
\newcommand{\currentv}{\texttt{V}}
\newcommand{\rlock}{\texttt{read\_lock}}
\newcommand{\runlock}{\texttt{read\_unlock}}
\newcommand{\rsync}{\texttt{\texttt{synchronize}}}
\newcommand{\vptr}{version}
\newcommand{\collect}{\texttt{collect}}
\newcommand{\outputt}{\texttt{output}}
\newcommand{\op}{operation}
\newcommand{\tuple}{tuple}
\newcommand{\activet}{active}
\newcommand{\PSWF}{PSWF}
\newcommand{\true}{true}
\newcommand{\false}{false}
\newcommand{\hide}[1]{} %hide
\newcommand{\vm}{M}
\def\phasetwo{run phase\xspace}
\def\phasethree{exit phase\xspace}
\renewcommand{\thefootnote}{\fnsymbol{footnote}}
\begin{document}
%	\copyrightyear{2019}
%	\acmYear{2019}
%	\setcopyright{acmcopyright}
%	\acmConference[SPAA '19]{31st ACM Symposium on Parallelism in Algorithms and Architectures}{June 22--24, 2019}{Phoenix, AZ, USA}
%	\acmBooktitle{31st ACM Symposium on Parallelism in Algorithms and Architectures (SPAA '19), June 22--24, 2019, Phoenix, AZ, USA}
%	\acmPrice{15.00}
%	\acmDOI{10.1145/3323165.3323185}
%	\acmISBN{978-1-4503-6184-2/19/06}
	
\title{%Multiversion Concurrency with Fast non-\{Delaying, Obstructing\} Reads \\or\\
Multiversion Concurrency with Bounded Delay\\ and Precise Garbage Collection\footnote{This paper is the full version of a paper published in the ACM Symposium on Parallelism in Algorithms and Architectures (SPAA), 2019. The conference version of this paper \cite{singlewriter} can be found at \url{https://doi.org/10.1145/3323165.3323185}.
}}

\author{Naama Ben-David \quad Guy E. Blelloch \quad Yihan Sun \quad Yuanhao Wei \\ \{nbendavi, guyb, yihans, yuanhao1\}@cs.cmu.edu\\Carnegie Mellon University}
%\affil{Carnegie Mellon University}
%\affil{\{nbendavi,guyb,yihans,yuanhao1\}@cs.cmu.edu}

%\author{Naama Ben-David}
%\affiliation{\institution{Carnegie Mellon University}}
%\email{nbendavi@cs.cmu.edu}
%\author{Guy E. Blelloch}
%\affiliation{\institution{Carnegie Mellon University}}
%\email{guyb@cs.cmu.edu}
%\author{Yihan Sun}
%\affiliation{\institution{Carnegie Mellon University}}
%\email{yihans@cs.cmu.edu}
%\author{Yuanhao Wei}
%\affiliation{\institution{Carnegie Mellon University}}
%\email{yuanhao1@cs.cmu.edu}
	\maketitle

\begin{abstract}

  In this paper we are interested in bounding the number of
  instructions taken to process transactions. The main result is a
  multiversion transactional system that supports constant delay
  (extra instructions beyond running in isolation) for all read-only
  transactions, delay equal to the
  number of processes for writing transactions that are not concurrent
  with other writers, and lock-freedom for concurrent writers. The
  system supports precise garbage collection in that versions are
  identified for collection as soon as the last transaction releases
  them. As far as we know these are first results that bound delays
  for multiple readers and even a single writer. The approach is
  particularly useful in situations where read-transactions dominate
  write transactions, or where write transactions come in as streams
  or batches and can be processed by a single writer (possibly in
  parallel).

  The approach is based on using functional data structures to
  support multiple versions, and an efficient solution to the Version
  Maintenance (VM) problem for acquiring, updating and releasing versions.
  Our solution to the VM problem is precise, safe and wait free (\PSWF).

We experimentally validate our approach by applying it to balanced tree data structure for maintaining ordered maps.  %We designed experiments to analyze the
We test the transactional system using multiple algorithms for the VM problem, including our \PSWF{} VM algorithm, and implementations with weaker guarantees based on epochs, hazard pointers, and read-copy-update. %Our \PSWF{} algorithm achieves the best overall performance.
% Experiments show that our VM algorithm reclaim memory fairly in time, which on average uses 50\% less memory for versions than the other two implementations. Our algorithm also achieves comparable throughput to the other two implementations. (!check the numbers!).
To evaluate the functional data structure for concurrency and multi-versioning, we implement batched updates for functional tree structures and compare the performance with state-of-the-art concurrent data structures for balanced trees.
The experiments indicate our approach works well in practice over a broad set of criteria.
\end{abstract}

\renewcommand*{\thefootnote}{\arabic{footnote}}
%	\pagenumbering{gobble}
%	\pagebreak
	
	\pagenumbering{arabic}
%\newpage ~ \newpage
	\section{Introduction}

Consider a sequential computation that takes $\tau$ instructions
(time) to run.  If the computation is run by some system atomically as
a transaction\footnote{Throughout we use ``transaction'' to mean the
  traditional sense of a sequence of instructions that appear to take
  place atomically at some point during their execution (strictly
  serializable)~\cite{papadimitriou1979serializability}, and not to
  mean a specific implementation technique such as transactional
  memory.}  concurrently with other transactions that share data, we
would expect it would take more time to complete.  This can be both
due to the overhead of the transactional system, and due to inherent
dependences among the transactions, forcing the system to wait for
another to complete.
In this paper we are interested in bounding the extra time.  We say
the sequential computation has $O(\delta)$ delay if its transaction
completes in $O(\tau + \delta)$ time.
% Unfortunately neither being
% lock-free nor wait-free by themselves place any useful bounds on the
% delay.  Lock-free implementations allow for infinite delay, since they
% only guarantee that some transaction makes progress, and although
% wait-free implementations guarantee finite delay, the delay can be
% arbitrarily large.  For example, on $p$ processes, a delay bound of
% $O(2^{\tau \times p})$ is happily wait-free.

In general, it is impossible to bound the delay by better than
$O({\tau \times p})$, even ignoring overheads, since for a set of $p$
transactions with equal $\tau$, the dependences between them might
require that they fully sequentialize.  For example, consider an
integer variable $x$ stored in a shared location, an arbitrary unknown
function $f$, and the transaction $x = f(x)$.  If the same transaction
is applied concurrently on $p$ processes, the transactions need to
fully sequentialize for correctness.  Hence if $f$ takes $\tau$
time on its own, and if all processes are working at the same
rate, one transaction will have to wait for at least $\tau \times p$
time to complete.

When most transactions are read-only, however, the prognosis is
significantly better.  In particular, read-only transactions (readers)
can in principle proceed with constant delay and without delaying any
writing transactions (writers), since they do not modify any memory, and hence other transactions do not depend on them.
This can be very useful in workloads
dominated by readers.  Several approaches try
to take advantage of this.  Read-copy-update (RCU)~\cite{McKenney98} allows for
an arbitrary number of readers to proceed with constant delay, and has
become a core idiom widely used in Linux and other operating
systems~\cite{McKenney01}.  In RCU, however, readers can arbitrarily delay
(block) a writer, since a writer cannot proceed until all readers have
exited their transaction.  This is particularly problematic if some
readers take significant time, fault, or sleep~\cite{Matveev15}.  Indeed RCU in
Linux is used in a context in which the readers are short and cannot
be interrupted.  With multi-versioning~\cite{Reed78,BG83,papadimitriou1984concurrency,perelman2010maintaining,Kumar14,neumann2015fast},
on the other hand, not only can readers proceed with constant delay,
but in principle, they can avoid delaying any writers---a writer can
update a new version while readers continue working on old versions.
Therefore a single writer and any number of readers should all be able
to proceed without delay (multiple writers can still delay each other).

Multi-versioning, however, has some significant implementation issues
that can make the ``in principle'' difficult to achieve in ``theory''
or ``practice''.  One is that memory can become an issue due to
maintaining old versions, possibly leading to unbounded
memory usage.  Ideally one would like to reclaim the memory used by a
version as soon as the last transaction using it finishes.  Some
recent work has studied such bounds in memory usage~\cite{perelman2010maintaining}.
Although their results ensure
readers are not blocked and do not block writers, they do not
bound delay.  Another problem arises in the most widely used
implementation of multi-versioning, which involves keeping a version
list for every object~\cite{Reed78,BG83,papadimitriou1984concurrency,Kumar14}.  The
problem is that these lists need to be traversed to find the relevant
version, which causes extra delay for reads.  The delay is not just
a constant, but can be asymptotic in the number of versions.  We
know of no multi-versioned system that can both bound the delay and
ensure memory usage bounds, even when only a single writer is allowed at any time.

In this paper, we develop strong asymptotic bounds on the
delay for transactions while also ensuring bounded memory.
%These bounds on delay imply wait-freedom.
We show what we believe are the first non-trivial cost bounds for transactions
with multi-versioning.  In particular,
for $p$ processes we describe a system with the following properties:

\begin{itemize}[noitemsep,topsep=0pt]
  \item
Read transaction are \emph{delay-free}---i.e., if they take $\tau$ time
(instructions) in the original code, they take $O(\tau)$ time
in the transactional version, from invocation to response.
  \item
A single write transaction (without other concurrent write
transactions) has $O(p)$ delay from invocation to response (i.e. when
the result is visible).
\item
  Multiple concurrent write transactions are \emph{lock-free},
  although a successful write will abort other active writers.
\item The garbage collector is \emph{precise} in that the memory
  associated with any version (except the latest) is collected as soon
  as the last transaction that holds it completes.  Furthermore, the cost of
  the collection is linear in the amount of garbage collected.
\item
  A single writer transaction along with read transactions (not including the
  garbage collection) have constant amortized memory contention.
\end{itemize}
These properties are true for arbitrarily long transactions that
access an arbitrary memory footprint for read-only transactions, and
update an arbitrary number of locations for writing
transactions.

Our approach is particularly useful in read-dominated workloads in
which a single (or very few) writer does updates, or in workloads in
which concurrent writes can be batched into single transactions in the
style of flat-combining~\cite{hendler2010flat}, and then applied by a
single writer.  As with flat-combining, batching gives up on the
wait-freedom of writes, however it allows the writes to
run in parallel potentially getting high throughput.   We study this
in our experiments.

To achieve these bounds we require that programs are implemented using
purely functional data
structures~\cite{Okasaki98,BenAmram95,KT96,Pippenger97}.  Such data
structures are widely used in languages such as F\#, Scala, OCaml,
Haskell, JavaScript, Julia, and Clojure, and date back to the 1950s
with Lisp~\cite{McCarthy60}.  They are also used in various database
systems~\cite{LMDB,couchdb,hyder,innodb}, and sometimes referred to as
copy-on-write~\cite{minuet,becker1996asymptotically}.
%\footnote{It would appear these works did not
%  realize they were reinventing an old and widely used idea.}
On
updates, the path to the update is copied.  Most standard data types
can be implemented efficiently (asymptotically) in the functional
setting, including balanced trees, queues, stacks and priority queues.
Since functional data structures are persistent (immutable), they are
naturally multi-versioned.  Applying an update leaves the old version
while creating a new version.  The version can be accessed via a
pointer to the root, and hence each version is simply a pointer to a
data structure.  The cost of traversing the structures is unaffected
by the versions (unlike version lists).  However, the problem remains
of how to ensure precise garbage collection.

\lstset{basicstyle=\footnotesize\ttfamily, tabsize=1, escapeinside={@}{@}}
\lstset{literate={<<}{{$\langle$}}1  {>>}{{$\rangle$}}1}
\lstset{language=C, morekeywords={CAS,commit,empty,local,job,taken,entry,GOTO}}
\lstset{xleftmargin=5.0ex, numbers=left, numberblanklines=false, frame=single}
\begin{figure}\small
\centering Read Transaction
\begin{lstlisting}
v = acquire(k);
user_code(v);	
// response
versions = release(k);
for (v in versions) collect(v);
\end{lstlisting}
\centering Write Transaction
\begin{lstlisting}
v = acquire(k);
newv = user_code(v);
flag = set(newv);				
// response if successful--- update visible here
versions = release(k);	
for (v in versions) collect(v);
if (!flag) collect(newv) and retry or abort
\end{lstlisting}
\vspace*{-.1in}
\caption{Read and Write transactions with \acquire, \set, and \release.
  $k$ is the process ID.}
%\vspace*{-.15in}
\label{alg:framework}
\end{figure}

\hide{
\begin{figure*}\small
\begin{minipage}{.47\textwidth}
\begin{lstlisting}
v = acquire(k);
user_code(v);	
// response
versions = release(k);
for (v in versions) collect(v);
\end{lstlisting}
\centering Read Transaction
\end{minipage}\hspace*{.1in}
\begin{minipage}{0.49\textwidth}
\begin{lstlisting}
v = acquire(k);
newv = user_code(v);
flag = set(newv);				
// response if successful--- update visible here
versions = release(k);	
for (v in versions) collect(v);
if (!flag) collect(newv) and retry or abort
\end{lstlisting}
\centering Write Transaction
\end{minipage}
\vspace*{-.1in}
\caption{Read and Write transactions with \acquire, \set, and \release.
  $k$ is the process ID.}
%\vspace*{-.1in}
\label{alg:framework}
\end{figure*}
}

For the purpose of garbage collection, we introduce the version
maintenance (VM) problem.  The problem is to implement a linearizable
object with three operations: \acquire, \release{} and \set.  The
\acquire{} operation returns a handle to the most recent version,
in a way that ensures it cannot be collected.  The \set{} operation updates the
current version to a new pointer, returning whether it succeeded or
failed.  The \release{} operation indicates that the currently
acquired version is no longer needed by the process, potentially making it available to be collected.
It returns a list of versions that can be collected---i.e.,
for which no other process has acquired it and not released it.
Only one version can be acquired on any process at any time, i.e. the
current version must be released before a new one is acquired.  In the
\emph{precise} VM problem, the release will return a singleton list
precisely when the process is the last to release its version, and an
empty list otherwise.  We give a solution to the precise version.

The VM object can be used to implement read-only and writing
transactions as shown in Figure~\ref{alg:framework}.
The read transaction is effectively done after step 2 (response
could be sent to a client), and the rest is a cleanup phase for the
purpose of GC.   Similarly, writing transactions are
done after step 3, at which point the result is visible to other transactions.
After the release, any garbage can be traced from the
released pointers and collected in work linear in the amount of
garbage collected using a standard reference counting collector.

\begin{table}
  \centering
  \small
  \begin{center}
  \begin{tabular}{c|rc@{ }c@{ }|@{}c@{}}
    \hhline{-----}
    %\hline
    &\multicolumn{3}{c|}{\textbf{Time Bound}} & \textbf{Properties}\\
%\hhline{=====}
%\hline
\hhline{-----}
    & \multicolumn{3}{c@{ }|}{Thm. \ref{thm:waitfree-step} and \ref{thm:contention}} &\multicolumn{1}{@{}c@{}}{Thm. \ref{thm:waitfree-lin}}\\
\cline{2-5}
    \textbf{} & &    \textbf{Time}&\textbf{Contention} & No abort and wait-free\\
\cline{2-4}
    \textbf{VM}&\textbf{\acquire{}} & $O(1)$& $O(1)$& for readers and one writer,\\
    %\cline{2-4}
    \textbf{}&\textbf{\release{}} & $O(P)$& $O(P)$&linearizable\\
    %\cline{2-4}
    &\textbf{\set{}} & $O(P)$& $O(P)$&\\
     %\hhline{=====}
     %\hline
     \hhline{-----}
    &\multicolumn{3}{c|}{Thm. \ref{thm:transaction-delay}, \ref{thm:transContention} and \ref{thm:collect-correct}}& Thm. \ref{thm:serial} and \ref{thm:safe-precise}\\

    \cline{2-5}
 \textbf{In} & \textbf{Reader}& \multicolumn{2}{l|}{delay-free} & No abort and wait-free \\
    %\cline{2-4}
 \textbf{All}   & \textbf{Writer} &\multicolumn{2}{l|}{$O(P)$-delay}&  for readers and one writer,\\
        %\cline{2-4}
    &\textbf{GC} & \multicolumn{2}{l|}{$O(S+1)$ time}&serializable, safe and precise GC\\
%\hhline{-----}
\hhline{-----}
%\hline
  \end{tabular}
  \end{center}
  \caption{\small The time bounds and properties guaranteed by our algorithms and the corresponding theorems in this paper. ``VM'' means the \VerMain{} problem. $P$ is the number of processes. The contention bounds are amortized. In GC, $S$ is the number of tuples that were freed. ``Delay'' is defined in Section 2. Safe and precise GC are defined in Section \ref{sec:gc}. }\label{tab:allresults}
  
\end{table}

We describe a wait-free algorithm for the precise VM problem, which we refer to as the \PSWF{} algorithm.
It supports the acquire with $O(1)$ delay, and
set and release with $O(p)$ delay.  A read-only transaction only costs
the delay of an acquire (constant), followed by the cost of the
transaction itself, which is unaffected by the multi-versioning (e.g.,
a search in a balanced tree will take $O(\log n)$ time).
In our implementation, the \set{} can only fail
if a concurrent writer has succeeded between its \acquire{} and
\set{}.
Therefore a non-conflict writing transaction takes effect in the
time of the transaction itself plus the cost of the
acquire and set, which is $O(p)$ time (for the \set{}).
We also consider the memory contention of the three operations.
The costs and properties are summarized in Table~\ref{tab:allresults}.

% We note that a
% writing transaction can make an arbitrarily large change to the state,
% which takes effect atomically.  This is not supported by most
% concurrent data structures---e.g. atomically making a thousand inserts
% to a balanced tree is not supported by most concurrent tree
% structures~\cite{x}, but rather requires some form of transactional
% system such as transactional memory~\cite{}.

We finish by describing some experiments for both the VM algorithms and the functional data structures.
We test the transactional system using multiple VM algorithms in our framework, including our \PSWF{} algorithm, and implementations with weaker guarantees based on epochs and hazard pointers. Experiments show that our \PSWF{} algorithm on average uses 60\%-90\% less memory for versions than the other two implementations because of precise garbage collection. Our algorithm also achieves comparable throughput to the other two implementations.

To evaluate the functional data structure for concurrency and multi-versioning, we implement batched updates for functional trees and compare the performance with existing concurrent data structures. %including Chromatic trees, Masstree, OpenBW tree, and skiplists.
Experiments show that in the tested workloads with mixed reads and updates, using functional data structures with batching can outperform concurrent data structures by more than 20\%.

	\section{Preliminaries}
\label{sec:prelim}

%The machine model we consider is an \emph{asynchronous shared memory} with $P$ processes.
We consider \emph{asynchronous shared memory} with $P$ processes.
%%In this paper, we consider an asynchronous shared memory with $P$ processors.
%%An \emph{asynchronous shared memory} with $P$ processors
%
%The shared memory is composed of \emph{objects}, which each processor may access through %\emph{read}, \emph{write}, and \emph{CAS}
%shared memory \op{s}. The \op{s} are not instantaneous; each \op{} $i$ by processor $p$ has an \emph{invocation}, which is when $i$ starts and a \emph{response}, which is when it ends. The time between $i$'s invocation and response is called $i$'s \emph{\op{} interval}. An \op{} is said to be \emph{pending} if it has been invoked, but has not yet received its response.
%Each processor may have at most one pending \op{} at any given point in time \Naama{discuss parallelism?}, however, \op{s} of different processors may have overlapping intervals, in which case they are said to be \emph{concurrent}.
%A shared memory \emph{object} is defined by a \emph{sequential specification}; a set of \op{s} that may be called on the object, and a specified behavior that must be followed if these \op{s} are done on the object sequentially.
%
Each process $p$ follows a deterministic sequential protocol composed of \emph{primitive \op{}s} (read, write, or compare-and-swap) to implement an object.
%\Naama{Not sure we actually use `work' anymore.}
%We use \emph{work} to refer to a count of machine \op{}s
%that are executed.  This could either be on a single processor or summed
%across processors -- i.e. if 10 processors each run 10 \op{s},
%the work is 100.  The \emph{machine work} is with respect to machine
%\op{s}.
We define \emph{objects, operations} and \emph{histories} in the standard way~\cite{HS08}. We consider \emph{linearizability} as our correctness criterion \cite{herlihy1990linearizability,herlihy1991wait}.
An \emph{adversarial scheduler} determines the order of the invocations and responses in a history.
We refer to some point in a history as a \emph{configuration}. We define the \emph{time complexity} of an operation to be the number of instructions (both local and shared) that it performs. Note that this is different from the standard notion of \emph{step complexity} which only counts access to shared variables.
% and may choose any order as long as this order respects the sequential execution of each processor. In this way, the relative speeds of processes may be arbitrarily different.

%A given history $H$ implies a partial order, $<_H$ on the \op{}s; given two \op{s} $i$ and $j$, $i <_H j$ if $i$'s response is before $j$'s invocation in $H$.
%A history $H$ is said to be \emph{sequential} if $<_H$ is a total order on \op{}s in $H$.
%A history $H$ is said to be \emph{linearizable} \cite{herlihy1990linearizability} if there is a sequential history $S$ such that: (1) $<_S$ is an extension of $<_H$, and (2) $S$ adheres to the sequential specification of the object being implemented.
%Intuitively, a linearizable history is one in which each \op{} appears to take effect instantaneously at some point during its interval.
%An implementation of an object is linearizable if all possible histories are linearizable.
%
%A processor is said to \emph{make progress} if it completes an infinite number of \op{s} in an infinite run. An implementation is \emph{lock-free} if, in an infinite run, at least one processor makes progress. It is said to be \emph{wait-free} if every processor makes progress \cite{herlihy1991wait}.

\myparagraph{Transactions.}
We consider two types of transactions: \emph{read-only} and \emph{write}. Each transaction has an \emph{invocation}, a \emph{response}, and a \emph{completion}, in that order. A transaction is considered \emph{active} between its invocation and response, and \emph{live} between its invocation and completion.
Intuitively, the transaction is executed between its invocation and response, and does some extra `clean-up' between its response and its completion.
We require that transactions be strictly serializable, meaning that each transaction appears to take effect at some point during its active interval.
We refer to a write transaction as \emph{single-writer} if no other write transaction is live while it is live.

\myparagraph{Delay.}  We say that the \emph{time}, of a computation
(or algorithm) on a single process is the number of instruction steps
that the computation executes, including all local and shared
instructions.  We say that the \emph{user instructions} of a
transaction are the instructions that would be run in a sequential
setting using regular reads and writes.  We want to simulate these
instructions in a way that the transaction appears atomically in the
concurrent setting.  Consider a transaction that executes user code
that consists of $m$ user instructions.  Such a simulation has
\emph{delay $d$} if the active interval takes $O(d + m)$ time, similarly to~\cite{bendavid2019delay}.
A transaction is \emph{delay-free} if the delay is constant
(or zero).  The $O(d+m)$ bound includes
all instructions needed to ensure strict serializability, and the
big-O is independent of the number of processes, the number of
versions, or the actions of any other concurrent processes.  In a
traditional multiversion system, for example, the bound needs to included
the possibly large number of instructions needed to traverse a version
list.

\myparagraph{Contention.}
%Intuitively, an \op{} is said to \emph{contend} with another \op{} if they are concurrently accessing the same location, and at least one of them is a modifying \op{} (write or CAS) \cite{ben2017analyzing,fich2005linear}, and high level \op{}s contend if they contain \op{s} that contend.
We say that the amount of contention experienced by a single
shared-memory \op{} $i$ in a history $H$ is the number of \emph{responses} to modifying \op{}s on the same location that occur between $i$'s invocation and response in $H$.
Note that this is not exactly the definition presented in any previous paper, but it is strictly stronger (implies more contention) than both the definition of Ben-David and Blelloch \cite{ben2017analyzing} and the definition of Fich \textit{et al.} \cite{fich2005linear}. Therefore, the contention results in this paper hold under the other models as well.
%\Naama{Discuss contention models (stalls, ours).}

\begin{figure}%{r}{0.5\columnwidth}
     %\vspace{-5pt}
     \begin{minipage}{.4\columnwidth}
       \framebox{\includegraphics[width=0.8\columnwidth]{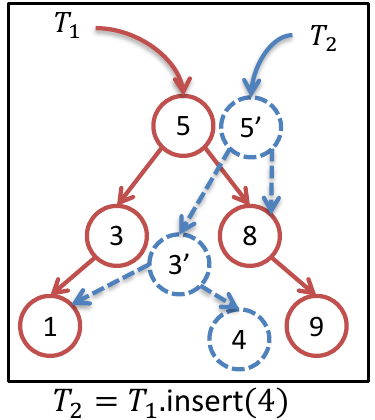}}
     \end{minipage}
     \begin{minipage}{.5\columnwidth}
       \caption{{\small An example of the insert function under PLM using path copying. The output $T_2$ is represented by the root pointer at $5'$, while the input $T_1$ can still be represented by the original root pointer at $5$.
}\label{fig:persistence}}
     \end{minipage}

\end{figure}

\myparagraph{Functional Data Structures.}  We assume that the memory
shared by transactions is based on purely functional (mutation-free)
data structures.  This can be abstracted as the \emph{pure LISP
  machine}~\cite{Okasaki98,BenAmram95,Pippenger97} (PLM), which, like
the random access machine model (RAM), has some constant number of
registers.  However, the only instructions for manipulating memory,
are (1) a \texttt{tuple}$(v_1,\ldots,v_l)$ instruction, which takes
$l$ registers (for some small constant $l$) and creates a tuple in
memory containing their values, and (2) a \texttt{nth}$(t, i)$
instruction, which, given a pointer $t$ to a tuple and an integer $i$
(both in registers), returns the i-th element in this tuple.  Values
in the registers and tuples are either primitive, or a pointer to
another tuple.  There is no instruction for modifying a tuple.
Changing a data structure using PLM instructions are done via
\emph{path copying}, meaning that to change a node, its ancestors in
the data structure must be copied into new tuples, but the remainder
of the data remains untouched.
Using
PLM instructions, one can create a DAG in memory,
which we refer to as the \emph{memory graph}.
A special and commonly-used case for the memory graph is a tree structure.
%, where pointers to tuples denote the relation from a parent to a child in the tree.

We define the \emph{version root} as a pointer to a tuple, such that the data
reachable from this tuple constitutes the state that is visible to a
transaction.  Then each update on version $v$ yields a new version by path-copying starting from
the version root of $v$, and the new copied root provides the view to the new version.   An example of using path-copying
to insert a value into a binary tree memory graph is shown in Figure
\ref{fig:persistence}.
In our framework, every transaction $t$ acquires exactly one version $V(t)$.
%which is determined before executing the user code.
If $t$ has not yet
determined its version at configuration $C$, then $V_C(t) = null$
until it does.
%At any point in a history, there is a single
%\emph{current version} $c$, which is the version most recently committed.
%Especially, when multi-versioning is supported by functional data structures,
We use the version roots as the data pointers in the \VerMain{} problem.

\myparagraph{Garbage Collection.}
We assume all tuples are allocated at their tuple
instruction, and freed by a \texttt{free} instruction in the GC.  The \emph{allocated space} consists of all tuples that are allocated and not yet freed.
For a set of transactions $T$, let
$R(T)$, or the \emph{reachable space} for $T$ in configuration $C$, be the set of tuples
that are reachable in the memory graph from their corresponding
version roots, plus the current version $c$, i.e. the \tuple{s} reachable from $\{ V(t) \vert t \in T \} \cup \{c\}$.% $\{V(t) : t \in T\}$.
We say that a \tuple{}  $u$ \emph{belongs} to a version $v$ if $u$ is reachable %by traversing the memory graph starting at
from $v$'s version root. Note that $u$ can belong to multiple versions. We define a precise and a safe GC, respectively, as follows.

\begin{definition}\label{def:precisegc}
	A garbage collection is \emph{precise} if
	the allocated space at any point in the user history is a subset of the
	reachable space $R(T)$ from the set of live transactions $T$.
\end{definition}

\begin{definition}\label{def:safegc}
  A garbage collection is \emph{safe} if the allocated space is always a superset of the reachable space from the active transactions.
\end{definition}

Roughly speaking, precise GC means
to free any out-of-date tuples in time, and safe GC means not to free any tuples that are currently used by a transaction.

\hide{
\myparagraph{Garbage Collection.}
\emph{Reference counting}~\cite{Collins60,Jones11} is a common approach for enabling safe garbage collection. The idea is that each object maintains a count of the number of references to it, and when it reaches $0$, it is safe to collect.

Intuitively, we say that garbage collection is \emph{precise} if
at any point in the user history, any \tuple{} that are not used by the \activet{} transactions $T$ are freed.
This says that all out-of-date \tuple{s} are freed in time.
We say that garbage collection is \emph{safe} if none of the \tuple{s} used in the set of transactions that are not yet in their \phasethree{} are freed.
We give the formal definition of \emph{precise} and \emph{safe} garbage collection in Section \ref{sec:functional}.
}

	\section{The Version Maintenance Problem}\label{sec:vermain}

In our transaction framework, we abstract what we need for the purpose of maintaining versions as the
\emph{\VerMain{}} problem, which tackles entering and exiting the transactions (see Figure \ref{alg:framework}).

%The \phasetwo{} is executing code specified by the user, and we will discuss an efficient \texttt{collect} function in \phasefour{} in Section \ref{sec:gc}.
The \VerMain{} problem, or \VerMain{} object, supports three operations: \set{}, \acquire{}, and \release{}.
%
%By its definition, the solution to the \VerMain{} problem is to support a concurrent data structure with three operations: \set{}, \acquire{} and \release{}.
At a high level, the \acquire{} operation returns a version for the process to use and \release{} is called when the process finishes using the version.
%%To indicate that a version is safe to collect, the last \release{} operation on a given version returns true.
%A \release{} operation also returns a (possibly empty) set of versions that are safe to collect. Two \release{} operations are not allowed to return the same version.
New versions are created by \set{} operations.
All three operations take as input an integer $k$ that represents the id of the process that calls the operation. The \set{} operation in addition takes in a pointer to the new version that it should commit, and returns a flag indicating whether or not it succeeded.
%\guy{need to return success or failure.}

We refer to the pointer to a version as the \emph{data pointer}.
More formally, if $d$ is a pointer to data, \set($d$), if successful, creates a new version with  pointer $d$ and sets it as the \emph{current version}, i.e.,
\begin{definition}\label{def:current}
  The \emph{current version} is defined as the version set by the most recent successful \set{} operation.
 % \guy{need to consider a failed set.}
\end{definition}

The operations are intended to be used in a specific order: an \acquire($k$) should be followed by a \release($k$), with at most one \set($k$, $d$) in between, where $d$ is a pointer to a new version. If this order is not followed for each $k$, then the operations may behave arbitrarily; that is, we do not specify a `correct' behavior for the operations of a \VerMain{} object $O$ in an execution once any operations are called out of this order on $O$.

We define the liveness of a version $v$ as follows.
\begin{definition}\label{def:live}
	A version $v$ is \emph{live} at time $t$ if it is the current version at $t$, or if $\exists k$, s.t. an \acquire($k$) operation $A$ has returned $v$ but no \release($k$) has completed after $A$ and before $t$.
\end{definition}
We note that a version is live while a transaction using that version
is active.   The transaction itself can remain live after its version is
dead, while it garbage collects.

The following is the sequential specification of these operations assuming that they are called in the correct order (\acquire-\release{} or \acquire-\set-\release{} for each id $k$).

\begin{itemize}[noitemsep,topsep=0pt]
	\item \texttt{data* acquire(int k):} Returns the current version.
	\item \texttt{data** release(int k):} %Releases the version returned by the most recent \acquire($k$).
	Returns a (possibly empty) list of versions that are no longer live. No version can be returned by two separate \release{} operations.
	\item \texttt{bool set}\texttt{(int k,} \texttt{data* d):} Sets the version pointed to by $d$ as the current version. Returns \true{} if successful. May also return \false{} if there has been a successful \set[] between this \set[] and the most recent \texttt{acquire(k)}. If the \set[] returns \false{}, it has no effect on the state of the object.
\end{itemize}

We say that a process $p_k$ has \emph{acquired} version $v$ if \acquire$(k)$ returns $v$, and say $p_k$ has \emph{released} $v$ when the next \release$(k)$ operation returns.
If a \set{} operation returns \true, we say that it was \emph{successful}. Otherwise, we say that the \set{} was \emph{unsuccessful} or that the \set{} \emph{aborted}. Note that conditions for correct aborting for the \set{} are reminiscent of $1$-abortability defined by Ben-David \textit{et al.}~\cite{bendavid2016k}, but we relax the requirements to allow a successful \set[] to cause other \set[]s to abort even if it was not directly concurrent with them, but happened sometime since that process's last \acquire[].

An implementation of a \VerMain{} object is considered \emph{correct} if it is linearizable as long as no two operations with the same input $k$ run concurrently.
Furthermore, it is considered \emph{precise} if the \release{} operation returns exactly the versions that stop being live at the moment the \release{} operation returns.
Note that this means that in a precise implementation of the \VerMain{} problem, each \release{} operation $r$ returns a list containing at most one version, and this version must be the one that $r$ released.
We show some properties of a correct \VerMain{} in Appendix \ref{sec:vm-prop}.
%the full version of this paper.
%Appendix \ref{sec:vm-prop}.

Where convenient, for a version $v$, we use \acquire[v], \release[v] and \set[v] to denote an \acquire{} operation that acquires $v$,  a \release{} operation that releases $v$, and a \set{} operation that sets $v$ as the current version, respectively.

    %\section{Version Maintenance Algorithms}
%\label{sec:algo}

\lstset{basicstyle=\footnotesize\ttfamily, tabsize=2, escapeinside={@}{@}}
%\lstset{literate={<<}{{$\langle$}}1  {>>}{{$\rangle$}}1}
\lstset{language=C, morekeywords={CAS,commit,empty,local,job,taken,entry,GOTO}}
\lstset{xleftmargin=5.0ex, numbers=left, numberblanklines=false, frame=single}
%\lstset{caption={\texttt{tuple} and \collect{} algorithms},captionpos=b,label={alg:collect}}
\makeatletter
\lst@Key{countblanklines}{true}[t]%
{\lstKV@SetIf{#1}\lst@ifcountblanklines}

\lst@AddToHook{OnEmptyLine}{%
	\lst@ifnumberblanklines\else%
	\lst@ifcountblanklines\else%
	\advance\c@lstnumber-\@ne\relax%
	\fi%
	\fi}
\makeatother

%In the next section, we show how to use a solution to the version maintenance problem in order to achieve safe and precise garbage collection in our multi-version database.

\hide{\subsection{Properties of the Version Maintenance Problem}}
\begin{maybeappendix}{prop-proof}
To facilitate presenting the algorithms and their proofs, we begin with a couple observations for version maintenance algorithms.

%Recall the following definitions from the sequential specification of the version maintenance problem given in Section \ref{sec:vermain}:
%(1) The \emph{current version} is defined as the version set by the most recent \set{} operation and (2) A version $v$ is live if it is the current version or if it has been acquired but not released.

The sequential specification can be summarized into two points: (1) Each \acquire{} operation returns the data pointer associated with the current version and (2) A \release[v] operation returns true if and only if $v$ is not live after the \release[v] operation. Note that this means that for any particular version $v$, there is exactly one \release{} operation that returns true, and that operation is the \emph{last} operation done on $v$.
In the proofs of linearizability of our algorithm (which we show in Appendix \ref{sec:waitfree-correctness}), we state linearization points, and then proceed to show that for any given history, if we sequentialize it based on the stated linearization points, it adheres to the above sequential specification.

It is useful to note the following two facts, which must hold in any algorithm that solves the version maintenance problem.
%The sequential specification of the version maintenance problem implies the following facts, which must hold in any algorithm that solves it.

%Finally, we present two observations that are common to both correctness proofs.
\begin{observation}
\label{lem:ss2}
$v$ is live immediately before the linearization point of a \release[v] operation.
\end{observation}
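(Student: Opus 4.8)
The plan is to unwind the definitions of liveness (Definition~\ref{def:live}) and of when a process has \emph{acquired} and \emph{released} a version, together with the requirement that each process's \acquire{} and \release{} operations alternate. Fix the \release[v] operation in question; by convention it is the \release{} operation of some process $p_k$ that corresponds to the \acquire{} operation of $p_k$ that returned $v$ (call the latter the \acquire[v]). Since $p_k$ executes a deterministic sequential protocol, this \acquire[v] responds before the \release[v] is invoked, and hence before the linearization point of \release[v].

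Let $C$ denote the configuration immediately before the linearization point of \release[v]. First I would argue that $p_k$ has acquired $v$ by $C$: the \acquire[v] returned $v$, and it did so before the \release[v] was even invoked, so its response has occurred by $C$. Next I would argue that $p_k$ has not released $v$ by $C$: a process is said to have released a version only when its next \release{} operation returns, and since the linearization point of \release[v] lies within its invocation--response interval, \release[v] has not yet returned by $C$. The alternation requirement guarantees there is no other \release{} operation of $p_k$ between the \acquire[v] and the \release[v], so the version $p_k$ currently holds is still $v$ and it has not been released.

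Combining these two facts, at configuration $C$ the version $v$ has been acquired (by $p_k$) but not released, so by Definition~\ref{def:live} $v$ is live immediately before the linearization point of \release[v], as claimed. I expect the only point requiring care to be making the real-time ordering precise---namely that the response of \acquire[v] has occurred by $C$ while the response of \release[v] has not---but this follows immediately from the sequentiality of $p_k$ together with the fact that a linearization point always lies within its operation's invocation--response interval. The argument otherwise reduces to a direct reading of the definitions, with no genuine computational content.
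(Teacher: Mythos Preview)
Your proposal is correct and follows essentially the same approach as the paper: identify the corresponding \acquire[v] of $p_k$, argue it has already taken effect before $C$ while the \release[v] has not, and conclude via Definition~\ref{def:live}. The only cosmetic difference is that you phrase ``acquired'' and ``released'' in terms of real-time responses, whereas the paper phrases them in terms of linearization points; since linearization points lie within invocation--response intervals, the two readings coincide here.
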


\begin{proof}
Let $C$ be the configuration immediately before the linearization point of a \release[v] operation.
The \release[v] operation must have a corresponding \acquire[v] operation that was linearized before $C$. Since the \release[v] operation is linearized after $C$, $v$ is live at $C$.
\end{proof}

\begin{observation}
\label{lem:ss1}
A version $v$ is live for a contiguous set of configurations.
\end{observation}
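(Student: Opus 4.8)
The plan is to show that the liveness status of $v$ can change at most twice over the course of the history: once from not-live to live, and once from live back to not-live. Since Definition~\ref{def:live} declares $v$ live exactly when it is the current version or when it has been acquired but not released, I would track these two sources of liveness separately and argue that the not-live $\to$ live transition can occur only once.

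First I would pin down the interval during which $v$ can possibly be the current version. By Definition~\ref{def:current}, $v$ is the current version at a configuration $C$ precisely when the most recent \set{} operation linearized before $C$ is the (unique) \set[v] operation. Because \set{} operations are totally ordered by their linearization points (and the single-writer assumption ensures they do not even overlap in real time), $v$ is current exactly on the contiguous block of configurations from the linearization of \set[v] up to the linearization of the next \set{} operation, or to the end of the history if none follows. In particular, once some later \set{} makes $v$ non-current, $v$ can never become current again, as there is no second \set[v].

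Next I would control the acquired-but-not-released source of liveness. Using sequential-specification point~(1)---that every \acquire{} returns the data pointer of the current version---any \acquire[v] must be linearized at a configuration where $v$ is current, hence inside the block identified above. Consequently, before \set[v] the version $v$ is neither current nor acquired, so it is not live there; and once $v$ has permanently ceased to be current, no new \acquire[v] can occur, so the set of outstanding (acquired-but-unreleased) copies of $v$ only shrinks and never grows. The key consequence I would extract is a \emph{monotonicity} statement: once $v$ is not live at some configuration $C$ after \set[v]---meaning at $C$ it is neither current nor has any outstanding acquire---it remains not live at every later configuration, since it can neither regain current status nor pick up a fresh outstanding copy.

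Finally I would assemble these pieces: $v$ is not live before \set[v]; it is live immediately after \set[v], being the current version; and by the monotonicity statement the first configuration after \set[v] at which $v$ becomes not live is followed only by not-live configurations. Hence the live configurations form exactly one interval, giving the desired contiguity. I expect the main obstacle to be the bookkeeping around the acquired-but-not-released set: I must rule out an \acquire[v] sneaking in after $v$ has lost current status, which is precisely where sequential-specification point~(1) is essential, and I should handle the boundary case in which $v$ is still current at the end of the history, so the live interval is right-open and never closes.
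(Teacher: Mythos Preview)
Your proposal is correct and uses essentially the same idea as the paper: the only way $v$ can transition from not-live to live is at the unique \set[v], because any \acquire[v] is linearized while $v$ is already current (hence already live). The paper compresses this into a single paragraph by directly arguing that neither \acquire{} nor \release{} can cause a not-live $\to$ live transition, whereas you unpack the two clauses of Definition~\ref{def:live} and derive a monotonicity statement; the substance is the same.
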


\begin{proof}
Note that an \acquire{} operation always returns the current version, which is already alive, and in a \release[v] operation, $v$ is also already alive because it is acquired but not released. Therefore they cannot cause any version to become live. Meanwhile in \set[v] the version $v$ is set to be the current version, thus a version $v$ becomes live only at the linearization point of a \set[v] operation. Since there is only one \set[v] operation in any execution history, $v$ can only become live once, and this completes the proof.
\end{proof}

\hide{
\begin{proof}
To prove this lemma, it suffices to show that $v$ becomes live only at the linearization point of a \set[v] operation. Since there is only one \set[v] operation in any execution history, $v$ can only become live once, and this completes the proof.

Other than a \set[v] operation, the only operations that could potentially cause $v$ to become live are \acquire[v] operations. Since \acquire{} operations always return the current version, they always return a version that is already live. Therefore they cannot cause any version to become live. Therefore $v$ becomes live only at the linearization point of a \set[v] operation.
\end{proof}}
\end{maybeappendix}

\subsection{The \PSWF{} Algorithm}
\label{sec:waitfree}
%\myparagraph{An Efficient Wait-Free Algorithm.}
We now present a simple wait-free algorithm that solves the precise version maintenance problem.
That is, the \release[] operation returns either an empty list of versions, or a singleton containing the version that it is releasing.
We show that our wait-free algorithm is linearizable, and analyze it to obtain strong time complexity bounds; the \acquire[] operation takes $O(1)$ time, and the \release[] and \set[] operations each take $O(P)$ time. Furthermore, we show that in the single-writer setting, where concurrent \set[] operations are disallowed, the algorithm guarantees amortized constant contention per shared-memory operation.
These properties show that regardless of adversarial scheduling, version maintenance need not be a bottleneck for transactions.
The main results are shown in Theorem \ref{thm:waitfree-lin}, \ref{thm:waitfree-step} and \ref{thm:contention}. All proofs are in the Appendix.
Pseudocode for the algorithm is given in Algorithm~\ref{alg:waitfree}, and Figure~\ref{fig:algo} shows how its data is organized.

\begin{figure}
	\includegraphics[width=0.9\columnwidth]{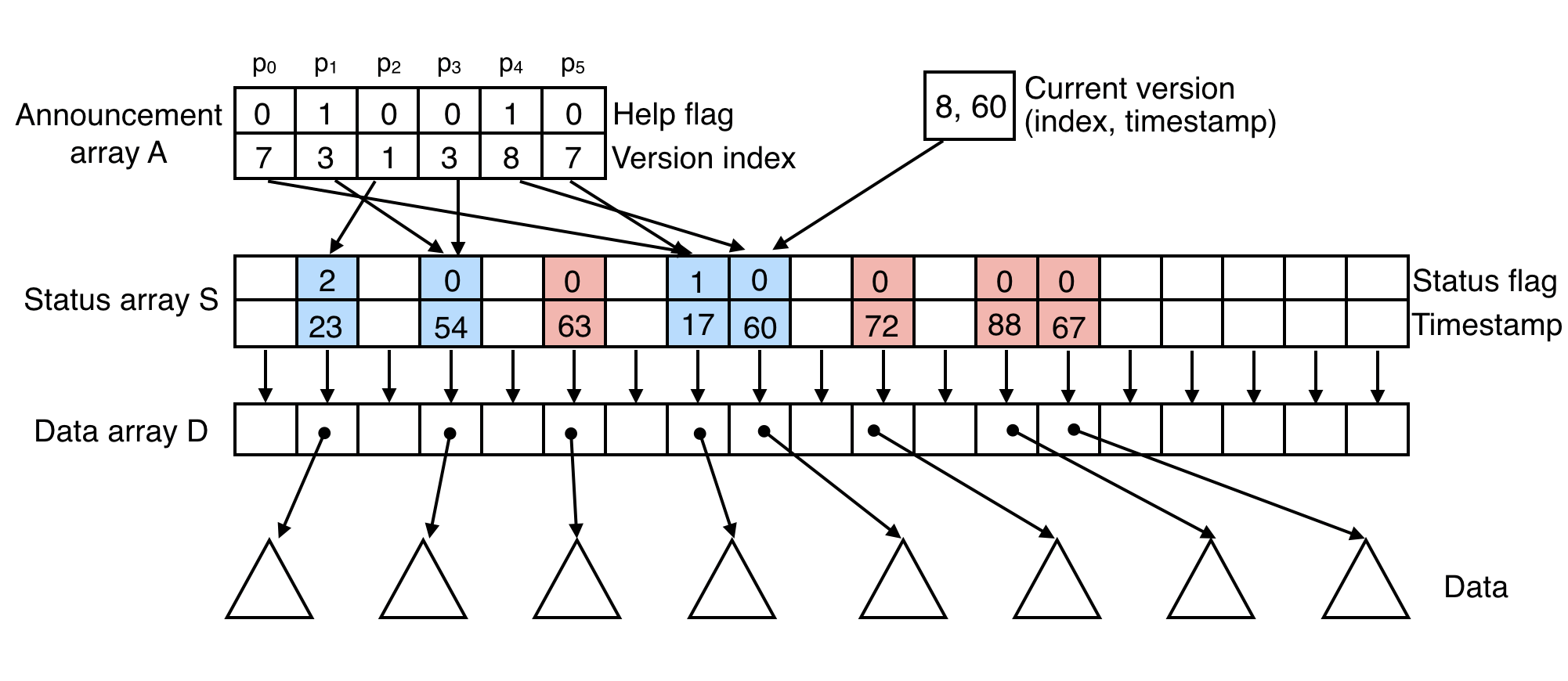}
	\caption{\small The data structures used by Algorithm~\ref{alg:waitfree}. Blue slots in the status array represent live versions. Red slots are versions that a pending \set[] operation is trying to commit. Each announcement array slot has a timestamp in addition to the version index, and each status array slot also has an index, but they are omitted to avoid clutter.}
	\label{fig:algo}
\end{figure}

To understand the idea behind our algorithm, consider the following simplified (but incorrect) implementation. To set a new version, a process $p$ simply CASes its data pointer into a global \emph{currentVersion} location. If its CAS fails then it aborts. To acquire a version, $p$ reads the currentVersion and copies it over to $p$'s slot in an \emph{AnnouncementArray}, thereby signaling to others that it is using this version. The acquire operation then returns the version that it read. When releasing a version $v$, $p$ scans the AnnouncementArray to see whether anyone else is still using $v$. If not, $p$ returns $v$, as it is the last process that used this version. Otherwise, $p$'s release returns an empty list.
This simple outline of an algorithm for the precise \VerMain{} problem satisfies the intuition of what should happen in a solution to the \VerMain{} problem; processes always acquire the current version, and return a version from their release operation only if this version stops being live at the end of the operation. However, this algorithm does not work in a completely asynchronous setting.

To see why, first note that a process $p$ that executes an \acquire{} operation may stall of a long time after reading the currentVersion but before announcing what it read. This could lead to a situation in which, by the time $p$ announces the version $v$ that it read,  $v$ has long since stopped being live, and has already been returned by some \release{} operation. This scenario is not linearizable.
We must also ensure that exactly one releasing process returns each version, meaning that an order between concurrent releasers must be established.
Finally, we need to ensure that if a \set[] aborts, then it or its preceding \acquire[] were concurrent with a successful \set[].

%\guy{Kind of hard to swallow the following paragraphs without a figure or pseudocode.}
To fix the \acquire{} operation, we assign each process a `helping' flag in its announcement slot, and use that flag to create two stages of the acquire operation; first a version is read from the current version field, \currentv{}, and announced with a `helping' flag set, meaning that this is the version that the process intends to use, but has not started accessing yet. To secure this version, the acquiring process, $p$, must reread the current version to ensure that it has not changed, and then set the `helping' flag to \false. In the meantime, other processes may see $p$'s announcement, and help it complete its acquire.
%All \set[] operations that do not abort (and some that do) try to help the \acquire[]s, so that no \acquire[] can repeatedly fail without receiving help.
Some \set[] operations will try to help the \acquire[]s, so that no \acquire[] can repeatedly fail without receiving help.
Once the flag is down, $p$ is said to have \emph{committed} its announced version.
In this way, the releasing process returning the version $v$ can ensure that no process can acquire (commit) the same version $v$ after it terminates.

To ensure that each version is only ever return by one \release{} operation, we assign each version $v$ a ``status'' (stored in the array \texttt{S}), which can be in one of three states at any given time: \emph{usable}, \emph{pending}, and \emph{frozen}.
A \release[v] \op{} mainly deals with two things: helping all other processes complete their \acquire{} on version $v$, when necessary, and deciding if this is the last usage of version $v$, and returning \true{} if so.
If $v$ is \emph{usable}, it means that no \release{} operation is currently in progress on $v$, and $v$ may be in use. If a releasing process $p$ sees this status, it tries to switch its status to pending, and if it succeeds, it then starts scanning the announcement array. While $v$ is \emph{pending}, a single releasing process is scanning the announcement array, and helping any process that has announced $v$ to complete its \acquire. Any releasing process that observes that $v$ is already in the pending state can safely return \false{} because there are currently other processes releasing this version.
Once $p$ has done scanning the array, it sets $v$'s status to \emph{frozen}. This indicates to all other releasing processes that $v$ if no process currently has $p$ acquired, then $v$ can never again be acquired by any new process. Thus, if no process currently has $v$ announced, it is safe to return \true{} on a \release{} of $v$. To ensure that only one releaser does so, the releasers of $v$ compete in erasing $v$ from the status array, and only the winner returns \true{}.

Finally, we allow the \set[] operation by process $p$ to abort only under two conditions: (1) the current version \currentv{} is not the same as $p$'s acquired version (in this case, it is easy to see that there must have been a successful \set[] operation since $p$'s \acquire); or (2) the \set[] operation cannot find a spot in which to place its new version. That is, we have an array called $S$ of versions that are currently active, and it is preallocated with a specific number of slots. Each \set[] operation scans the array of versions to try to find an empty slot in which it can place its new version. The intuition is that if it cannot find an empty slot, then there must have been many other \set[] operations concurrent with it. By setting the size of $S$ to be large enough ($3P+1$ in our case), we can ensure that if a \set[] operation $op$ does not find any empty slots, there must have been some process that has executed a successful \set[] during $op$'s interval.

We now describe the algorithm in more detail. %The pseudocode of the algorithm is given in Algorithm~\ref{alg:waitfree}.
A version $v$ is represented as a pair of a timestamp and an index. If $v$ is alive, the status of $v$ is stored in $S[v.index]$ (the \emph{Status} array) and its associated data pointer is stored in $D[v.index]$ (the \emph{VersionData} array).
For the rest of the paper, when we refer to a version, we mean a timestamp-index pair.
%For the rest of the paper, we refer to a version as a timestamp-index pair.
%Each slot in the status array $S$ stores a \emph{collecting flag} \texttt{h} and a version. $S$ is mainly used by the \release{} operations to decide when to return \true{}.
Since there are at most $P+1$ live versions, and at most $P$ active \set[] operations that could occupy another slot with a potential version, the Status and Data arrays can never have more than $2P+1$ occupied slots. However, for the purpose of guaranteeing that a \set[] operation will only abort if it was concurrent with a successful \set[], we initialize $S$ and $D$ to be of size $3P+1$.
Each slot $A[k]$ in the announcement array belongs to process $p_k$, and stores a \emph{help flag} \texttt{help} and a version.
%When $A[k] = \langle 0, v\rangle$, the \acquire(k) method effectively has committed version $v$.
%If $A[k] = \langle 0, v\rangle$ we say that the \acquire(k) method \emph{has committed} version $v$.
%Intuitively, this is the version that the process is currently accessing, or the one it intends to access.
A global variable \currentv{} stores the current version.
%
%
%Each processor announces the version that it is currently using in its slot in the announcement array.
%This announcement takes the form of a pointer to an element of the version array, along with a timestamp (which helps to avoid the ABA problem). The version array contains a pointer to a root of a version in each of its slots.
%
%Next, we introduce some terminology for discussing the state of a slot in the announcement array: if $A[k] = \langle 1, \text{empty}\rangle$, we say that the \acquire($k$) method is \emph{requesting help}, if $A[k] = \langle 1, v\rangle$, we say that  \acquire($k$) method \emph{has announced} version $v$, and if $A[k] = \langle 0, v\rangle$ we say that the \acquire(k) method \emph{has committed} version $v$.

\myparagraph{Set.} To execute a \set($d$) operation for a data pointer $d$, a process $p$ first creates a new version $v$ locally, and then looks for an empty slot for $v$ in the status array. If it does not find an empty slot, then it aborts. Intuitively, it is ok to abort at that stage because at any given moment, $S$ can have at most $2P$ occupied slots (one version acquired by each process, and another version that is in the middle of being \set[] by each process). So, if $p$ finds all $3P+1$ slots occupied, it means that it was concurrent with $2P+1$ other \set[] operations. Since there are only $P$ processes, at least one process $q$ executed $3$ \set[] operations concurrently with $p$'s \set. If one of $q$'s \set[]s were successful, $p$ can safely abort its own operation. Otherwise, all 3 of $q$'s operations must have been concurrent with a successful \set[] (for $q$ to legally abort), and therefore, at least one of those successful \set[]s must have been concurrent with $p$'s.

Now we assume that $p$ did find an empty slot in $S$. Let $i$ be the index of this empty slot. $p$ initializes $S[i]$ with the new version, and writes $d$ into $D[i]$.
%and releases the old version (the one that was previously pointed to by \textit{ver\_ptr}).
Before setting $v$ as the current version and terminating, $p$ scans the announcement array, and helps every process that needs help
%(the helping flag at $A[k]$ indicates that processor $k$ is trying to acquire a new version)
(i.e. $A[k] = \langle true, *\rangle $). To ensure that the helping is successful, $p$ needs to perform three CAS operations on $A[k]$. Each CAS tries to set $A[k]$ to $\langle 0, oldVer \rangle$, where $oldVer$ is the version that $p$ currently has acquired (announced in $A[p]$). To ensure that $oldVer$ is still valid, $p$ checks whether it is still the current version. If it is not, $p$ aborts. These CAS operations can be thwarted at most twice by the \acquire($k$) that requested help, so that the help is guaranteed to have succeeded after the third CAS.
%operation is guaranteed to have a committed value after the third helping CAS.
%it makes failed attempts to acquire a version.
Finally, $p$ tries to set $v$ as the current version by CASing it into \currentv{}. If this CAS succeeds, so does $p$'s \set[] operation. If it fails, $p$ aborts, but first clears the slot it occupied in $S$ to allow others to use it.

\myparagraph{Acquire.} %The goal of an \acquire($k$) operation is to reserve some version by committing it into $A[k]$.
The \acquire($k$) operation begins by requesting help, reading the current version $v$, and announcing it in $A[k]$. To ensure that $v$ is still the current version at the announcing step, the operation reads \currentv{} again. There are two cases. If it finds that \currentv{} has been updated, it starts over. It will only ever restart once, because
%If it finds that \currentv{} hasn't changed, then it proceeds to commit $v$, again with a CAS.
%it reads the new current version $w$, announces $w$ in $A[k]$ with a CAS, and starts over.
if it finds that \currentv{} has been updated once again, it knows that two \set{} operations have occurred, one of which must have committed a version into $A[k]$ by performing 3 helping CASes.
If $v$ is still the current version, we use a CAS to set the helping flag in $A[k]$ to $0$. Even if this CAS fails, $A[k]$'s helping flag must now be $0$, since an \acquire's CAS only fails if it was helped by another process (a \set{} or a \release{} operation).
Once \acquire($k$) successfully commits a version $v$, % by setting (or seeing) $A[k]$ as $\langle 0,v\rangle$,
it reads and returns the corresponding data pointer $D[v.index]$.%, which stores the data pointer associated with $v$.

%At any time, the only reason for an \acquire($k$) operation's CAS to fail is if another process helped it commit a version. This can be done by either a \set{} operation, or a process trying to release the version that the \acquire($k$) operation announced (see details in the \emph{release} part). Once an \acquire($k$) operation successfully commits a version $v$ by setting (or seeing) $A[k]$ as $\langle 0,v\rangle$, it simply reads and returns $D[v.index]$, which stores the data pointer associated with $v$.

\myparagraph{Release.} To perform a \release($k$) operation, the process $p_k$ first reads the committed version $v$ from its announcement slot, and clears the slot. If $v$ is still current, the \release($k$) operation returns \false{} because $v$ is still live. Otherwise,
%$v$ might no longer be live after this release operation, so
it must check whether someone else is still using $v$.
This is done by looking at the status at $S[v.index]$.
$S[v.index]$ might be empty or store a version other than $v$. In that case, some other \release{} of $v$ has already returned \true{}, so $p_k$ returns \false{}.
Otherwise, if $S[v.index]$ stores a valid status (\emph{usable}, \emph{pending}, or \emph{frozen}), then $p_k$ uses this status to determine what to do, as described earlier.

\renewcommand{\figurename}{Algorithm}
\lstset{literate={<}{{$\langle$}}1  {>}{{$\rangle$}}1}

\begin{figure*}[!t!h]
	\caption{The Precise, Safe and Wait-free Algorithm for the Version Maintenance Problem}
%\scriptsize
\begin{minipage}{0.48\textwidth}
\begin{lstlisting}[countblanklines=false]
enum VStatus {usable, pending, frozen};
struct Version{
	int timestamp;
	int index; 	};
struct VersionStatus {
	Version v;
	VStatus status; };
struct Announcement {
	Version v; 	
	bool help;	};

Version V;
VersionStatus S[3P+1];
Announcement A[P];
Data* D[3P+1];
Version empty = <@$\bot$@, @$\bot$@>;
Data* getData(Version v) {
	return D[v.index];} @\label{line:wf_read}@
	
bool set(int k, Data* data) {
    //the version you acquired	
    Version oldVer = A[k].v; 
	Version newVer;
    //find empty slot
	for(int i = 0; i @$<$@ 3P+1; i++) { 
		if(S[i] == <empty, usable>) {
			newVer = <V.timestamp+1, i>;
			if(CAS(S[i],<empty, usable>, @\label{setFillS}@
                    <newVer, usable>)){ 
				D[i]=data;
				break;	}   }	
		if(i == 3P) return false; } @\label{setAbort}@
    //try to help everyone
	for(int i = 0; i @<@ P; i++) { 
        //help 3 times
		for(int j = 0; j @<@ 3; j++) { 
			Announcement a = A[i];
			if(a.help) {
				if(oldVer != V) return false; @\label{setHelpCheck}@
				CAS(A[i],a,<oldVer, false>);@\label{set_help}@ } } }
	bool result = CAS(V, oldVer, newVer); @\label{lin_set}@
	if (!result){
		S[i] = <empty, usable>;	}
	return result;	}
\end{lstlisting}
\end{minipage}\hfill
\begin{minipage}{0.49\textwidth}
\begin{lstlisting}[countblanklines=false,firstnumber=37]
Data* acquire(int k) {
	Version u = V; //read current version V
	A[k] = <u, true>; //announce it @\label{request_help}@
	if(u == V) {
		CAS(A[k], <u, true>, <u, false>);
		return getData(A[k].v); }
    //try again with new version
	for(int i=0;i@<@2;i++){ 
		Version v = V;
		if(!CAS(A[k], <u, true>, <v, true>)) { @\label{line:wf_r1}@
			return getData(A[k].v); }
		if(v == V) {
			CAS(A[k], <v, true>, <v, false>); @\label{line:wf_r2}@
			return getData(A[k].v); }
		u = v; }
	return getData(A[k].v); 
}

data** release(int k) {
	Version v = A[k].v;
	A[k] = <empty, false>;  @\label{clear}@
	if(v == V) return null;  @\label{still_current}@
	VersionStatus s = S[v.index];	@\label{read_d}@
	if (s.v != v) return null  @\label{already_collected}@
	if (s.status == usable) {
		if(!CAS(S[v.index],s,<s.v,pending>)){
			 return null;}  @\label{compete_to_help}@
		for(int i = 0; i @<@ P; i++) @\label{help_loop}@ {
			Announcement a = A[i];
			if(a == <v, true>) {
				CAS(A[i], a, <v, false>);} @\label{relhelp}@ }
		s = <s.v, frozen>;
		S[v.index] = s;	} @\label{h_two}@
	if (s.status == frozen) {  @\label{check_h_two}@
		for(int i = 0; i @<@ P; i++) 
			if(A[i] == <v, false>) {
				return null;}  @\label{relcheck}@
		if (CAS(S[v.index],s,<empty,usable>)) { @\label{lin_rel}@
			return [v];	}	
		else return null;	}
	return null; @\label{h_one}@ 
}
\end{lstlisting}
\end{minipage}
\label{alg:waitfree}
\end{figure*}

\renewcommand{\figurename}{Figure}

%\guy{the following paragraph seems somewhat redundant.}
%The version maintenance problem is applicable to multi-versioned transactional systems with safe garbage collection. The \acquire{} operation can be used as an entry section before accessing a version, and the \release{} operation can be used as an exit section, which assigns a designated processor to garbage collect the version if it is the last processor to use it.
%It is therefore important that these operations be efficient, and in particular, that they not be the bottleneck in the execution of a transaction. Furthermore, in many applications, response time from when a request is made and when access is given should be optimized.

%\Naama{Transition to out-of-appendix needs work here.}

%Note that from a quick inspection of the code, it is easy to see that \acquire{} takes $O(1)$ steps, while both \release{} and \set{} each take $O(P)$ steps, where $P$ is the number of processors in the system.

\iffull
In Section \ref{sec:waitfree-proof} of the Appendix,
we show that our version maintenance algorithm is correct and efficient; we show that the algorithm is linearizable, and the amortized time per operation is low, even when accounting for contention.
In particular, we show the following two theorems.
%\else
\subsection{Correctness and Bounds of the Wait-free Algorithm}
In this section,
we show that our version maintenance algorithm is correct (linearizable) and efficient.
%In particular, even accounting for contention, we show that the \acquire{} operation takes $O(1)$ amortized time, and that the \release{} and \set{} operations take $O(P)$ amortized time, where $P$ is the number of processors in the system.
%In the rest of this section we show the proof for correctness, step complexity as well as amortized contention of our wait-free algorithm.
\fi
This algorithm can be shown to be correct (linearizable) and efficient. We summarize the results as follows:

\begin{theorem}[Correctness]
	\label{thm:waitfree-lin}
	Algorithm \ref{alg:waitfree} is a linearizable solution to the Version Maintenance Problem.
\end{theorem}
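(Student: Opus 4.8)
The plan is to reuse the two-step template from the lock-free proof (Theorem~\ref{thm:lockfree-lin}): assign a linearization point to every \op{}, and then verify that, once the history is sequentialized by these points, it obeys the two-part sequential specification---each \acquire{} returns the data pointer of the current version, and a \release[v] returns \true{} exactly when $v$ passes from live to not-live. Observations~\ref{lem:ss2} and~\ref{lem:ss1} were stated for \emph{any} solution to the problem, so I would reuse them unchanged; all of the real work lies in choosing the linearization points correctly and checking them against the helping structure of this particular algorithm.

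For the linearization points I would keep a \set[] at line~\ref{lin_set} (the write to \currentv{}). An \acquire($k$) always returns $D[A[k].v]$ for the version $u$ that is eventually committed into $A[k]$, and I would linearize it at a configuration inside its interval at which $u$ is current. A short case analysis of how $u$ gets installed into $A[k]$ makes this well defined: if $u$ arrives through the \acquire{}'s own committing CAS or through a \release[u]'s helping CAS (line~\ref{relhelp}), then the \acquire{} must itself have read $u$ from \currentv{} (line~\ref{line:wf_r1} or~\ref{line:wf_r2}) while $u$ was current, and I linearize it at that read; if $u$ arrives through a \set{}'s helping CAS (line~\ref{set_help}), I linearize it at that CAS, which lies in the \acquire{}'s interval and occurs while $u$ is current because the single-writer assumption keeps \currentv{} fixed throughout a \set{}. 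A supporting lemma analogous to Lemma~\ref{lem:lockfree-acquire} then shows $D[u.index]$ is not overwritten between the \set{} that created $u$ and this point, so the returned pointer is the one associated with the current version; the ``two \set{}s have occurred, hence three helping CASes committed $A[k]$'' argument is exactly what certifies that $A[k].v$ is a genuinely committed version in the $w \neq \currentv{}$ branch.

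For \release{} I would mirror Definition~\ref{def:lf_linpoint}: a \release[v] returning \true{} is linearized at its successful emptying CAS (line~\ref{lin_rel}), and one returning \false{} at the step witnessing $v$ still live. The content is the analog of Lemmas~\ref{lem:lockfree-1} and~\ref{lem:lockfree-2}, driven by the $S[v.index].h \in \{0,1,2\}$ state machine. The central invariant I would prove is that an \acquire{} commits $v$ only while $v$ is current or via the \emph{unique} \release[v] that wins the $0 \to 1$ CAS (line~\ref{compete_to_help}) and runs the helping loop, which finishes before $h$ is set to $2$. Given this, once $v$ is no longer current every committed occurrence of $v$ in $A$ is already present when $h$ becomes $2$, so the second loop (line~\ref{relcheck}) observes them; a \release[v] reaches the emptying CAS only after confirming that no $A[i] = \langle 0, v\rangle$, and the invariant forbids a fresh such entry from appearing afterwards, which yields $v$ not live after a \true{}-returning \release{}. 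Liveness just before the linearization point is Observation~\ref{lem:ss2}, and Observation~\ref{lem:ss1} lets me dispose of the \false{} cases by exhibiting a single later configuration at which $v$ is still live.

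The hard part will be establishing that invariant, because it must rule out the race in which a slow \acquire{} reads $v$ while it is current but announces $v$ only \emph{after} the helping loop has scanned the announcement array, and then commits $v$ after the $h = 2$ check. The resolution is a careful reading of the \acquire{} code: its own committing CASes fire only under the guard $v = \currentv{}$, so once $v \neq \currentv{}$ such a thread instead advances to a newer version $w$, and the only remaining way it can finish holding $v$ is for the single \release[v] helping loop to have committed it first---which by construction precedes $h = 2$. Pinning down that every commit of a non-current $v$ is attributable to that one loop, and that a \set{}'s helper always installs the current version rather than $v$, is the delicate step on which the whole \release{} argument rests.
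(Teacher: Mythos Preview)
Your proposal is correct and follows essentially the same route as the paper. Your linearization points coincide with the paper's Definition~\ref{def:linpoint} (the paper fixes the \false{}-returning \release{} at line~\ref{clear} rather than at ``the step witnessing $v$ still live,'' but your use of Observation~\ref{lem:ss1} to bridge back to line~\ref{clear} makes the two choices equivalent); your ``central invariant'' is the paper's Lemma~\ref{waitfree-h4} (no commit of $v$ once $S[v.index].h = 2$), and your case analysis of the race---announcement after the helping-loop scan forces the \acquire{}'s subsequent $v = \currentv{}$ guard to fail---is precisely the argument the paper gives there; the remaining pieces (your analog of Lemma~\ref{lem:lockfree-acquire}, the \true{}/\false{} \release{} lemmas) correspond to the paper's Lemmas~\ref{waitfree-h1}, \ref{q1}, \ref{r1}, \ref{r2}.
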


\begin{theorem}[Step bounds]
	\label{thm:waitfree-step}
	Each \acquire() operation requires at most $O(1)$ time and each \release() and \set() operation requires $O(P)$ time.
\end{theorem}

\begin{theorem}[Amortized Contention]
	\label{thm:contention}
	When concurrent \set[] operations are disallowed, each \acquire() operation experiences $O(1)$ amortized contention and each \release() and \set() operation experiences $O(P)$ amortized contention. Furthermore, no contention experienced by \acquire() is amortized to \release() or \set().
\end{theorem}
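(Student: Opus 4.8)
The plan is to bound contention location by location and then combine the bounds with a charging argument that respects the directional requirement in the statement. First I would enumerate the shared locations the algorithm touches---the current-version pointer \currentv{}, the announcement slots $A[k]$, the status slots $S[i]$, and the data slots $D[i]$---and record which operations modify each. Two of these are immediately harmless. The data slots $D[i]$ are written only by a \set{} that has just selected an empty index $i$, so no live version uses slot $i$ at that moment and no concurrent \acquire{} or \release{} touches it; likewise a committed \acquire{} reads $D[v.index]$ only for a live version $v$, which \set{} never overwrites while $v$ is live, so $D$ contributes no contention. The status slot $S[i]$ that a \set{} initializes is similarly a fresh empty slot, so its initializing write collides with nothing. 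This reduces the problem to accounting for modifications of $A[\cdot]$, of $S[\cdot]$ (by \release{}), and of \currentv{} (by \set{}). Throughout I would lean on the single-writer assumption: \set{} operations are totally ordered and never concurrent with one another, so writes to \currentv{} never collide among themselves and each individual write to \currentv{} is concurrent with at most $P$ other operations.

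For the \acquire{} bound, the key point is that an \acquire($k$) modifies only its own slot $A[k]$, so the contention it experiences is confined to $A[k]$, and other processes modify $A[k]$ only through helping (a \set{} on line~\ref{set_help}, or a \release{} on line~\ref{relhelp}). I would prove the invariant that once $A[k]$ is committed---i.e.\ $A[k].\texttt{help}=0$---no helper ever writes it again: a \set{} only CASes slots whose help flag is up, and a \release[v]{} only CASes a slot equal to $\langle 1,v\rangle$. Since the help flag transitions from $1$ to $0$ monotonically within a single \acquire{} (the operation never re-raises help after committing), at most one \emph{successful} external modification of $A[k]$ can occur, and failed CASes are not modifications and so are not counted. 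Hence each \acquire{} experiences $O(1)$ contention, and this $O(1)$ is charged to the \acquire{} itself; none of it is offloaded onto the \set{} or \release{} that did the helping, which is exactly the directional guarantee the theorem demands.

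For \release{} and \set{}, the $O(P)$ bound comes from their width-$P$ helping/checking loops (lines~\ref{help_loop}--\ref{relcheck} and line~\ref{set_help}). I would bound the contention on each individual slot visited by these loops by $O(1)$---using the same committed-slot invariant together with the stated ``thwarted at most twice'' property, which guarantees a \set{}'s three helping CASes on any $A[j]$ suffice and that the owner $p_j$ interferes only $O(1)$ times---so that summing over $P$ slots yields $O(P)$. For \set{} I would additionally charge the contention caused by its single write to \currentv{} (concurrent with at most $P$ readers) to that \set{}, giving another $O(P)$. For \release{} I would note that the number of successful modifications to $S[v.index]$ is $O(1)$ (the CAS setting $h{=}1$, the write setting $h{=}2$, and the CAS emptying the slot each succeed at most once per version), so competing \release[v]{} operations contribute only $O(1)$ counted contention there, and the $O(P)$ term is dominated by the help loop.

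The hard part, and where I would spend the most care, is the amortized bookkeeping on the announcement array together with the directionality clause. It is not enough to bound contention per operation in isolation; I must set up a global charging scheme in which every counted contention event (a successful modifying response on a location during some operation's interval) is assigned to exactly one operation, and then verify simultaneously that (i)~each \acquire{} receives only $O(1)$, (ii)~each \release{} and \set{} receives only $O(P)$, and (iii)~no event experienced by an \acquire{} is ever assigned to a \set{} or \release{}. Establishing (iii) rests entirely on the committed-slot invariant above, so the delicate step is proving that invariant rigorously---handling the re-announcement race in \acquire{} (where the current version changes between lines~\ref{line:wf_r1} and~\ref{line:wf_r2}) and confirming that, across the whole window in which $A[k].\texttt{help}=1$, at most a constant number of helpers successfully write $A[k]$ before it commits.
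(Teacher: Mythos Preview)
Your treatment of \acquire{} is essentially right and matches the paper's: each \acquire($k$) touches only $A[k]$ and \currentv{}; the committed-slot invariant forces at most one successful external write to $A[k]$ per \acquire{}, the owner itself writes $O(1)$ times, and each read of \currentv{} collides with at most one \set{} by the single-writer rule. So the $O(1)$ bound for \acquire{} is worst-case and self-contained, and the directionality clause follows. The invariant you flag as ``delicate'' is in fact the easy part: once any helper's CAS on $A[k]$ succeeds, $A[k].\texttt{help}=0$ and all further helper CASes fail, so at most one external success occurs.

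The real gap is in your \release{}/\set{} argument. You claim the contention on each individual slot visited by the help/check loops is $O(1)$, citing the committed-slot invariant and the thwarted-twice fact. But those bound modifications to $A[j]$ \emph{per} \acquire($j$), not per visit by another operation: during a single slow read of $A[j]$ inside some \release{} or \set{}, process $p_j$ can complete arbitrarily many \acquire($j$)/\release($j$) pairs, each contributing $O(1)$ modifications, so worst-case contention on that one read is unbounded. The $O(P)$ bound for \release{}/\set{} on $A$ is genuinely amortized, and while you acknowledge that the amortized bookkeeping is ``the hard part,'' you never actually supply it---you have the difficulty located in the wrong place.

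The paper closes this gap with a global counting lemma (Lemma~\ref{lem:cas_ub}): the total number of CAS instructions on $A[k]$ over the entire execution is at most $8a_k$, where $a_k$ is the number of \acquire($k$) calls. Each such CAS causes at most $P$ contention. The amortization then pivots on the \acquire{}/\release{} alternation: for any process with at least two \acquire{}s one has $a_k \le 2b_k$, where $b_k$ is the number of \release($k$) calls, so the $O(a_kP)$ contention on $A[k]$ is charged to those $b_k$ \release{}s at $O(P)$ each. Processes with a single \acquire{} and no \release{} are handled as a separate case, contributing only a constant per process and absorbed into the $O(N)$ term rather than $O(MP)$. Your charging scheme needs this counting lemma together with the alternation argument (and the split) to establish item~(ii); the invariants you cite do not suffice on their own.
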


%Theorem \ref{thm:waitfree-step} is straight-forward as is indicated by the code.
%Due to lack of space, we show the proofs in the full version.
We show proofs in Appendix \ref{sec:waitfree-proof}.
We note that Theorem~\ref{thm:contention} shows a property that is non-trivial to be achieved in wait-free algorithms, even in the single-writer setting---regardless of the adversarial scheduler, processes do not often contend on the same operations.
Intuitively, our algorithm achieves this because of the version status: instead of allowing many releasing processes to traverse and modify the announcement array for every version, only one process per version (the one that changed the status from \emph{usable} to \emph{pending}) can do this at any given time. Furthermore, each slot in the announcement array can only have one version associated with it at any given time, meaning that only one releaser, one acquirer, and one setter can contend on any given slot.

\begin{maybeappendix}{waitfree-correctness}
\myparagraph{Correctness.} Our wait-free algorithm is linearizable. We first state the linearization points of the \op{}s, which can be used to sequentialize any history and lead to an execution that is consistent with the sequential specification of the \VerMain{} problem.
%We then show that if we sequentialize any given history according to our stated linearization points, then we obtain an execution that is consistent with the sequential specification of the version maintenance problem (the proof is presented in the full version of this paper).
\hide{To set the linearization points, we first propose the following lemma.

	\begin{lemma}
		\label{commit}
		Let $Q$ be an \acquire(k) \op{}. There exists exactly one successful CAS (can be called by either \acquire{}, \set{} or \release{}) during $Q$ that sets $A[k].help$ to $0$. This CAS occurs before $Q$ calls \getdata().
	\end{lemma}

	\begin{definition}\label{def:linpoint}
		For each \op{} $op$, its \emph{linearization point} is as follows:
		\begin{enumerate}
			\item If $op$ is a \set[], then it is linearized at line \ref{lin_set}. This is the line that updates \currentv{}.
			\item If $op$ is an \acquire[], there are two cases:
			(1) If the committing CAS is from a \set[], then $op$ is linearized at the committing CAS.
			(2) Otherwise, let $v$ be the version committed by the committing CAS of $op$. $op$ is linearized at the instruction that it reads from \currentv{} (either on line \ref{line:wf_r1} or \ref{line:wf_r2}).
			\item If $op$ is a \release{}, then there are two cases:
			(1) If $op$ completes and returns \true, then it is linearized at its final instruction.
			(2) Otherwise, it is linearized when it performs line \ref{clear}.
		\end{enumerate}
	\end{definition}
Next, we show that these linearization points yield an execution that is consistent with the sequential specification of the version maintenance problem, which effectively proves Theorem \ref{thm:waitfree-lin}.
}	
	We begin the correctness proof with a few observations and lemmas that will help us understand the general flow of the algorithm.
	After presenting these invariants, we state the linearization points of the \op{}s. We then proceed to show that, if we sequentialize any given history according to our stated linearization points, then we obtain an execution that is consistent with the sequential specification of the version maintenance problem.
	
	First, 	note that the timestamp mechanism works ``as expected'': The timestamp in \currentv{} is always increasing. Furthermore, this is true for every slot $i$ of the version status and announcement arrays in isolation. That is, for all $i$, the timestamps in $S[i].v$ and $A[i].v$ are monotonically increasing.
	
	We say that a \set[] operation is \emph{successful} if it returns \true. Otherwise we say it was \emph{unsuccessful}. We now show a simple but useful lemma about the \set[] method.
	
	\begin{lemma}\label{lem:no-concur-set}
		No two successful \set[] operations are concurrent.
	\end{lemma}

	\begin{proof}
		Let $S$ be a successful \set[] operation executed by process $p$. Assume by contradiction that there is some successful \set[] operation $S'$ concurrent with $S$. One of these \set[] operations must have executed line~\ref{lin_set} first. Assume without loss of generality that  Note that $s'$ executed line~\ref{lin_set} first.
		
		For a \set[] operation to be successful, it must execute a successful CAS in line~\ref{lin_set}. That is, at the time of its CAS in line~\ref{lin_set}, the current version \currentv{} is equal to \texttt{oldVer}. Note that \texttt{oldVer} is the version that $p$ acquired in its last \acquire[] operation. Therefore, for $S$ to be successful, \currentv{} must not have changed since $p$ acquired its version before $S$ was invoked. Therefore, $S'$ must have executed line~\ref{lin_set} before $S$ was invoked.
		Note that every successful \set[] operation changes the current version in line~\ref{lin_set}, and terminates immediately afterwards without making any more changes to shared memory. So, $S'$ was not concurrent with $S$, leading to a contradiction.
	\end{proof}
		
	Note also that for any slot $k$ in the announcement array, $A[k] = \langle 1, *\rangle$ only if process $k$ is currently executing an \acquire[]($k$) \op. This is easy to see, since the help flag never gets set to $1$ from any other functions, and a process $k$ only ever accesses its own slot when executing \acquire[].
%We introduce some useful terminology for discussing the state of a slot in the announcement array: if $A[k] = \langle 1, \text{empty}\rangle$, we say that the \acquire(k) method is \emph{requesting help}, if $A[k] = \langle 1, v\rangle$, we say that  \acquire($k$) method \emph{has announced $v$}, and if $A[k] = \langle 0, v\rangle$ we say that the \acquire($k$) method \emph{has committed $v$}.
	We now show that every completed call to \acquire[] commits exactly one version.
	
	\begin{lemma}
		\label{commit}
		Let $Q$ be an \acquire(k) \op{}. There exists exactly one successful CAS (can be called by either \acquire{}, \set{} or \release{}) during $Q$ that sets $A[k].help$ to $0$. This CAS occurs before $Q$ calls \getdata().
	\end{lemma}
	
	\begin{proof}
		We first prove that such a CAS exists, then we prove that it is unique. If a CAS in $Q$ fails, then we are done because that CAS must have been interrupted by a successful helping CAS from a \release() or a \set() operation which sets $A[k].help$ to 0. Now suppose all CAS \op{}s in $Q$ succeed. If $Q$ performs a CAS that sets $A[k].help$ to 0, then we are done.
		%So suppose none of the CAS \op{}s in $Q$ set $A[k].help$ to 0. In this case,
		Otherwise, $Q$ must have read \currentv{} twice and found that the version it read was out of date both times. This means that it must have been interrupted by line \ref{lin_set} of two different \set() operations (since \set[]s are the only way that versions can change). By Lemma~\ref{lem:no-concur-set}, between the two executions of line \ref{lin_set}, a \set() method tries to help operation $Q$ up to 3 times by performing a CAS onto $A[k]$. At most 2 of these helping CAS \op{}s can fail due to a CAS from $Q$. So assuming that there is no interference from \release() operations, one of the 3 helping CAS \op{}s is guaranteed to succeed and set $A[k]$ to 0. If there is a successful CAS onto $A[k]$ by a \release() operation in this interval, then $A[k].help$ will be set to 0 by CAS \op{} from \release() instead. Therefore, in every possible case, there exists a successful CAS \op{} $O$ during $Q$ that sets $A[k].help$ to 0. Note that this CAS always happens before $Q$ calls \getdata().

		Since $A[k].help$ changes from 0 to 1 only on line \ref{request_help} of \acquire(k), if $A[k].help$ is set to 0 during $Q$, then it remains unset until the end of $Q$. While $A[k].help = 0$, there cannot be a successful CAS \op{} on $A[k]$. Therefore there exists a unique CAS during $Q$ that sets $A[k].help$ to $0$.
	\end{proof}

	%Invariants about the global variables. (1) The timestamp in $V$ is always increasing. (2) For all $i$, the timestamp in $D[i].v$ is monotonically increasing and for a particular version $v$, $D[i].h$ is always increasing.
	%(3) $A[k] = \langle 1, \text{empty}\rangle$ means that the \acquire(k) method is requesting help, $A[k] = \langle 1, v\rangle$ means that the \acquire(k) method announced $v$, and $A[k] = \langle 0, v\rangle$ means that the \acquire(k) method has committed to $v$. \Naama{I wouldn't call the third point here an invariant...}
	%\Naama{This part of the invariants is different from the rest, since $h$ is not really a timestamp, and i think it is more useful to state it later on, when dealing with the release method's behavior. for a particular version $v$, $D[i].h$ is always increasing.}
		
	We refer to the CAS that commits a version for an \acquire[] \op{} $Q$ the \emph{committing CAS} for $Q$. Now that we know this fact about the \acquire[] \op{s}, we can state the linearization points of each \op.
		
	\begin{definition}\label{def:linpoint}
		For each \op{} $op$, its \emph{linearization point} is as follows:
		\begin{itemize}
			\item If $op$ is a successful \set[], then it is linearized at line \ref{lin_set}. This is the line that updates \currentv{}.
			\item If $op$ is an unsuccessful \set[], it linearizes at its return.
			\item If $op$ is an \acquire[], there are two cases:
			(1) If the committing CAS is from a \set[], then $op$ is linearized at line~\ref{setHelpCheck} of the helping \set[].
			(2) Otherwise, let $v$ be the version committed by the committing CAS of $op$. $op$ is linearized at the instruction that it reads $v$ from \currentv{} (either on line \ref{line:wf_r1} or \ref{line:wf_r2}).
			\item If $op$ is a \release{}, then there are two cases:
			(1) If $op$ completes and returns \true, then it is linearized at its final instruction.
			(2) Otherwise, it is linearized when it performs line \ref{clear}.
		\end{itemize}
	\end{definition}
%	\paragraph{linearization points.} A \set{} \op{} is linearized on line \ref{lin_set}, where it updates \currentv.
%	An \acquire($k$) \op{} is linearized as follows:  (1) If the committing CAS is executed by a \set[], then the linearization point of the \acquire[] is the same as the linearization point of the \set[]. (2) Otherwise, let the version committed by the committing CAS of this \op{} be $v$. The \acquire[] is linearized at the time that it read $v$ from \currentv.
%	 linearized differently depending on which operation performs the CAS from Lemma \ref{commit}. Let $Q$ be an \acquire($k$) operation. If the CAS from Lemma \ref{commit} is performed by a \set{} operation (on line \ref{set_help}), then $Q$ is linearized at this \set{} operation. If this CAS is performed by $Q$ or a helping \release{} operation, then check which version was stored in $A[k]$ immediately before the CAS and linearize $Q$ at the line where it reads this version from \currentv. In this case, $Q$ is either linearized at line \ref{line:wf_r1} or \ref{line:wf_r2}. Finally, \release{} operations that return \true are linearized at their final step and all other \release{} operations (including incomplete ones) are linearized when they perform line \ref{clear}.
%	
%	\HAO{* indicates that more detailed justification might be needed here}

	We now prove that given any history $H$, sequentializing $H$ according to the linearization points yields a history that follows the sequential specification of the version maintenance problem. The sequential specification is outlined in Appendix \ref{sec:vm-prop}.
	The first step is to show that \acquire[] \op{s} behave as specified. That is, if both \set[]s and \acquire[]s are linearized as in Definition \ref{def:linpoint}, then each \acquire[] always returns data pointer associated with the current version. To prove this, we first prove the following 2 lemmas which help us argue that $D[v.index]$ stores the data pointer associated with $v$ as long as $v$ is committed for some process. In the proof of both lemmas, we repeatedly use the fact that for a fixed value of $S[i].v$, $S[i].h$ is strictly increasing (it starts at $0$ and goes up to $2$).
%\Naama{We use \release{}(v) in many places since it's easier to discuss. Should this be in the original specification?}

	\begin{lemma}
		\label{waitfree-h4}
		If $\langle 0, v\rangle$ is written to some element $A[k]$ of $A$ at time $t$, then one of the following conditions hold:
		\begin{enumerate}
			\item $S[v.index] = \langle v, 0\rangle$ at time $t$, or
			\item $S[v.index] = \langle v, 1\rangle$ at time $t$, or
			\item there is some other element $A[j]$ such that $j\neq k$ and $A[j] = \langle 0, v\rangle$ at time $t$.
		\end{enumerate}
	\end{lemma}

	\begin{proof}
		Let $k$ be any process id. We first show that $\langle 0, v\rangle$ is never written to $A[k]$ after $S[v.index]$ is set to $\langle v, 2\rangle$ unless there is already an element of the array $A$ whose value is $\langle 0, v\rangle$. Let $C$ be the earliest configuration in which $S[v.index] = \langle v, 2\rangle$. From the code, we can see that $C$ occurs immediately after line \ref{h_two} of some \release[v]() operation $R$. We first prove that no \release[v]() operation tries to set $A[k] = \langle 0, v\rangle $ after this configuration. Looking at the code, we see that $R$ does not write to $A[k]$ after configuration $C$. We also know that $R$ must have succeeded in the CAS on line \ref{compete_to_help}. We claim that no \release[v]() operation other than $R$ will execute the loop on line \ref{help_loop}. This is because at most one process can succeed in the CAS on line \ref{compete_to_help}, since $S[v.index]$ is never changed back to $\langle v, 0\rangle $ after it has been set to $\langle v, 1\rangle $. By the claim, we know that no \release[v]() operation tries to set $A[k]$ to $\langle 0, v\rangle $ after $C$.
		
		Note that a \set[] operation by process $p_j$ only helps \acquire[] operations by trying to commit $p_j$'s own acquired value \texttt{oldVer}. At the time $p_j$ is executing the \set[] operation, it has already finished acquiring, but not yet started releasing \texttt{oldVer}. Therefore, $A[j] =\langle 0, v\rangle$ at the time $p_j$ helps $p_k$'s acquire.
		%No \set() operation tries to set $A[k]$ to $\langle 0, v\rangle $ after $C$ because \set() operations always set $A[k].v$ to the current version and $v$ is no longer current at $C$ (Since $v \neq \currentv{}$ on line \ref{still_current} of $R$).
		
		Finally, suppose for contradiction that some \acquire[v](k) operation $Q$ changes $A[k]$ to $\langle 0, v\rangle $ after $C$. $Q$ first sets $A[k]$ to $\langle 1, v\rangle $. If this happens before $R$ executes the $k$th iteration of line \ref{relhelp}, then $A[k]$ equals $\langle 0, v\rangle $ before $C$, so it cannot change to $\langle 0, v\rangle $ after $C$. If this happens after $R$ executes the $k$th iteration of line \ref{relhelp}, then after $Q$ sets $A[k]$ to $\langle 1, v\rangle $, it will see that $v$ is no longer the current version and not attempt to set $A[k]$ to $\langle 0, v\rangle $, another contradiction. Therefore none of the operations will change the value of $A[k]$ to $\langle 0, v\rangle $ after configuration $C$.

		Now all that is left is to show that $\langle 0, v\rangle$ cannot be written to $A[k]$ before $S[v.index] = \langle v, 0\rangle$. Let $C$ be some configuration before $S[v.index] = \langle v, 0\rangle$. At this configuration, the version $v$ has not yet been written to $V$, so there's not way for it to be written to $A[k].v$.
	\end{proof}

	\begin{lemma}
		\label{waitfree-h1}
		If $A[k] = \langle 0, v\rangle$ for some $k$, $D[v.index]$ stores the data pointer associated with $v$.
	\end{lemma}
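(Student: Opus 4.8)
The plan is to show that the slot $D[v.index]$, once written with $v$'s data pointer by the \set{} that creates $v$, is never overwritten while any announcement slot still holds $\langle 0, v\rangle$. First I would record the two structural facts I need: $D[i]$ is written \emph{only} by \set{} operations (neither \acquire{} nor \release{} ever writes $D$), and the \set{} creating $v=\langle ts, i\rangle$ writes $v$'s data pointer into $D[i]$ on the same iteration in which it installs $S[i]=\langle v,0\rangle$, and before it publishes $v$ by writing \currentv{} on line \ref{lin_set}. Hence by the time any process can commit $\langle 0,v\rangle$ into $A$, the slot $D[v.index]$ already holds $v$'s data pointer, and it suffices to argue that $D[v.index]$ is not overwritten by a later \set{} before the configuration of interest.

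Next I would reduce ``$D[v.index]$ is overwritten'' to ``$S[v.index]$ is emptied.'' A later \set{} reuses slot $i=v.index$ only after reading $S[i]=\langle\text{empty},0\rangle$, and the only transition that installs $\langle\text{empty},0\rangle$ into $S[i]$ after $v$'s creation is the CAS on line \ref{lin_rel} inside some \release[v]{}, which changes $S[i]$ from $\langle v,2\rangle$ to $\langle\text{empty},0\rangle$ (using that, for a fixed version, $S[i].h$ only increases $0\to 1\to 2$, as noted above). So it is enough to show that $S[v.index]$ cannot have been emptied at any configuration where some $A[k]=\langle 0,v\rangle$.

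The crux, and the step I expect to be the main obstacle, is combining Lemma \ref{waitfree-h4} with the announcement scan that \release[v]{} performs before emptying $S[v.index]$. I would fix the first configuration $C^\star$ at which $S[v.index]=\langle v,2\rangle$. From $C^\star$ on, $S[v.index]$ is never again $\langle v,0\rangle$ or $\langle v,1\rangle$ (at $C^\star$ it is $\langle v,2\rangle$, and once it is reset to empty it never returns to version $v$ by the monotonicity of its timestamp noted above), so by Lemma \ref{waitfree-h4} no process can write $\langle 0,v\rangle$ to any slot of $A$ at or after $C^\star$. Any \release[v]{} reaching the CAS on line \ref{lin_rel} has first run the scan over $A$ in the $h=2$ branch (line \ref{check_h_two}) and found no entry equal to $\langle 0,v\rangle$; since this scan executes at or after $C^\star$ and the set of slots holding $\langle 0,v\rangle$ can only shrink thereafter, at the moment the CAS empties $S[v.index]$ no slot holds $\langle 0,v\rangle$, and none ever will again. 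The delicate point is that the scan inspects slots one at a time, so I must invoke Lemma \ref{waitfree-h4} to rule out a slot silently becoming $\langle 0,v\rangle$ after being checked but before the CAS; and I must phrase everything relative to $C^\star$ rather than to a particular \release{} so as to handle several \release[v]{} operations racing on the final CAS.

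Finally I would assemble the contrapositive: if $A[k]=\langle 0,v\rangle$ holds at some configuration $C$, then $S[v.index]$ has not been emptied before $C$ (otherwise the previous paragraph forbids $A[k]=\langle 0,v\rangle$ at $C$), so slot $v.index$ has not been reused, $D[v.index]$ has not been overwritten since $v$'s creation, and it therefore still stores the data pointer associated with $v$.
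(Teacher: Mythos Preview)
Your proposal is correct and follows essentially the same approach as the paper: both reduce the claim to showing that $S[v.index]$ is not emptied while some $A[k]=\langle 0,v\rangle$, and both rely on Lemma~\ref{waitfree-h4} together with the fact that the scan on line~\ref{relcheck} runs only after $S[v.index].h=2$ to conclude that the scan must see any committed $\langle 0,v\rangle$ and return \false{}. Your introduction of $C^\star$ and the phrasing via the contrapositive is slightly more elaborate than the paper's direct argument for the fixed slot $k$, but the content is the same.
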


	\begin{proof}
		Let $d$ be the data pointer associated with $v$. By Lemma \ref{waitfree-h4}, $S[v.index] = \langle v, 0\rangle$ or $S[v.index] = \langle v, 1\rangle$ when $\langle 0, v\rangle$ is written to $A[k]$. From the code, we can see that $D[v.index]$ is not written to unless $S[v.index]$ is empty. Therefore, $D[v.index] = d$ at the step that writes $\langle 0, v\rangle$ to $A[k]$. In order for a \release() operation to empty $S[v.index]$, it must pass the checks on line \ref{relcheck}. These checks can only occur after $S[v.index]$ is set to $\langle v, 2\rangle$, and only pass if no index of $A$ contains $\langle v, 0\rangle$.
		%which in turn, occurs after $\langle 0, v\rangle$ is written to $A[k]$.
		Therefore as long as $A[k] = \langle 0, v\rangle$, the checks on line \ref{relcheck} fail and $S[v.index]$ will not be emptied. As a result, $D[v.index] = d$ as long as $A[k] = \langle 0, v\rangle$.
	\end{proof}

%\Naama{A theorem statement is missing from the correctness proof... We should have a clear statement of what it is we are trying to prove before delving into the lemmas.
%I think the theorem statement should simply be `This algorithm is linearizable' (or something along those lines). Then, make a definition for the linearization point statements.}
%I put the theorem statement at the very bottom because its proof requires the lemmas. I didn't want to put the theorem statement up here and then state it again at the bottom. I'll emphasize that we are proving linearizability here.
		
	Next, we prove the first part of the sequential specification, which says that \acquire[] \op{s} return the correct value.

	\begin{lemma}
		\label{q1}
		An \acquire(k) \op{} $Q$ returns the pointer written by the last \set() \op{} linearized before it.
	\end{lemma}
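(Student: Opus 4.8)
The plan is to trace the value returned by $Q$ backwards through the algorithm. Every return path of \acquire(k) ends with \getdata(A[k].v), so $Q$ returns $D[A[k].v.index]$ evaluated with whatever $A[k]$ holds at that final read. By Lemma \ref{commit} there is a unique committing CAS during $Q$, and it sets $A[k]$ to $\langle 0, v^*\rangle$ for some version $v^*$ before $Q$ reaches \getdata(). After this CAS $A[k].\texttt{help}=0$, and (as argued inside the proof of Lemma \ref{commit}) no successful CAS can touch $A[k]$ while its help flag is $0$; the only plain writes to $A[k]$ are line \ref{request_help} of a subsequent \acquire(k) and line \ref{clear} of the matching \release(k), both of which occur only after $Q$ returns. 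Hence $A[k]=\langle 0, v^*\rangle$ persists until $Q$ returns, so $Q$ returns $D[v^*.index]$. By Lemma \ref{waitfree-h1}, since $A[k]=\langle 0, v^*\rangle$, the slot $D[v^*.index]$ still stores the data pointer associated with $v^*$, i.e.\ the pointer written by the unique \set() that created $v^*$. It therefore suffices to show that this \set() is exactly the last \set() linearized before $Q$.

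I would then split on the two cases of the \acquire{} linearization point in Definition \ref{def:linpoint}, using the basic fact that \currentv{} is written only at the linearization point of a \set() (line \ref{lin_set}), where it is set to that \set()'s new version; hence at any instant \currentv{} equals the version created by the last \set() linearized before that instant. In the case where the committing CAS is \emph{not} from a \set() (it is $Q$'s own CAS or a helping CAS from a \release()), $Q$ is linearized at the step where it reads $v^*$ from \currentv{} (line \ref{line:wf_r1} or \ref{line:wf_r2}). Here the committing CAS changes $A[k]$ from $\langle 1, v^*\rangle$ to $\langle 0, v^*\rangle$, so $Q$ must have announced $v^*$, and it only ever announces a version it has just read from \currentv{}. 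By the observation above, $v^*$ is the version created by the last \set() linearized before that read, which is precisely $Q$'s linearization point.

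In the remaining case the committing CAS comes from a \set() operation $S_0$ (line \ref{set_help}), and $Q$ is linearized at that CAS. A \set() helps by writing $\langle 0, \currentv{}\rangle$, and throughout $S_0$'s helping loop \currentv{} equals $S_0$'s own new version, since the single-writer constraint forbids any other \set() from running concurrently; hence $v^*$ is exactly the version created by $S_0$. Moreover the committing CAS occurs after $S_0$'s linearization point (line \ref{lin_set}) but still inside $S_0$'s execution, and by the same single-writer assumption no other \set() can be linearized in that interval, so $S_0$ is the last \set() linearized before $Q$'s linearization point. Combining the two cases, $v^*$ is created by the last \set() preceding $Q$'s linearization point, and the pointer $Q$ returns is exactly the one that \set() wrote.

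The main obstacle I anticipate is the bookkeeping across the two linearization cases: verifying that the committed version $v^*$ genuinely coincides with the version read from \currentv{} (case two) or created by the helping \set() (case one), and invoking the single-writer assumption correctly to rule out a \set() sneaking in between $S_0$'s update of \currentv{} and its committing CAS. The value-persistence step together with Lemmas \ref{commit} and \ref{waitfree-h1} should then dispatch the rest fairly mechanically.
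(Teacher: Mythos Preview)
Your proposal is correct and follows essentially the same approach as the paper's proof: identify the committed version $v^*$, argue via Lemma~\ref{commit} and the persistence of $A[k]$ that $Q$ calls \getdata() with $v^*$, invoke Lemma~\ref{waitfree-h1} to conclude the returned pointer is the one associated with $v^*$, and then check that $v^*$ equals \currentv{} at $Q$'s linearization point. The paper compresses your two-case analysis into the single sentence ``It's easy to see that $Q$ commits $v_0$ by looking at how \acquire{} operations are linearized,'' whereas you spell out separately the case where the committing CAS comes from a \set{} versus from $Q$ itself or a \release{}; your expansion is a legitimate unpacking of that claim and the single-writer argument you use in the \set{} case is exactly what is needed.
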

	
	\begin{proof}
		Note that at any point in an execution, the global variable \currentv{} stores the \vptr{} created by the last \set() operation linearized before this point. Furthermore, at any point in the execution, $D[\currentv{}.index]$ stores the pointer written by the last \set() \op{} linearized before this point. This is because $D[i]$ cannot change as long as $S[v.index]$ is not empty and $S[v.index].h$ cannot be empty as long as $i = \currentv{}.index$ (due to the check on line \ref{still_current}).

		%Note that at every configuration $C$, the global variable \currentv{} stores the version created by the last \set() operation linearized before $C$.
		Let $v_0$ be the value of \currentv{} at $Q$'s linearization point. It's easy to see that $Q$ commits $v_0$ by looking at how \acquire{} operations are linearized. By Lemma \ref{commit}, $Q$ calls \getdata() with the argument $v_0$. Since $A[k] = \langle 0, v_0\rangle$ during the call to \getdata(), by Lemma \ref{waitfree-h1}, we know that $Q$ returns the data pointer associated with $v_0$.

	\end{proof}

	%Next, we show some useful lemmas and use them to prove the second part of the sequential specification.

	%For the \release[] \op{}s to follow the sequential specification, exactly one \release[v] is allowed to return \true{} per version $v$, and that \release[v] must be the last one. The following simple lemma shows that the first part of this assertion holds in our algorithm.

	%Proving the second part (that a \release[v] returns \true{} if and only if $v$ is no longer alive after it) is a little bit more complicated.
	%Recall that by definition, a version $v$ is \emph{live} if either (1) it is the current version, or (2) there has been an \acquire[](k) that returned $v$, but there has been no corresponding \release[v](k).
	%In other words, $v$ is live
	%We rely on this definition to show the second part.
	We prove part (2) of the sequential specification in two parts: if a \release[v] \op{} returns \false{}, then $v$ is still live after that \op{}, and if the \release[v] returned \true{}, then $v$ stops being live immediately afterwards. Since every \release[v] \op{} must start when $v$ is live, this means that once it loses that status, no other \release[] \op{} can be executed on it. Therefore, the \release[v] that returns \true{} is the last one.
	To be able to formally show these properties, we begin with considering which configurations in the execution imply that $v$ is live.
	This lemma relies on the definition of a \emph{live} version (Definition \ref{def:live}).
	
	\begin{lemma}
		\label{waitfree-h5}
		A version $v$ is live at a configuration $C$ if at least one of the following holds:
		\begin{enumerate}
			\item It is the committed version for some process. That is, $A[k] = \langle 0, v\rangle $ for some $k$.
			\item Its timestamp is smaller than that of the current version, it is written in the version status array, and its collecting flag is not yet set to $2$.
			That is, \currentv$.timestamp \geq v.timestamp$, $S[v.index].v = v$, and $S[v.index].h < 2$.
		\end{enumerate}
	\end{lemma}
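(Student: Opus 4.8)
The plan is to prove that each of the two listed conditions, evaluated at configuration $C$, forces $v$ to satisfy Definition \ref{def:live}. Throughout I read ``live'' through the linearization of Definition \ref{def:linpoint}: $v$ is live at $C$ exactly when $v$ is the current version at $C$, or when some \acquire[v] operation is linearized before $C$ while its matching \release[v] is not yet linearized before $C$ (i.e.\ some processor \emph{holds} $v$). I would handle condition~2 in the non-current case by reducing it to condition~1, so the real content is an invariant saying that a status slot still carrying $v$ with collecting flag below $2$ always has a holder.

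For condition~1, I would argue directly. If $A[k]=\langle 0,v\rangle$ at $C$, the value $\langle 0,v\rangle$ was written by the committing CAS of an \acquire[v](k) operation $Q$ (Lemma \ref{commit}), and by Definition \ref{def:linpoint} $Q$ is linearized at or before that CAS, hence before $C$. The matching \release(k) must first overwrite $A[k]$ at line \ref{clear}; since $A[k]$ still equals $\langle 0,v\rangle$ at $C$, that write has not happened, so the \release(k) has not reached its linearization point (line \ref{clear} when it returns \false, or its later final step when it returns \true). Thus $p_k$ holds $v$ at $C$, and $v$ is live.

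For condition~2, I would first note that monotonicity of timestamps together with $S[v.index].v=v$ and \currentv$.timestamp\ge v.timestamp$ forces line \ref{lin_set} of \set[v] to have executed: before that line the slot cannot hold $v$ at a timestamp $\le$ \currentv, and a newer version occupying the slot would contradict $S[v.index].v=v$. Hence \set[v] is linearized before $C$ and, by the argument in Observation \ref{lem:ss1}, $v$ became live at that point. If $v$ is current at $C$ we are done; otherwise, by the contiguity of the live interval (Observation \ref{lem:ss1}) it suffices to show $v$ has not yet stopped being live, for which I claim a holder persists. The key observation is that $S[v.index].v=v$ means no \release[v] has executed the emptying CAS of line \ref{lin_rel}, so no \release[v] returning \true{} has been linearized (such an operation is linearized at exactly that CAS). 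I would then locate a holder by cases on the flag together with the single-writer trio constraint: a plain \set{} of a successor version never removes the last holder, because that successor's writer itself acquired $v$ and is ordered to release it only after its \set{}; and while $S[v.index].h<2$ the unique \release[v] that won the competition on line \ref{compete_to_help} has not reached its emptying CAS, so either it (if it will return \true, and is therefore not yet linearized) or some processor it is still responsible for helping continues to hold $v$. Since $S[v.index].v=v$ rules out any completed emptying, this finite chain must terminate at an outstanding holder, giving $A[j]=\langle 0,v\rangle$ at $C$ for some $j$; condition~1 then applies and $v$ is live.

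The main obstacle is precisely this holder-existence argument, and the delicate role of the hypothesis $h<2$. The value $h=2$ must be excluded because it marks the transient window in which the finalizing \release[v] has finished helping (line \ref{h_two}) and is running the final announcement scan at line \ref{relcheck}: during this window $S[v.index].v=v$ can still hold while $v$ has already become non-live, so the invariant genuinely fails there. Making the $h=0$ and $h=1$ cases airtight requires three ingredients rather than any single calculation: (a) the single-writer guarantee that each superseded version was acquired by the processor that set its successor, so no \set{} removes the last holder; (b) uniqueness of the competition winner on line \ref{compete_to_help} (the flag never returns to $0$ once raised), so that ``the finalizing release'' is well defined; and (c) a short termination argument that the chain of \false{}-returning releases discovered through line \ref{relcheck} ends at a processor whose \release[v] is not yet linearized, rather than at an emptying of $S[v.index]$, which $S[v.index].v=v$ forbids. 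Assembling these is where the work concentrates.
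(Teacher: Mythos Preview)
Your treatment of condition~1 is correct and essentially identical to the paper's.

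For condition~2 there is a genuine gap: the final reduction to condition~1 via $A[j]=\langle 0,v\rangle$ need not go through. Take the simplest case where the writer that set $v$'s successor is the \emph{only} process that ever acquired $v$, and let $C$ be the configuration immediately after that writer's \release[v] (call it $R_w$) succeeds on line~\ref{compete_to_help}, so $S[v.index].h=1$. At $C$ the writer has already cleared its announcement slot at line~\ref{clear}, and no other process has $A[j]=\langle 0,v\rangle$; yet $v$ is live, because $R_w$ will go on to return \true{} and is therefore linearized only at line~\ref{lin_rel}, after $C$. Your own disjunction ``either it (if it will return \true{}, and is therefore not yet linearized) or some processor it is still responsible for helping'' actually covers this case, but your next sentence collapses both branches into ``giving $A[j]=\langle 0,v\rangle$ at $C$'', which is false in the first branch. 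The repair is not to route through condition~1 there at all: the process running $R_w$ holds $v$ directly in the linearized sense even though its slot is empty. Separately, your argument presupposes a winner of line~\ref{compete_to_help} exists, which fails when $h=0$ at $C$; you flag this in your obstacle paragraph but the proof body never supplies the argument.

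The paper avoids both difficulties by arguing by contradiction rather than exhibiting a holder at $C$. Assuming $v$ is not live at $C$, every \release[v] is linearized before $C$; since $S[v.index].v=v$ at $C$, none of them can have returned \true. The paper then follows forward in time the \release[v] operation $R_1$ that (eventually) wins line~\ref{compete_to_help} and shows that $R_1$ must either see some $A[k]=\langle 0,v\rangle$ on line~\ref{relcheck} (so $v$ is live at a point after $C$, hence at $C$ by Observation~\ref{lem:ss1}---contradiction) or reach line~\ref{lin_rel}, causing some \release[v] to return \true{} (contradiction again). The contradiction framing lets the paper reason about $R_1$'s behavior \emph{beyond} $C$ without ever needing an announcement-array witness \emph{at} $C$, which is exactly where a direct holder-finding approach runs into trouble.
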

	
	\begin{proof}
		We consider each condition separately.
		\begin{enumerate}
			\item Assume the first condition holds. From the code, we can see that $A[k] = \langle 0, v\rangle $ at configuration $C$ only if an \acquire[v](k) operation has been linearized but line \ref{clear} of the corresponding \release[v](k) operation has not occurred yet. Since \release() operations are linearized at line \ref{clear} or later, the corresponding \release[v](k) operation must be linearized after $C$. Therefore $v$ is live at $C$ (condition (2) of the definition of \emph{live}).
			
			\item Assume the second condition holds. Let $S$ be the \set[v]() operation that made $v$ the current version. The configuration $C$ must occur after $S$ is linearized because we know that $V.timestamp \geq v.timestamp$ at $C$. Suppose for contradiction that $v$ is not live at $C$. This means there exists a \set() operation $S'$ linearized before $C$ that changed the current version from $v$ to something else. This also means that each \acquire[v]() operation linearized before $C$ has a corresponding \release[v]() operation linearized before $C$. Since no \acquire[v]() operation is linearized after $C$, no \release[v]() operation is linearized after $C$. If some \release[v]() operation returns 1, then $S[v.index] = \langle v, 2\rangle $ immediately before the linearization point of that operation. Since $C$ occurs after this linearization point, this would contradict the fact that $S[v.index].v = v$ and $S[v.index].h <  2$ at $C$. Thus to complete the proof, it suffices to show that some \release[v]() operation returns 1.
			
			Since we always release the previous version after setting the new version, there is a \release[v]() operation $R'$ that starts after $S'$ completes and that is linearized before $C$. There are two possibilities: either this \release[v]() operation $R'$ sets $S[v.index]$ to $\langle v, 1\rangle $ or it has already been done by some other \release[v]() operation. Let $R_1$ be the \release[v]() operation that sets $S[v.index]$ to $\langle v, 1\rangle $. $R_1$ sets $S[v.index]$ to $\langle v, 2\rangle $ on line \ref{h_two}, so it enters the if statement on line \ref{check_h_two}. Since $S[v.index].v = v$ and $S[v.index].h < 2$ at $C$, we know that $C$ occurs before $R_1$ executes line \ref{h_two}. If $R_1$ returns on line \ref{relcheck}, then it sees that $A[k] = \langle 0, v\rangle $, so $v$ is live at the end of $R_1$ by part (1). Therefore $C$ occurs between two live configurations, so by Observation \ref{lem:ss1}, $v$ is also live at $C$, a contradiction. Finally, if $R_1$ returns on line \ref{lin_rel}, then either $R'$ returns 1 or its CAS is interrupted by a \release[v]() operation that returns 1 and this completes the proof.	\qedhere
		\end{enumerate}		
		%Therefore in all cases, there exists a \release[v]() operation that returns 1, which contradicts the assumption that $v$ is live at $C$.
		% If $R'$ returns on line \ref{lin_rel}, then either $R'$ returns 1 or its CAS is interrupted by a \release[v] operation that returns 1. If $R'$ returns on line \ref{already_collected} then a \release[v] operation that returned 1 has already occurred.
		% If $R'$ returns on line \ref{relcheck} then $A[k] = \langle 0, v\rangle$ for some $k$	
		% If no \set[v] operation is linearized after $S$, then $v$ is live at $C$ and we are done. Otherwise, let $S'$ be the next \set() operation to be linearized. We prove this case in two parts. First we show that there exists some \op{}
	\end{proof}

	Now we are ready to prove part (2) of the sequential specifications.
	
	\begin{lemma}
		\label{r1}
		If a \release[v](k) operation $R$ returns \true{} then $v$ is live before the linearization point of $R$ and not live after.
	\end{lemma}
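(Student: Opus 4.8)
The plan is to prove the two halves of the statement separately. The ``live before'' half is immediate: since $R$ is a \release[v] operation, Observation \ref{lem:ss2} already gives that $v$ is live immediately before $R$'s linearization point, so nothing more is needed there. The real work is showing that $v$ is not live after $R$'s linearization point. Recall that $R$ returns \true{} only through the CAS on line \ref{lin_rel}, which changes $S[v.index]$ from $\langle v,2\rangle$ to $\langle \text{empty},0\rangle$; by Definition \ref{def:linpoint} this successful CAS is $R$'s linearization point. To show $v$ is not live afterwards, by Definition \ref{def:live} I must establish that (i) $v$ is not the current version after this point, and (ii) every \acquire[v] operation linearized before $R$ has a matching \release[v] linearized no later than $R$, and that no \acquire[v] is linearized after $R$.

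Part (i) is short. Since $R$ passed the check on line \ref{still_current}, we had $v \neq \currentv{}$ when $R$ read it; because the timestamp stored in \currentv{} is monotonically increasing and $v$ was current only earlier, $v$ can never equal \currentv{} again, so $v$ is not current after $R$.

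For part (ii) the key device is that $S[v.index]$ equals $\langle v,2\rangle$ continuously from the moment $R$ establishes this status (reading it at line \ref{read_d}, or writing it at line \ref{h_two}) up to $R$'s successful CAS---otherwise that CAS, which compares against $s=\langle v,2\rangle$, would fail and $R$ would not return \true{}. Let $C_2$ be the configuration at which $S[v.index]$ first becomes $\langle v,2\rangle$. By Lemma \ref{waitfree-h4}, no process can write $\langle 0,v\rangle$ into the announcement array while $S[v.index].h=2$; hence every commit of $v$ (every write of $\langle 0,v\rangle$ to some $A[k]$) occurs before $C_2$, and therefore before $R$'s scan and before $R$'s linearization point. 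In particular every \acquire[v] operation is linearized before $R$, so no \acquire[v] is linearized after $R$. Now consider any process $p_k$ that committed $v$. When $R$'s loop reads $A[k]$ on line \ref{relcheck} it must find $A[k]\neq\langle 0,v\rangle$, for otherwise $R$ would return \false{}. The only operation that can move $A[k]$ away from $\langle 0,v\rangle$ is the corresponding \release[v]($k$) clearing it on line \ref{clear} (a concurrent \set{} ignores slots whose help flag is $0$, and no other operation overwrites a committed slot). Using a uniqueness fact analogous to Lemma \ref{lem:lf_unique}---that $R$ is the only \release[v] that succeeds in the CAS on line \ref{lin_rel}, because $S[v.index]$ never returns to $\langle v,2\rangle$ once emptied---this release returns \false{} and is linearized exactly at line \ref{clear}, which precedes $R$'s scan of index $k$ and hence $R$'s linearization point (the sole exception being the \acquire[v] matched to $R$ itself, whose release is $R$, linearized at $R$'s point). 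Combining, after $R$ there is no acquired-but-unreleased copy of $v$, and together with part (i) this yields that $v$ is not live after $R$.

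The main obstacle I anticipate is the non-atomicity of $R$'s scan over the announcement array: $R$ reads the slots one at a time, so I must rule out a process committing $v$ at an index $R$ has already passed. This is exactly where Lemma \ref{waitfree-h4} together with $S[v.index]=\langle v,2\rangle$ holding throughout the scan does the work---once the collecting flag reaches $2$, no new commit of $v$ is possible, so a slot observed free of $\langle 0,v\rangle$ during the scan stays free of it up to $R$'s linearization point. The second delicate point is justifying that a slot found different from $\langle 0,v\rangle$ really implies the matching \release ran line \ref{clear} earlier, which requires the bookkeeping above about which operations may rewrite a committed announcement slot.
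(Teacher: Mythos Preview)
Your proof is correct and follows essentially the same approach as the paper's: Observation~\ref{lem:ss2} for the ``live before'' direction, the uniqueness of the successful CAS on line~\ref{lin_rel} to pin down where $R$ is linearized and to ensure all other \release[v] operations are linearized at line~\ref{clear}, Lemma~\ref{waitfree-h4} to bound all commits of $v$ before the scan, and then the scan check on line~\ref{relcheck} to conclude each committed slot was cleared by the corresponding \release[v]. Your treatment of the non-atomicity of the scan and the case distinction for $R$'s own matching \acquire{} are slightly more explicit than the paper's, but the logical structure is the same.
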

	
	\begin{proof}
		By Observation \ref{lem:ss2}, $v$ must be live before the linearization point of $R$. % by Observation \ref{lem:ss2}.

		Let $R$ be a \release[v](k) operation that returns \true{}. $R$ must be linearized at line \ref{lin_rel}, because $R$ is linearized at its last instruction and this is the only line in $R$ that could potentially return \true{}. In order to reach this line, the check on line \ref{still_current} must return \false, so we know that $v$ is not the current version at the linearization point of $R$. Let $C$ be the configuration immediately after the linearization point of $R$. To show that $v$ is not live at $C$, we just need to show that each \acquire[v]() operation linearized before $C$ has a corresponding \release[v]() operation that was also linearized before $C$.

		We first claim that there is at most one \release[v]() operation that returns \true. To see why this is true, recall from algorithm \ref{alg:waitfree} that a \release[v]() operation returns \true{} if and only if it successfully changes $S[v.index]$ from $\langle v, 2\rangle $ to empty on line \ref{lin_rel}. Once $S[v.index]$ is emptied, it will never again store the value $\langle v, 2\rangle $. Therefore at most one \release[v]() operation can succeed in the CAS on line \ref{lin_rel}, so at most of \release[v]() operation can return \true{}.
		
		Since $R$ succeeds in changing $S[v.index]$ from $\langle v, 2\rangle$ to empty, we know that $S[v.index] = \langle v, 2\rangle$ at line \ref{check_h_two} of $R$. By Lemma \ref{waitfree-h4}, no \acquire() operation can commit the version $v$ after this line. Therefore all \acquire[v]() operations linearized before $C$ have already committed the version $v$ before line \ref{check_h_two} of $R$. In order for $R$ to reach line \ref{lin_rel}, it must pass all of the checks on line \ref{relcheck}. This means that $\langle 0, v\rangle$ must have been cleared from $A[k]$ before the $k$th iteration of line \ref{relcheck} by $R$. Once an \acquire[v]($k$) operation sets $A[k] = \langle 0, v\rangle$, it can only be cleared by line \ref{clear} of the corresponding \release[v]($k$) operation. By the earlier claim, all \release[v]() operations other than $R$ are linearized on line \ref{clear}. Therefore each \acquire[v]() operation linearized before $C$ has a corresponding \release[v]() operation that was also linearized before $C$.

		Therefore $v$ is not live after $C$ and the lemma holds.
	\end{proof}
	
	\begin{lemma}
		\label{r2}
		If a \release[v](k) operation $R$ returns \false{} then $v$ is live after the linearization point of $R$.
	\end{lemma}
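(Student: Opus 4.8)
The plan is to follow the same skeleton as the lock-free analogue (Lemma \ref{lem:lockfree-2}). Since $R$ returns \false, its linearization point is line \ref{clear} by Definition \ref{def:linpoint}. Observation \ref{lem:ss2} gives that $v$ is live immediately before this step, and Observation \ref{lem:ss1} says liveness occupies a contiguous set of configurations; hence it suffices to exhibit a single configuration at or after line \ref{clear} at which $v$ is live. I would also record once a reusable ordering fact: if $R'$ is any \release[v] operation returning \true, then $R'$ is linearized strictly after $R$. Indeed, if $R'$ were linearized at or before line \ref{clear}, then by Lemma \ref{r1} $v$ would not be live after $R'$'s linearization point, so by Observation \ref{lem:ss1} $v$ would not be live immediately before line \ref{clear}, contradicting Observation \ref{lem:ss2}.

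Next I would split on the line at which $R$ returns \false{} and dispatch the routine cases with Lemma \ref{waitfree-h5}. If $R$ returns at line \ref{still_current}, then $v = \currentv$ just after line \ref{clear}, so $v$ is the current version and is live by definition. If $R$ returns at line \ref{relcheck}, it read $A[i] = \langle 0, v\rangle$ at a configuration after line \ref{clear}, so $v$ is live there by part (1) of Lemma \ref{waitfree-h5}. If $R$ returns at line \ref{h_one}, or fails its CAS on line \ref{compete_to_help} while $S[v.index] = \langle v, 1\rangle$, then $S[v.index].v = v$ with collecting flag below $2$ at a configuration after line \ref{clear} (and the timestamp of \currentv{} is at least $v$'s, since $v$ was once current), so $v$ is live there by part (2) of Lemma \ref{waitfree-h5}. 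The cases where the status slot no longer stores $v$ are handled by the ordering fact: if $R$ returns at line \ref{already_collected}, or fails its CAS on line \ref{lin_rel}, or fails on line \ref{compete_to_help} because $S[v.index]$ has been emptied or reused, then $S[v.index]$ has ceased to hold $\langle v, \cdot\rangle$ at a configuration after line \ref{clear}. Since the only way to remove $v$ from its slot is the emptying CAS on line \ref{lin_rel} of a \release[v] that returns \true, some such $R'$ exists; by the ordering fact $R'$ is linearized strictly after $R$, and by Observation \ref{lem:ss2} $v$ is live immediately before $R'$'s linearization point, so by contiguity $v$ is live immediately after line \ref{clear}.

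The hard part will be the one remaining case: $R$ fails its CAS on line \ref{compete_to_help} while $S[v.index] = \langle v, 2\rangle$ and no processor currently has $v$ committed. Here Lemma \ref{waitfree-h5} gives no certificate (part (2) needs collecting flag $< 2$) and no slot-emptying has yet occurred, so neither earlier technique applies directly. I would resolve it by examining the \release[v] operation $R''$ that actually drove the slot from $\langle v, 0\rangle$ (what $R$ read on line \ref{read_d}) up to $\langle v, 2\rangle$: $R''$ won the CAS on line \ref{compete_to_help}, executed the helping loop on line \ref{relhelp}, and set the flag to $2$ on line \ref{h_two}, all of which occur after $R$'s line \ref{read_d} and hence after line \ref{clear}. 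Assume for contradiction that $v$ is not live immediately after line \ref{clear}. Then $R''$'s helping loop commits no announcement of $v$ (a freshly written $A[i] = \langle 0, v\rangle$ would itself be a post-\ref{clear} live witness by part (1) of Lemma \ref{waitfree-h5}), so on line \ref{relcheck} $R''$ finds no committer and empties the slot on line \ref{lin_rel}, returning \true. Thus $R''$ is a true-returning \release[v], which by the ordering fact is linearized strictly after $R$; then Observation \ref{lem:ss2} places $v$ live just before $R''$'s linearization point, so by contiguity $v$ is live immediately after line \ref{clear}, contradicting the assumption.

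This last argument is the crux, and it is the only place where neither a direct liveness certificate from Lemma \ref{waitfree-h5} nor an already-completed emptying is available; it forces reasoning about the in-progress helper $R''$ and leans essentially on Lemma \ref{waitfree-h4} (no version may be committed once its status reaches flag $2$) together with the already-established Lemma \ref{r1}. The subtle point throughout is to insist only on liveness \emph{immediately} after line \ref{clear} rather than at the later configuration $R$ actually inspects, since $v$ may legitimately die between the two at the true-returning release.
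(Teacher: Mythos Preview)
Your overall structure matches the paper: reduce via Observations \ref{lem:ss2} and \ref{lem:ss1} to finding a post-line-\ref{clear} witness, isolate the same ``ordering fact'' about any \release[v] returning \true, and dispatch lines \ref{still_current}, \ref{already_collected}, \ref{relcheck}, \ref{lin_rel}, and \ref{h_one} exactly as the paper does.

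The substantive difference is your treatment of a failure on line \ref{compete_to_help}. You sub-case on the value of $S[v.index]$ \emph{at the moment the CAS fails}, which forces you into a ``hard case'' ($S[v.index]=\langle v,2\rangle$) where neither a live certificate nor an emptying has yet materialized, and you then chase the helper $R''$ forward. The paper sidesteps this entirely by choosing a different witness configuration: at line \ref{read_d}, which is already after line \ref{clear}, $R$ read $s$ with $s.v=v$ (it passed line \ref{already_collected}) and $s.h=0$ (that is why it entered the $s.h==0$ block). Hence Lemma \ref{waitfree-h5} part (2) applies directly at line \ref{read_d}, dispatching the entire line-\ref{compete_to_help} case in one stroke, regardless of what $S[v.index]$ later becomes. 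Your sub-case split and the $R''$ argument are simply unnecessary.

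Your $R''$ argument is essentially sound but has two small soft spots worth tightening. First, you argue only that $R''$'s helping CASes do not create a committed $\langle 0,v\rangle$; you should also note that the non-liveness assumption directly rules out any \emph{pre-existing} $A[i]=\langle 0,v\rangle$ at the time $R''$ runs line \ref{relcheck} (again by Lemma \ref{waitfree-h5} part (1)). Second, you assert that $R''$ itself empties the slot on line \ref{lin_rel}; in fact $R''$'s CAS there may lose to another \release[v], but then \emph{that} operation returns \true{} and your ordering fact applies to it instead. Neither gap is fatal, but both should be stated explicitly if you keep this route.
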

	
	\begin{proof}
		Recall that $R$ is linearized at line \ref{clear}. By Observation \ref{lem:ss2}, we know that $v$ is live just before line \ref{clear} of $R$. Therefore to show that $v$ is live immediately after this line, it suffices to show that $v$ is live at some configuration after this line (By Observation \ref{lem:ss1}).
		% Let $C$ be the configuration immediately after the linearization point of $R$. We just need to show that $v$ is live at configuration $C$.
		
		Suppose $R$ returns on line \ref{still_current}. Then $v$ is still the current version at line \ref{still_current} of $R$, which means it is live at line \ref{still_current}.
		%So $v$ is live between the start of $R$ and line \ref{still_current}. In particular, this means that $v$ is live after $R$ executes line \ref{clear}.
		
		Next we prove the following claim. If $R$ sees that a \release[v]() operation $R'$ returning 1 has already been linearized, then $v$ is live after the linearization point of $R$. By lemma \ref{r1}, $v$ is live before the linearization point of $R'$ and not live after. Since $v$ is live before the linearization point of $R$, by Observation \ref{lem:ss1}, we know that $R'$ is linearized after $R$, so $v$ is live after the linearization point of $R$.
		
		Suppose $R$ returns on line \ref{already_collected}. If the check on that line returns \true, then $S[v.index]$ is either empty or it has already been reused by a \set() operation for a newer version. In both cases, some \release[v] operation $R'$ has succeeded in its final CAS and returned 1. By the previous claim, $v$ is live after the linearization point of $R$.
		
		Suppose $R$ returns on line \ref{lin_rel}. Then the CAS \op{} on line $\ref{lin_rel}$ must have failed for $R$. That means that some \release[v] operation $R'$ has already succeeded in performing this CAS \op{}, so by the previous claim, $v$ is live after the linearization point of $R$.
		
		% By some lemma, $v$ is live between its creation and line \ref{clear} of $R$, so by lemma \ref{r1}, $R'$ is linearized after $R$. Therefore there is a release operation linearized after $R$, so by some lemma, $v$ is live after $R$.
		
		Suppose $R$ returns on line \ref{compete_to_help} or line \ref{h_one}. Then $S[v.index].v = v$ and $S[v.index].h < 2$ on line \ref{read_d} of $R$. Furthermore, $v$ is no longer the current version, which means that $V.timestap > v.timestamp$. So by lemma \ref{waitfree-h5}, $v$ is live when $R$ executes line \ref{read_d}.
		
		Suppose $R$ returns at the $i$th iteration of line \ref{relcheck}. Since $R$ sees that $A[k] = \langle 0, v\rangle $ on this line, by lemma \ref{waitfree-h5}, $v$ is live when $R$ executes this line.	
		% $R$ knows that an \acquire[v](i) operation has been linearized and a corresponding \release[v](i) has not been linearized. Therefore $v$ is still live after $R$ executes the $i$th iteration of line \ref{relcheck} and by some lemma, $v$ is live after $R$ executes line \ref{clear}.
	\end{proof}
	
	The last remaining part of the proof is to show that unsuccessful \set[] operations are correct; that is, we show that if \set[] operation by process $p$ returned \false{}, then either $p$'s last \acquire[] or this \set[] must have been concurrent with a successful \set[] operation. Note that this property is very similar to $1$-abortability~\cite{bendavid2016k}. In fact, our condition is stronger than $1$-abortability, since we only allow another \set[] operation to prevent a \set[] operation from succeeding.
	
	\begin{lemma}\label{lem:setAbort}
		If a \set[] operation $Q$ by process $p$ is unsuccessful, then $p$'s \acquire[]-\set[] pair must be concurrent with some successful \set[] operation.
	\end{lemma}

	\begin{proof}
		Let $Q$ be an unsuccessful \set[] operation by process $p$.
		Note that there are 3 places in which an unsuccessful \set[] may return: line~\ref{setAbort}, line~\ref{setHelpCheck}, and line~\ref{lin_set}. We consider each of these separately.
		\begin{itemize}
			\item If $Q$ returned the result of the CAS on line~\ref{lin_set} and this value was \false, there must have been a successful CAS on \currentv{} since $p$ obtained the expected value \texttt{OldVer}. That is, there must have been a successful \set[] that changed \currentv{} since $p$'s last \acquire[].
			\item If $Q$ returned on line ~\ref{setHelpCheck}, there must have been a successful \set[] that changed \currentv{} since $p$'s last \acquire[], exactly the same as the previous case.
			\item If $Q$ returned at line ~\ref{setAbort}, then it did not find an empty slot in the array $S$. Note that at any point in time, $S$ can only have at most $2P$ slots full; one slot for the acquired version of each process, and one slot for the new version that each process is trying to commit in its ongoing \set[] operation. Thus, during $Q$'s scan of the array $S$, earlier slots must have cleared and new slots must have been filled. Note that a new slot of the array $S$ can only be filled by a \set[] operation, in line~\ref{setFillS}. Furthermore, at most one slot is filled by each \set[] operation, and an unsuccessful \set[] operation clears its own slot before terminating. So, for $Q$ to observe all $3P+1$ slots full, there must have been at least $P+1$ new slots filled during its execution, meaning that there were at least $P+1$ \set[] operations that started after $Q$ started. Since there are only $P$ processes, there must have been at least one process that started two new \set[] operations during $Q$'s scan of the array. Furthermore, this process also had an ongoing \set[] operation when $Q$ started. Let this process be $p_j$. Consider $p_j$'s middle \set[] operation; that is, the \set[] by $p_j$ whose interval is completely contained within $Q$'s interval. Call this \set[] operation $Q'$. If $Q'$ succeeded, then we are done. Otherwise, $Q'$ must have aborted, but not on line~\ref{setAbort} (since we know that $Q'$ filled a slot of $S$). So, by the other two cases, either $Q'$ or $p_j$'s previous \acquire[] operation was concurrent with a successful \set[] operation. Since we know that $Q$ was concurrent with $p_j$'s previous \set[] operation, it must have also been concurrent with $p_j$'s last \acquire[] (by the correct order of operations on a VM object). Thus, $Q$ was concurrent with the successful \set[] that made $Q'$ abort. \qedhere
		\end{itemize}
	\end{proof}
	
	Together, Lemmas \ref{q1}, \ref{r1}, \ref{r2} and \ref{lem:setAbort}, and Definition \ref{def:linpoint} directly imply
	Theorem \ref{thm:waitfree-lin}.
\end{maybeappendix}

	%\Naama{Operation vs \op{}}
\begin{maybeappendix}{waitfree-time}
	\myparagraph{Time Complexity.}
	From a quick inspection of the code, it is easy to see that \acquire{} takes $O(1)$ machine operations, while both \release{} and \set{} each take $O(P)$ machine operations, where $P$ is the number of processes in the system. This proves Theorem \ref{thm:waitfree-step}.

	\myparagraph{Amortized Contention.}
	More interestingly, we now show that the machine \op{}s are not heavily contended in the single-writer case where only the writer can perform \set{}s.
	%That is, we adopt the model of contention presented in \cite{fich2005linear}.
	In particular, we show that each operation in our algorithm experiences low contention on average.
	Recall from Section \ref{sec:prelim} that the amount of contention experienced by a single \op{} $i$ in a history $H$ is defined as the number of \emph{responses} to modifying \op{}s on the same location that occur between $i$'s invocation and response in $H$.
	
	In order to easily discuss accesses to the Announcement array, we define a \release{} \op{}'s \emph{helping CAS \op{}s} as the CASes that it does in line \ref{relhelp}. Note that some \release{} \op{}s do not execute any helping CASes at all.
	We begin with a simple observation.
	
	\begin{observation}\label{obs:quick_release}
		For each version $v$ that is announced in the Announcement array, there is at most one \release[v] \op{} that executes helping CASes for that version.
	\end{observation}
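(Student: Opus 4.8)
The plan is to show that executing the helping CASes requires winning a single competition on the status array, and that this competition can be won at most once per version. First I would inspect the code of \release{} to confirm that the helping CASes on line \ref{relhelp} lie inside the loop on line \ref{help_loop}, which in turn is reached only when a \release[v] operation enters the block guarded by $s.h == 0$ and \emph{succeeds} in the CAS on line \ref{compete_to_help}. That CAS atomically changes $S[v.index]$ from $\langle v, 0\rangle$ to $\langle v, 1\rangle$. Hence it suffices to show that at most one \release[v] operation can succeed in this particular CAS.

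Next I would invoke the monotonicity property already noted for the algorithm: for a fixed value of $S[i].v$, the collecting flag $S[i].h$ is strictly increasing, starting at $0$ and rising to $2$. Consequently, once any operation sets $S[v.index]$ to $\langle v, 1\rangle$, the slot can never again hold $\langle v, 0\rangle$: while the stored version remains $v$ its flag only increases, and once $S[v.index]$ is emptied on line \ref{lin_rel} it can be reused only by a \set{} that installs a version with a strictly larger timestamp, which therefore differs from $v$. Thus the pair $\langle v, 0\rangle$ appears in $S[v.index]$ during a single contiguous stretch that ends at the first successful CAS of line \ref{compete_to_help}, so at most one such CAS succeeds.

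Combining the two facts, at most one \release[v] operation ever reaches the helping loop, which is exactly the claim. I do not expect a genuine obstacle here, since the argument essentially reuses the reasoning already established for Lemma \ref{waitfree-h4}. The only points requiring care are to confirm that the helping loop is unreachable without first winning the line \ref{compete_to_help} CAS, and that emptying and subsequent reuse of the slot cannot resurrect the pair $\langle v, 0\rangle$, which holds because any reuse carries a fresh, strictly larger timestamp.
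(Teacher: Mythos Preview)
Your proposal is correct and follows the same approach as the paper: both argue that the helping CASes are guarded by the CAS on line~\ref{compete_to_help}, which can succeed for at most one \release[v] since $S[v.index]$ never returns to $\langle v,0\rangle$. The paper's justification is a one-liner stating only that releasers compete to set $v.h$ to $1$ and only the winner proceeds; your additional argument that reuse of the slot cannot resurrect $\langle v,0\rangle$ (via the fresh-timestamp observation) is sound but more than the paper spells out for this observation.
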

	
	This is due to the fact that releasing processes of the same version $v$ compete to change $v.h$ to $1$ and only the process whose CAS succeeds goes on to execute any helping CASes.
	Furthermore, note that a process executing \acquire[] executes at most $3$ CAS \op s before returning. Two of these CAS \op{}s may announce a different version in the array, and the last CAS may commit the version.
	%It is also possible that after having announced these two versions, the processor receives help from a \set[] \op{}, which announces a third version into its slot.
	This leads to the following observation.
	
	\begin{observation}\label{obs:quick_acq}
		An \acquire[] \op{} announces at most $2$ versions in the Announcement array.
	\end{observation}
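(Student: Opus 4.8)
The plan is to prove Observation~\ref{obs:quick_acq} by a direct inspection of the \acquire{} code in Algorithm~\ref{alg:waitfree}, counting exactly the writes to $A[k]$ of the form $\langle 1, v\rangle$ for a genuine version $v$---which, by the terminology of Section~\ref{sec:waitfree}, is precisely what it means to \emph{announce} $v$. The first step is to observe that the only process that ever writes an announced value (help flag set, i.e. $\langle 1, *\rangle$) into slot $A[k]$ is process $p_k$ itself while running \acquire($k$). This holds because a \set{} operation only ever CASes $A[i]$ to $\langle 0, \currentv{}\rangle$ on line~\ref{set_help}, and a \release{} operation only ever CASes $A[i]$ to $\langle 0, v\rangle$ on line~\ref{relhelp}; in both cases the help flag written is $0$, so these are commits, never announcements. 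Consequently I only need to count the announcements issued by $p_k$'s own \acquire($k$) code.

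Next I would walk through the body of \acquire($k$). The initial assignment on line~\ref{request_help} sets $A[k] = \langle 1, \text{empty}\rangle$, which by definition is \emph{requesting help} rather than announcing a version. The only two lines that can place an actual version into $A[k]$ with the help flag still raised are the two announcing CASes: the CAS attempting $\langle 1, \text{empty}\rangle \to \langle 1, v\rangle$, where $v$ is read on line~\ref{line:wf_r1}, and the CAS attempting $\langle 1, v\rangle \to \langle 1, w\rangle$, where $w$ is read on line~\ref{line:wf_r2}. A successful first CAS announces $v$ and a successful second CAS announces $w$. Every remaining write by \acquire($k$) is a committing CAS of the form $\langle 1, \cdot\rangle \to \langle 0, \cdot\rangle$, which lowers the help flag and is therefore a commit, not an announcement. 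Hence \acquire($k$) announces at most the two versions $v$ and $w$, which proves the claim.

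Since the whole argument is a code walk, I do not expect a deep obstacle; the one point that needs care is ruling out that a concurrent \set{} or \release{} could slip an announced (help flag $1$) value into $A[k]$, which the first step handles by noting that those operations only ever write commit values. I would also remark that the bound of two is tight: along the branch where \currentv{} is observed to change between the reads on lines~\ref{line:wf_r1} and~\ref{line:wf_r2}, the code genuinely issues both announcing CASes.
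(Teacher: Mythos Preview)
Your proposal is correct and follows essentially the same approach as the paper: a direct inspection of the \acquire{} code counting the CASes that write $\langle 1, v\rangle$. The paper's justification is even terser---it simply notes, in the sentence preceding the observation, that an \acquire{} executes at most three CASes of which two may announce and one commits---so your added care in ruling out announcements by \set{} and \release{} and in separating the initial ``request help'' write from the two genuine announcing CASes is a welcome elaboration, but not a different argument.
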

	
	To show that our algorithm has low contention, the hardest part is showing that there is not too much contention on the announcement array $A$. At first glance, it looks like there might be a bad execution where half of the processes are running the helping portion of the \release() method and they contend at each elements of $A$. However in the next lemma, we take a step towards showing that this is not possible by proving that the number of CAS instructions on $A$ is at most 8 times the number of \acquire() operations.

	%We first prove the main technical lemma behind our amortization argument. This lemma says that although each \release() and \set() operation can perform $P$ CAS instructions on $A$ in the worst case, the total number of CAS instructions on $A$ is at most 8 times the number of \acquire() operations.

	\begin{lemma}\label{lem:cas_ub}
		If no two \set{}s are concurrent and $a$ is the number of \acquire($k$) operations by process $p_k$, then $A[k]$ experiences at most $8a$ CAS instructions.
	\end{lemma}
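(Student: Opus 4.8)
The plan is to partition the CAS instructions on $A[k]$ by the kind of operation that performs them---\acquire($k$), \set(), or \release()---and to bound each class separately by $O(a)$, with the three bounds summing to $8a$.

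First I would dispatch the two easy classes. By inspection of the code, a single \acquire($k$) operation performs at most three CAS on $A[k]$: the two possible announcing CAS on the lines following \ref{line:wf_r1} and \ref{line:wf_r2}, and one committing CAS. Hence all \acquire($k$) operations together contribute at most $3a$. For \release, note that a \release[v] operation touches $A[k]$ only through its helping loop on line \ref{relhelp}, only in the iteration $i=k$, and only when it reads $A[k]=\langle 1,v\rangle$; so each helping \release[v] contributes at most one CAS to $A[k]$. By Observation \ref{obs:quick_release} at most one \release[v] performs helping CASes for any fixed version $v$, and the only versions that can ever appear as $A[k]=\langle 1,v\rangle$ are those announced by \acquire($k$). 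By Observation \ref{obs:quick_acq} each \acquire($k$) announces at most two versions, so at most $2a$ distinct versions are ever announced in $A[k]$, giving a \release contribution of at most $2a$.

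The main obstacle is the \set contribution, since there can be arbitrarily many \set operations in a history. The key is that a \set writes $A[k]$ (the CAS on line \ref{set_help}) only in iterations where it reads $A[k].\texttt{help}=1$, and the helping flag is raised to $1$ only on line \ref{request_help}, once per \acquire($k$), and (by Lemma \ref{commit}) is lowered to $0$ exactly once per \acquire($k$) at its committing CAS. This carves time into $a$ disjoint \emph{help periods}, one per \acquire($k$), and every \set CAS on $A[k]$ must fall inside some period. I would then argue that within one period at most one \set performs any CAS on $A[k]$, doing at most three. The crucial observation is that, after the initial write that starts a period, the only writes to $A[k]$ that preserve $\texttt{help}=1$ are the at most two announcing CAS of the owning \acquire($k$) (Observation \ref{obs:quick_acq}); every other write to $A[k]$---a \set helping CAS, a \release helping CAS, or \acquire($k$)'s committing CAS---targets $\langle 0,\ast\rangle$ and thus ends the period. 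Since a \set runs its inner loop on $A[k]$ exactly three times and each failed CAS must be explained by an intervening $A[k]$ write, at most two of its CAS can fail due to announcements before a CAS succeeds or the flag is cleared by another helper; in either case $A[k].\texttt{help}=0$ by the time the \set finishes its three iterations. As \set operations are sequential (single writer) and the flag then stays $0$ for the rest of the period, no later \set can CAS on $A[k]$ in that period. Thus \set operations contribute at most three CAS per period, hence at most $3a$ in total.

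Summing the three bounds yields at most $3a+2a+3a = 8a$ CAS on $A[k]$, as claimed. The point I expect to require the most care is the help-preservation claim: I would verify it by checking every write site to $A[k]$ across the three methods and confirming that the \set CAS (line \ref{set_help}), the \release helping CAS (line \ref{relhelp}), and \acquire($k$)'s committing CAS all write values with $\texttt{help}=0$, so that the only $\texttt{help}$-preserving writes during a period are the owning \acquire($k$)'s own announcements, which are at most two.
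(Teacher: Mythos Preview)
Your decomposition into \acquire{}, \release{}, and \set{} contributions with bounds $3a$, $2a$, $3a$ is exactly the paper's approach, and your arguments for the \acquire{} and \release{} classes match the paper's almost verbatim.

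For the \set{} bound there is a small gap, which the paper's own proof shares. Your claim that ``within one period at most one \set{} performs any CAS on $A[k]$'' is not quite right. Suppose a \set{} $S$ reads $\texttt{help}=0$ in its first two iterations on $A[k]$ (because $Q$ has not yet executed line~\ref{request_help}), then reads $\texttt{help}=1$ in iteration~3 and performs a single CAS that fails because $Q$'s first announcing CAS lands between the read and the CAS. Now $S$ is finished with $A[k]$ but the flag is still~$1$; the next \set{} $S'$ can therefore also CAS within $Q$'s help period. Your pigeonhole argument (``at most two of its CAS can fail due to announcements'') implicitly assumes all three of $S$'s iterations see $\texttt{help}=1$, which need not hold when $Q$'s request arrives mid-loop. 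The paper's sentence ``each \acquire($k$) has at most one helping \set{} operation'' has the same imprecision.

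The conclusion that at most $3$ \set{} CASes are charged to each help period is nevertheless correct, and your help-preservation observation is exactly the right tool. Since \set{}s are sequential, the read--CAS intervals of all \set{} iterations on $A[k]$ are pairwise disjoint, so each failed \set{} CAS is witnessed by a distinct successful write to $A[k]$ in its interval; within one period those non-\set{} writes are at most the two announcing CASes of $Q$ plus the single committing CAS that ends the period, and a successful \set{} CAS is itself that committing CAS. Either way at most three \set{} CASes are charged to $Q$ in total, possibly spread over two consecutive \set{}s rather than one.
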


	\begin{proof}
		As discussed above, each \acquire[](k) \op{} can perform up to $3$ CAS \op{}s, all of them on $A[k]$. Thus, in total, there can be at most $3a$ CAS \op{s} on $A[k]$ from \acquire{} \op{s}.
		%From the code, we can see that $a$ \acquire($k$) operations perform at most $3a$ CAS operations on $A[k]$.

		Next we show that there are at most $3a$ CAS instructions on $A[k]$ from \set($k$) operations. A \set() operation tries to help process $p_k$ only if it sees that the flag $A[k].help$ is set. This flag must have been set by some \acquire($k$) operation and it will either be unset by the helping \set() operation or it will be unset by some other operation during the helping \set() operation. In either case, this \acquire($k$) operation receives help before the \set() returns so it will not be helped by any future \set() operation. Since \set{}s cannot be concurrent, each \acquire($k$) has at most one helping \set() operation. Each helping \set() operation performs at most $3$ CAS instructions on $A[k]$, so there are at most $3a$ CAS instructions on $A[k]$ from \set($k$) operations.

		Now we just need to show that there are at most $2a$ CAS instructions on $A[k]$ from \release() operations. A \release[v]() operation helps process $p_k$ only if it sees that $p_k$ has announced the version $v$. By Observation \ref{obs:quick_acq}, at most $2a$ different versions are announced to $A[k]$ and by Observation \ref{obs:quick_release}, for each version that is announced, at most one \release() operation tries to help process $p_k$. Therefore process $p_k$ is helped by at most $2a$ \release() operations. Each helping \release() operation performs a single CAS on $A[k]$, so there are at most $2a$ CAS instructions on $A[k]$ from \release() operations.
	\end{proof}

	Now that we have shown that each \acquire() operation leads to a constant number of CAS instructions on $A$, we can use the fact that \acquire() operations are always followed by \release() operations to argue that there are (approximately) a constant number of CAS instructions on $A$ for each \release() operation as well. Since each CAS operation causes at most $P$ contention, we can argue that the total amount of contention on $A$ is at most $O(P)$ times the number of \release() operations. The proof of Theorem \ref{thm:contention} formalizes this argument and fills in the other details.
%The proof of Theorem \ref{lem:contention} formalizes this argument and fills in the other details.

	\begin{proof}[Theorem \ref{thm:contention} Proof]
		Let $N$, $M$, and $L$ be the number of \acquire(), \release() and \set() operations, respectively. It suffices to show that the amount of contention experienced by all operations is $O(N+MP+LP)$ and that the amount of contention experienced by \acquire() operations is $O(N)$. We consider the amount of contention on the four global variables, $V$, $S$, $D$ and $A$, separately. Only \set() operations write to variables $V$ and $D$, and there can only be one \set() operation at a time, so each access to $V$ or $D$ experience constant contention. Each operation accesses $V$ and $D$ a constant number of times, so these two variables contributes $O(N+M+L)$ to the total contention and $O(N)$ to the contention experienced by \acquire() operations.

		Next, we consider the amount of contention on the array $S$. Each \release() and \set() performs a constant number of writes to the array $S$. Each write causes at most $P$ contention, so these writes add at most $O(MP + LP)$ to the overall contention. Note that read operations do not cause any contention. \acquire() operations never access $S$, so they experience no contention from $S$.

		Next, we show that $A$ contributes at most $O(N+MP)$ to the total contention. Let $S$ be the set of processes that perform a single \acquire() operation and let $T$ be the set of processes that perform more than one \acquire() operation. If $p_k \in S$, then by Lemma \ref{lem:cas_ub}, there are at most $8$ CAS instructions on $A[k]$, so the total amount of contention on $A[k]$ is at most $8^2$. Let $a_k$ be the number of \acquire() operations performed by process $p_k$. If $p_k \in T$, then by Lemma \ref{lem:cas_ub}, there are at most $8a_k$ CAS instructions on $A[k]$. Let $b_k$ be the number of \release() operations performed by process $p_k$. Since each \acquire($k$) operation is always followed by a \release($k$) and since operation $a_k > 1$, we know that $b_k \geq a_k/2$. Therefore there are at most $16b_k$ CAS instructions on $A[k]$. Each CAS instruction causes at most $P$ contention, so the total contention on $A[k]$ is $16b_kP$. Therefore the total contention over all of $A$ is at most:

		\begin{align*}
			\sum_{p_k \in S} 8^2 + \sum_{p_k \in T} 16b_kP &= 8^2|S| + 16(\sum_{p_k \in T}b_k)P \\
&\leq 8^2N + 16MP \in O(N+MP)
		\end{align*}

		Finally, all we need to show is that \acquire() operations experience a total of $O(N)$ contention from accessing $A$. Again let $a_k$ be the number of \acquire() operations performed by process $p_k$. By Lemma \ref{lem:cas_ub}, there are at most $8a_k$ CAS instructions on $A[k]$. There can only be a single \acquire($k$) operation at a time, so each CAS on $A[k]$ contributes at most one unit of contention to at most one \acquire($k$) operation. Therefore \acquire($k$) operations experience at most $8a_k$ contention from $A[k]$. Summing over all $k$, we see that \acquire() operations experience a total of $O(N)$ contention from accessing $A$.
%
		%Let $a_k$ be the number of \acquire() operations performed by process $p_k$. Let $b_k$ be the number of \release() operations performed by process $p_k$. If $p_k \in S$, the by
%
		%Let $a_k$ be the number of \acquire() operations performed by process $p_k$. Let $b_k$ be the number of \release() operations performed by process $p_k$. Since an \acquire($k$) operation is always followed by a \release($k$) operation, $a_k - 1 \leq b_k \leq a_k$. By Lemma \ref{lem:cas_ub}, there are at most $8a_k$ CAS instructions on $A[k]$. Since $a_k - 1 \leq b_k$, we have that $8a_k$
	\end{proof}

\end{maybeappendix}

\hide{
-------------------------------------
	\begin{lemma}\label{lem:const_announce}
		Each access (Read or CAS) to the Announcement array takes $O(P)$ worst case, $O(1)$ amortized contention work.
	\end{lemma}
	
	\begin{proof}
		Consider an execution history in which each \acquire[] \op{}s is followed by a \release[] \op{}. We later treat the case where processors may have an outstanding \acquire{} without a corresponding \release[] separately.
		
		Let the number of \acquire[] \op{}s in the history so far be $n$, and the number of \set[] \op{}s be $m$.
		%Note that this means that the number of \release[] \op{}s in the history is at most $n$ and at least $n-P$.
		Thus, in total, we have at most $2n+m$ operations so far in the history (accounting for the \release s as well).
		Then the total number of CAS \op{}s on the Announcement array so far in the execution is at most: $2n$ CASes due to \acquire[] \op s, at most $3n$ due to \release[] \op{}s (by Observations \ref{obs:quick_release} and \ref{obs:quick_acq}), and at most $\min (3mp, 3n)$ due to \set[] \op{}s, since each \set[] executes $3$ CASes per slot in which it finds the helping flag set to $1$.
		
		%Then on each slot $k$ of the Announcement array, there were at most $2n$ CASes on that slot due to \acquire[] \op s. there were at most $n$ helping CASes due to \release[] \op{}s, at most $3m$ CASes on that slot due to \set[] \op{}s, and at most $2n$ CASes on that location due to \acquire[] \op{}s.
		
		Thus, there were a total of at most $6n$ CAS \op{}s on the Announcement array over $n$ \acquire[]s. Therefore, since there is only one \acquire[] \op{} per slot of the array at any time, the amortized contention experienced by \acquire[] \op{}s on the Announcement array is at most $6$.
		
		Note that there could be many \release[] \op{}s accessing a single slot of the Announcement array concurrently, each experiencing up to $\min (6n, P)$ contention. Similarly, a single \set[] \op{} could experience this much contention in an access of the Announcement array. Thus, the total contention experienced by $n$ \release[] and $m$ \set[] \op{}s is at most $6nP$ due to accesses of the Announcement array. However, each \release[] or \set[] \op{} that executes any helping CASes also executes $O(P)$ read accesses to the array. Therefore, there are $O(nP)$ accesses of the Announcement array that together experience up to $O(nP)$ contention, amortizing to a constant per access.
		%Thus, on each announcement slot, there were a total of at most $3n + 3m$ CAS \op{}s over $2n + m$ high level \op{}s overall. Note that some of these \op{}s might not have contended at all, since they might not have been concurrent. Nevertheless, this accounting mechanism leaves us at an amortized contention of $3$ per \op{} on the Announcement array.	
		
		We now consider the case where the number of \release[] \op{}s is not the same as the \acquire[]s. Again, let $n$ be the number of \acquire[] \op{}s in the history. Note that this means that the number of \release[] \op{}s is at least $n-P$ and at most $n$, since each \acquire[] by process $k$ must be followed by a \release[](k) before $k$ does any more \acquire[] \op{}s.
		If $n \geq 2P$, then the amortization arguments above hold. Otherwise, this means that each process has executed up to $2$ \acquire[]s so far in the history. Since each process executes its \acquire[]s on a different slot in the array, each slot has had up to $2$ \acquire[] \op{}s accessing it. As we have seen before, each \acquire[] \op{} can trigger only a constant number of CASes on the array, all to the same slot ($2$ from the \acquire[] itself, $3$ from the \set[], and $3$ from the \release[]). Thus, each slot has only had a constant amount of contention on it so far in the history.
	\end{proof}
	
	\begin{theorem}
		Each \acquire[] takes at most $O(1)$ amortized contention work, and each \set[] and \release[] takes at most $O(P)$ amortized contention work.
	\end{theorem}
	
	\begin{proof}
		Each \acquire[] only executes a constant number of \op{}s, and only ever accesses the \currentv{} or the Announcement array. From Lemma \ref{lem:const_announce}, each access to the Announcement array contributes constant amortized contention. The \currentv{} is only ever changed by a \set[] \op{}, and may be read by multiple \acquire[]s. Since reading doesn't cause extra contention, the access itself only experiences constant contention as well. Thus the entire \acquire[] method executes with constant amortized contention.
		
		Each \release[] and \set[] method may access up to $P$ slots in the Announcement array a constant number of times. Each such access contributes constant contention (by Lemma \ref{lem:const_announce}), so in total we have $O(P)$ contention. The \release[] method also executes another constant number of CASes on the version that it is releasing. Each such CAS may experience up to $O(P)$ contention, since there are up to $P$ concurrent release \op{}s at any time. Thus, both the \set[] and the \release[] \op{}s experience $O(P)$ amortized contention.
	\end{proof}
}
%\end{maybeappendix}

%\ifappendix
%\begin{theorem}\label{thm:waitfree-time}
%	Each \acquire[] takes $O(1)$ steps in the worst case and $O(1)$ amortized contention, and each \set[] and \release[] takes $O(P)$ steps in the worst case and $O(P)$ amortized total contention.
%\end{theorem}
%\fi

    %\input{functional}
	
    %\input{model}
	
	%\input{design}
    \section{Garbage Collection}\label{sec:gc}

%\subsection{Garbage Collection}

%We now discuss how garbage collection is done in our transactional database.
%In our transactional database, each transaction of the user is wrapped with calls to a version maintenance object to acquire a version for access, and to release it at the end. A write transaction also calls \set[] in order to commit the new version. If a transaction's \release[v] operation returns \true, then it also calls \collect{} to garbage collect its version before terminating.

In this section, we show how to efficiently collect out-of-date tuples on functional data structures
in the context of transactions and the VM problem.  We first define the desired properties of GC on functional data structures. We then present the \collect{} algorithm for our transactions (Figure \ref{alg:framework}) and show that it is fast and correct.

%The correctness of a \collect{} \op{} is defined in Definition \ref{def:collect}. We start with some definitions.
%For a set of transactions $T$, let
%$R(T)$, or the \emph{reachable space} for $T$ in configuration $C$, be the set of tuples
%that are reachable in the memory graph from their corresponding
%version roots, plus the current version $c$, i.e. the \tuple{s} reachable from $\{ V(t) \vert t \in T \} \cup \{c\}$.% $\{V(t) : t \in T\}$.
%We say that a \tuple{}  $u$ \emph{belongs} to a version $v$ if $u$ is reachable %by traversing the memory graph starting at
%from $v$'s version root. Note that $u$ can belong to multiple versions. We define a precise and a safe GC as follows.
%
%\begin{definition}\label{def:precisegc}
%A garbage collection is \emph{precise} if
%the allocated space at any point in the user history is a subset of the
%reachable space $R(T)$ from the set of \activet{} transactions $T$.
%\end{definition}
%
%\begin{definition}\label{def:safegc}
%A garbage collection is \emph{safe} if the allocated space is always a superset of $R(T')$, where $T'$ is the set of transactions that are not yet in their \phasethree.
%\end{definition}
%
%Roughly speaking, precise GC means
%to free any out-of-date tuples in time, and safe GC means not to free any tuples that are currently used by a transaction. %Note that $R(T')$ always includes the \tuple{s} reachable from the current version.

Intuitively, a linearizable precise VM solution provides an interface for safe and precise garbage collection over \emph{versions}, since \release[v] returns true if and only if it is the last usage of $v$. However, the precision and safety on the granularity of \emph{tuples} relies on a ``correct'' \collect{} \op{}, which, intuitively, should free all \tuple{s} that are no longer reachable as soon as possible. We formally define the desired property of a \emph{correct} \collect{} \op{}.

\begin{definition}\label{def:collect}
	Let $u$ be a \tuple, and $t$ be any time during an execution.
	A \collect{} is \emph{correct} if the following conditions hold.
	\begin{itemize}[noitemsep,topsep=0pt]
		\item If for each version $v$ that $u$ belongs to, \collect(v) has terminated by time $t$, then $u$ has been freed by $t$.
		\item If there exists a version $v$ that $u$ belongs to for which \collect(v) has not been called by time $t$, then $u$ has not been freed by $t$.
	\end{itemize}
\end{definition}

%i.e., we show that
% and prove that this scheme guarantees safe and precise garbage collection. Furthermore, we show that our garbage collection mechanism is efficient; %even when accounting for contention \Naama{can we actually show this?},
%garbage collecting takes $O(S+1)$ steps, where $S$ is the number of \tuple{}s that were freed.
%We say that a tuple $u$ \emph{belongs to} version pointer $v$ if $u$ is reachable from $v$.

%Using path-copying to create new versions means that many versions might share subsets of the \tuple{s}. In order to allow collecting a single version later on,

\myparagraph{The \collect{} Algorithm.} We now present a \collect{} algorithm and show its correctness and efficiency.
Path-copying causes subsets of the \tuple{s} to be shared among versions.
To collect the correct tuples, we use \emph{reference counting} (RC)~\cite{Collins60,Jones11} for enabling safe garbage collection. Each object maintains a count of references to it, and when it reaches $0$, it is safe to collect.
Since we use a PLM, the memory graph is acyclic. This means that RC allows collecting everything~\cite{Jones11}.
In our model, we maintain reference counts for each tuple $x$, \texttt{x.ref}, which records the number of ``parents'' of a node $x$ in the memory graph. Accordingly, a \texttt{tuple}$()$ operation creating a tuple $x$ increments the reference counters of all children of $x$.
We note that \texttt{tuple} can be called only by the writers' user code when it copies a path.
The counts are incremented only by the writers, but can be decreased by any \release{} \op{}.
A newly-created tuple $u$ has counter $0$. %This gets incremented if $u$ becomes the child of another \tuple{} that the writer creates.
%In the \outputt{} operation, the writer increments the count of the designated root and calls \collect{} on all other \tuple{s} it created that do not have any parents in the memory graph.
%, or if the writer returns $u$ as its new version pointer (calling \texttt{output} on the tuple to be returned).
%It must increment the counts of all nodes that are not on a path that it copied, and set the counts of each copied node to $1$ \Naama{note that this means that the writer actually does have to access the entire database.}.
%Intuitively, each \tuple's count is its in-degree in the memory DAG.
Later, when a transaction (reader or writer) executes a \collect{} of a version starting from \tuple$(x)$, it first decrements the count of $x$. Only if the count of $x$ has reached zero, $x$ gets freed, and all children of $x$ are collected recursively.
If $x$'s counter is more than one, the \collect{} \op{} terminates since the counts of its descendants will not be decreased then.
%Right before exiting the user phase, the writer must designate a root for the new version and increment its counter, and then clears up the rest of the \tuple{s} with counter $0$ that it created.

Pseudocode for \texttt{nth}$()$, \texttt{tuple}$()$ for a PLM, and the \collect$()$ \op{} is given in Algorithm \ref{alg:collect}. We use an array of length $l$ in each tuple $x$ to store the $l$ elements in this tuple (\texttt{x.ch[]}).
\texttt{inc} and \texttt{dec} denote atomic increment and decrement operations. We leave this general on purpose. The simplest way of implementing the counters is via a fetch-and-add object.
However, we note that this could introduce unnecessary contention. To mitigate that effect, other options, like dynamic non-zero indicators \cite{acar2017contention}, can be used.

 The result of this section is summarized in Theorem~\ref{thm:collect-correct}.

\begin{theorem}%[Correctness and Efficiency of GC]
	\label{thm:collect-correct}
	Our \collect{} algorithm (Algorithm \ref{alg:collect}) is correct and takes $O(S+1)$ time where $S$ is the number of \tuple{}s that were freed.
	%Our garbage collecting mechanism is safe and precise, and takes $O(v)$ machine instructions to collect a version of size $v$.
\end{theorem}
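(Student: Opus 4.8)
The plan is to implement \collect{} by reference counting on the acyclic memory graph and to prove correctness through a single invariant that ties each \tuple{}'s stored count to the structural reasons it must remain alive. Concretely, I would maintain for every un-freed \tuple{} $u$ the invariant that its count $rc(u)$ equals the number of un-freed parent tuples that point to $u$, plus the number of versions $v$ such that $u$ is $v$'s version root and \collect$(v)$ has not yet begun (by path copying this second term is $0$ or $1$). Tuples are created only inside a \set{} (single writer), where each pointer written into a fresh tuple increments the count of its target and the new root additionally carries the ``$+1$'' attributed to the new version. A \collect$(v)$ performs exactly one decrement to $rc(\mathrm{root}(v))$; whenever a decrement drives a count to $0$, that tuple is freed and the decrement is propagated to each of its at most $l=O(1)$ children, recursively.

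For safety (the second bullet of Definition~\ref{def:collect}), suppose $u$ belongs to a version $v$ with \collect$(v)$ not yet called. Then there is a directed path $\mathrm{root}(v)=w_0\to w_1\to\cdots\to w_m=u$ in the memory graph. By the invariant $rc(\mathrm{root}(v))\ge 1$, since its version term is still present, so $\mathrm{root}(v)$ is un-freed; proceeding along the path, a child is decremented only when its parent is freed, so as long as $w_i$ is un-freed, $w_{i+1}$ retains a live parent and hence $rc(w_{i+1})\ge 1$. By induction $u$ is un-freed, as required.

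For precision (the first bullet), suppose every version $v$ to which $u$ belongs has had \collect$(v)$ terminate by $t$. I would use the structural fact that every version containing a parent of $u$ also contains $u$ (it is reachable through that parent), so once all versions containing $u$ are collected, every parent of $u$ is transitively scheduled to be freed. Then, ordering the reference-dead tuples by a topological order of the acyclic memory graph, an induction shows that each parent of $u$ is freed and each version term on $u$ itself is removed, driving $rc(u)$ to $0$ and freeing $u$ by the time the last such \collect{} terminates, hence by $t$. For efficiency I would charge: the initial decrement of $rc(\mathrm{root}(v))$ costs $O(1)$ (the ``$+1$'' term, covering $S=0$), and every subsequent step is performed either while freeing some tuple or while decrementing one of its $\le l$ children; charging each such step to the freed \tuple{} responsible and noting each freed tuple is charged $O(l)=O(1)$ times yields $O(S)$, for $O(S+1)$ total.

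The main obstacle I anticipate is concurrency: although there is a single writer, \collect{} may be invoked by several readers simultaneously, so the count updates must be atomic (fetch-and-decrement) with the guarantee that only the process driving a count to $0$ frees the tuple and recurses. The subtle race to rule out is a \set{} incrementing the count of a \tuple{} it is about to share into a new version at the same instant a concurrent \collect{} is about to free it. The key fact to pin down is that any \tuple{} a \set{} shares is reachable from the still-live current version at the moment of the increment, so its count stays at least $1$ and it cannot be freed out from under the writer; making this argument precise against the version-maintenance linearization (Definition~\ref{def:linpoint}) is where the real work lies.
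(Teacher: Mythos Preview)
Your approach is essentially the paper's: both prove safety (second bullet of Definition~\ref{def:collect}) by induction along a path from an uncollected version root to $u$, prove precision (first bullet) by structural/topological induction over the DAG of tuples all of whose containing versions have been collected, and prove the $O(S+1)$ bound by charging each recursive call to the freed tuple that spawned it. The paper phrases the bookkeeping as ``each \collect($u$) call corresponds to a unique increment of $u$'s reference counter'' rather than via your explicit invariant, and it handles the in-progress-writer race you flag through its \texttt{output} mechanism and a separate ``local tuples'' argument (tuples are created during the writer's user code, not inside \set{} itself), but these are the same ideas you identified.
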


%\myparagraph{Correctness and Efficiency of Our \collect{} Algorithm.}
We show the proof of Theorem \ref{thm:collect-correct} in Appendix \ref{sec:waitfree-correctness}.
%We show the proof of Theorem \ref{thm:collect-correct} in the full version of this paper \cite{singlewriterarxiv}.
Intuitively, this is because tuples have a constant number of pointers and we only
recursively collect
any of those pointers if we free the tuple (the count has gone to zero).   We can therefore charge
the cost of visiting the child against the freed parent.

\begin{maybeappendix}{collect-correctness}
We now show that the \collect{} algorithm is correct. First we prove that it satisfies the first part of Definition \ref{def:collect}.

\begin{lemma} \label{lem:collect-1}
Let $u$ be a shared \tuple{}. For any shared \tuple{} $w$, let $V_w$ be the set of versions that $w$ belongs to. If a \collect{} operation has terminated for each version in $V_u$, then $u$ has been freed.
\end{lemma}

\begin{proof}
Fix an execution history and a configuration $C$. Consider the set $G$ of all shared \tuple{}s $w$ such that for each version $v \in V_w$, a \collect(v){} operation has terminated. %for each version in $V_w$.
It suffices to show that for each \tuple{} in $G$, there is a \collect{} operation that frees the tuple and terminates before $C$.

First, we show that no local \tuple{s} can affect the \tuple{s} of $G$. To see this, fix a \tuple{} $u \in G$. We want to show that there cannot be any pointers to $u$ from local \tuple{s}, and thus that its reference count cannot be affected by local \tuple{}s. Assume by contradiction that there is a local \tuple{} $\ell$ that is pointing to $u$ in configuration $C$. Note that only write transactions ever create \tuple{s}, and that the writer cleans up local \tuple{s} in its \texttt{output} \op{}, and therefore never leaves any local \tuple{s} or effect on the reference counts of shared \tuple{s} after returning.
Therefore, $\ell$ must have been created by a write transaction $t$ that is currently running user code. For $t$ to be able to create a \tuple{} that points to $u$, there are two cases: (1) $u$ must be a part of the version that $t$ commits, or (2) $u$ must be reachable from the version that $t$ acquired.
Note that in the first case, $u$ is not a shared \tuple{} itself, since it has been created by a transaction that has not yet finished its user code.
For the second case, recall that for $u$ to be in $G$, all versions that $u$ belongs to must have been collected. However, $u$ belongs to $V(t)$, and since $t$ is running user code, $V(t)$ is live at $C$, and therefore cannot have been collected yet. This contradicts the definition of $G$.
Therefore, $\ell$ cannot exist.

Notice that $G$ forms a DAG. Furthermore, for each \tuple{} $w \in G$, $G$ contains every shared \tuple{} that points to $w$. This is because a \tuple{} belongs to all of the versions that its parent belongs to. Therefore we can proceed by structural induction on $G$.

For the base of the induction, we prove that each of the roots in $G$ has been freed by a completed \collect{} operation. Let $u$ be some root in $G$.
We just need to show that each increment of $u$'s reference count has a completed \collect($u$) operation corresponding to it. We've already shown that there are no outstanding increments from local \tuple{s} affecting $u$.
%is a completed \collect($u$) operation for each local \tuple{} that points to $u$.
This also holds for increments by \outputt($u$) operations because all of the versions that $u$ belongs to have already been collected. Since $u$ is a root, its reference count is not incremented anywhere else, so one of the completed \collect($u$) operation sets the reference count of $u$ to $0$ and frees $u$.

Now we prove the inductive step by fixing some \tuple{} $u$ in $G$ and assuming that all of its parents have been freed by some completed \collect{} operation. Similar to the base case, we show that each increment of $u$'s reference count has a completed \collect($u$) operation corresponding to it.
%We've already shown that there is a completed \collect($u$) operation for each local \tuple{} that points to $u$.
All arguments from the base case hold here, and therefore we do not need to worry about increments from local \tuple{s} or \outputt{} \op{}s.
%This also holds for increments by \outputt($u$) operations because all of the versions that $u$ belongs to has already been collected.
So we just need to show that for each shared \tuple{} $w$ that point to $u$, there is also a completed \collect($u$) \op{}. By the inductive hypothesis, there is a completed \collect{} operation that frees $w$, and we can see from the code that this operation executes a \collect{} on $u$. Therefore one of the completed \collect($u$) operation sets the reference count of $u$ to $0$ and frees $u$.
By structural induction, each tuple in $G$ has been freed and this completes the proof.
\end{proof}

Next we prove that our collect algorithm satisfies the second part of Definition \ref{def:collect}.

\begin{lemma} \label{lem:collect-2}
Let $u$ be a shared \tuple{} and let $V_u$ be the set of versions that it belongs to. If a \collect{} operation has not started for some version $v \in V_u$, then $u$ has not been freed.
\end{lemma}

\begin{proof}
	%We first prove that the reference count of $u$ will not be incremented after it is set to $0$ by a \collect{} operation.

	Next we claim that each \collect($u$) operation corresponds to an unique increment of $u$'s reference counter.
	This can be seen by a close inspection of the code; let $c$ be a \collect($u$) call and consider two cases. Case (1): $c$ is not called from inside another \collect. That is, $u$ is the root of a version that is being collected. In that case, $c$ corresponds to the increment of $u.\mbox{\it ref}$ in the \outputt{} \op{} of the write that committed this version. Case (2): $c$ is called recursively from a \collect($u'$) \op{}. In this case, the $c$ corresponds to the increment of $u.\mbox{\it ref}$ during the creation of $u'$.

	Let $v \in V_u$ be the version for which no \collect($v$) call has been invoked.
	Since $u$ belongs to $v$, there must be a path from $v$'s version root $r$ to $u$ in the memory graph. We show by induction that no \tuple{} along that graph has been freed, thus implying that $u$ has not been freed.
	
	\textsc{Base:} Consider $v$'s root, $r$. $r.ref$ has been incremented by the \texttt{output} call of the writer that created the version $v$ and the \collect($v$) operation corresponding to this increment has not been invoked yet. Therefore the reference count of $r$ is non-zero, so it has not been freed.
	%We know that for every version for which $r$ is the root, $r.ref$ has been incremented once
	%in the version.
	%Furthermore, there can only be one call to \collect($r$) per version for which $r$ is the root, and every \
	%collect{} \op{} can only decrement a single \tuple{}'s reference count by $1$. Therefore,
	%there have been more increments than decrements on $r$, and so it is not freed.
	
	\textsc{Step:} Assume that the $i$th \tuple{}, $u_i$ in the path from $r$ to $u$ is not freed. We want to show that the $i+1$th \tuple{} on this path, $u_{i+1}$ has not been freed either.
	Consider the \texttt{\tuple} \op{} that made $u_i$ the parent of $u_{i+1}$ in the memory graph. That \op{} incremented $u_{i+1}$'s reference count by $1$ and the \collect($v$) operation corresponding to this increment has not been invoked yet because $u_i$ has not been freed.
	%Furthermore, since $u_i$ has not been freed, there has not been a \collect($u_{i+1}$) that was called from inside a \collect($u_i$).
	Thus, $u_{i+1}$'s reference count is greater than $0$, and therefore it cannot have been freed.
\end{proof}

Finally, we prove that our \collect{} algorithm is efficient.

\begin{lemma} \label{lem:collect-3}
A \collect{} operation takes $O(S+1)$ time where $S$ is the number of \tuple{}s that were freed by the operation.
\end{lemma}

\begin{proof}
Not counting the recursive calls, each \collect{} operation needs a constant time. Each time a \tuple{} is freed, a \collect{} operation is called on each of its $l$ children. Therefore, the total number of \collect{} operations spawned by a \collect{} operation $C$ is $l\times S$, where $S$ is the number of \tuple{}s that were freed by $C$. Since $l$ is constant, $C$ has $O(S+1)$ time complexity in total.
\end{proof}

Together, Lemmas \ref{lem:collect-1}, \ref{lem:collect-2} and \ref{lem:collect-3} imply Theorem \ref{thm:collect-correct}.
\end{maybeappendix}

\hide{
We note that the reference count of a tuple is never updated when the user code reads it, in this case, no contention is experienced, i.e.,

\begin{observation} \label{no-contention}
The \phasetwo{} of a read transaction experiences no contention.
\end{observation}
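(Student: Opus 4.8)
The plan is to unwind the definition of contention from Section~\ref{sec:prelim} and argue that, during \phasetwo{}, a read transaction never accesses a memory location that any concurrent \op{} modifies. Recall that the contention charged to an operation is the number of responses to \emph{modifying} operations (writes or CASes) on the \emph{same} location that fall within its interval; a location accessed only by reads, or never touched by a concurrent modifier, contributes nothing.

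First I would characterize exactly what \phasetwo{} of a read transaction touches. By the transaction framework (Figure~\ref{alg:framework}), the run phase consists solely of \texttt{user\_code}$(v)$ on the acquired version $v = V(t)$, and in the PLM model a read traverses the data structure using only \texttt{nth}$(\cdot,\cdot)$, i.e.\ it reads the children/value fields of tuples reachable from $v$. Two facts then do the work. First, a \tuple{}'s data fields are written exactly once, by the \texttt{tuple}$()$ call that creates it, and the PLM has no instruction that mutates a \tuple{} afterward; hence these fields are immutable and are never the target of any modifying operation. Second, the only mutable field of a \tuple{} is its reference count, and, as noted immediately above, the user code of a read never updates a reference count. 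Consequently every location read during \phasetwo{} is an immutable data field, so by the contention definition it contributes zero contention.

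The one point that needs care---and the main obstacle---is ruling out that a \tuple{} being read is concurrently \emph{freed}, since reclamation could in principle write to the \tuple{} (e.g.\ returning it to a free list) and thereby count as a modifying operation on a read location. I would close this gap using the safety of our garbage collector. Throughout \phasetwo{}, $v$ has been acquired by $t$ but not yet released, so $v$ is live (Definition~\ref{def:live}); every \tuple{} the user code reads is reachable from $v$ and therefore belongs to $v$. By the correctness of \collect{} (Definition~\ref{def:collect} and Theorem~\ref{thm:collect-correct}), such a \tuple{} cannot have been freed, because \collect$(v)$ is not invoked until the collect phase, strictly after the release in \phasethree{}. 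Hence no \tuple{} read in \phasetwo{} is freed during it, every accessed location is immutable and untouched by concurrent modifiers, and the run phase experiences no contention.
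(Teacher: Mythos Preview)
Your proof is correct and follows the same idea the paper sketches in the sentence preceding the observation: the \phasetwo{} of a reader touches only the immutable data fields of tuples via \texttt{nth}, never the reference count, so no modifying operation can overlap on those locations. Your argument is in fact more careful than the paper's one-line justification, since you explicitly close the gap of a tuple being freed mid-read by invoking the correctness of \collect{} (Definition~\ref{def:collect}/Theorem~\ref{thm:collect-correct}); the paper leaves this implicit.
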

}

%Tuple output(Tuple x){
%	inc(x.ref);
%	for (each @\tuple{}@ y s.t. y.ref=0) collect(y);
%    return x;}

\renewcommand{\figurename}{Algorithm}
\lstset{basicstyle=\footnotesize\ttfamily, tabsize=2, escapeinside={@}{@}}
\begin{figure}
\hspace{-.27in}
\begin{minipage}{.57\columnwidth}
\begin{lstlisting}[numbers=none,basicstyle=\footnotesize\ttfamily]
var = int or Tuple
struct Tuple {
	var* ch[l]; int ref;}

Tuple tuple(var* x) {
	Tuple y=alloc(Tuple);
	y.ref=0;
	for (int i=0;i@<@l;i++){
		y.ch[i]=x[i];
		if (x[i] is Tuple)
			inc(x[i].ref);}}
void nth(Tuple x, int i){
    return x.ch[i];}
\end{lstlisting}
\end{minipage}
\hspace{-.15in}
\begin{minipage}{.52\columnwidth}
\begin{lstlisting}[numbers=none,basicstyle=\footnotesize\ttfamily]
void collect(var x) {
	if (x is int)
		return;
	int c=dec(x.ref);
	int i;
	if (c@$\le $@1) {
		var* tmp[l];
		for (i=0;i@<@l;i++)
			tmp[i]=nth(x, i);
		free(x);
		for (i=0;i@<@l;i++)
			collect(tmp[i]);
	}}	
\end{lstlisting}
\end{minipage}
%\begin{lstlisting}[numbers=none]
%var = int or Tuple
%struct Tuple {var* children[l]; int ref;}
%
%Tuple tuple(var* x) {
%	Tuple y=allocate(Tuple);
%	y.ref=0;
%	for (int i=0; i@<@l; i++){
%		y.children[i]=x[i];
%		if (x[i] is Tuple) inc(x[i].ref);	} }
%
%void nth(Tuple x, int i){return x.children[i];}
%
%void collect(var x) {
%	if (x is int) return;
%	int c=dec(x.ref);
%	if (c@$\le $@1) {
%		var* tmp[l];
%		for (int i=0; i@<@l;i++) tmp[i]=nth(x, i);
%		free(x);
%		for (int i=0; i@<@l;i++) collect(tmp[i]); }}		
%\end{lstlisting}
\vspace{-.2in}
	\caption{\vspace{-.2in}\texttt{\tuple{}} and \collect{} algorithms}
%\vspace{-.05in}
\label{alg:collect}
\end{figure}

	\section{Implementing Transactions}
\label{sec:setup}

We now present our transaction system, and show that by plugging in
our \VerMain{} algorithm and underlying functional data structures
with correct GC, we can get an effective and efficient solution.  Read
and write transactions are implemented as shown in Figure
\ref{alg:framework}.  We assume all user code works in the functional
setting as described in Section~\ref{sec:prelim}.  The user code takes
in a pointer to a version root $v$, and may access (but not mutate)
any memory that is reachable from $v$.  The writer uses
path-copying, as standard in functional data structures, to construct
a new version.  It then can commit the version with the \set{}
operation.  Here we assume that the write transaction retries if the
\set{} fails (i.e., another concurrent write transaction has
succeeded).  Importantly the user code is unchanged from the
(functional) sequential code.  A read transaction is active until the last
instruction of its user code, and a write transaction is active until
the linearization point of its successful \set{} operation.
Transactions are live until the last instruction (after the \release{}
and GC).

\subsection{Correctness and Preciseness}

An instantiation of this framework consists of two important parts:
(1) a linearizable solution, $\vm$, to the version maintenance problem
defined in Section \ref{sec:vermain}, and (2) a correct \collect{}
function.  We show that combining them together yields strict
serializability, and safe and precise GC.

\begin{theorem}[Strictly Serializable] \label{thm:serial}
  Given a linearizable solution to the version maintenance problem,
  our transactional framework is strictly serializable.
\end{theorem}

For proving Theorem \ref{thm:serial}, we define a \emph{serialization point} for each transaction that is within its execution interval.

\begin{definition}\label{def:serpts}
	The serialization point, $s$, of a transaction $t$ is:
	\begin{itemize}
		\item If $t$ is a read transaction, then $s$ is at the linearization point of $t$'s call to $\vm.$\acquire[]().
		\item If $t$ is a write transaction, then $s$ is at the linearization point of $t$'s call to its successful $\vm.$\set[]().
	\end{itemize}
\end{definition}

%A full proof is given in Appendix~\ref{app:theorem1}.
A proof is given in \cite{singlewriterarxiv}.
Intuitively,
we show that if we sequentialize any given history according to these
serialization points, it is equivalent to some sequential
transactional history.

\begin{theorem}[Safe and Precice GC] \label{thm:safe-precise}
  Given a linearizable solution to the version maintenance problem and
  a correct \collect{} function, our garbage collection is safe and
  precise.
\end{theorem}

A full proof is given in Appendix~\ref{app:theorem2}.
%A full proof is given in \cite{singlewriterarxiv}.
Intuitively,
the garbage collection is safe because \collect($v$) is called only
when a \release[] returns $v$, meaning that $v$ is no longer
live.  It is precise since if the \release{} is the last one on the
transaction's version, the precise \VerMain{} solution will return that version,
and any tuples in the version that are not shared with other versions
will be collected while the transaction is still live.  Therefore no
version that is no longer live will survive past the lifetime of the
last transaction that releases it.

\lstset{basicstyle=\footnotesize\ttfamily, tabsize=2, escapeinside={@}{@}}
\lstset{literate={<<}{{$\langle$}}1  {>>}{{$\rangle$}}1}
\lstset{language=C, morekeywords={CAS,commit,empty,local,job,taken,entry,GOTO}}
\lstset{xleftmargin=5.0ex, numbers=left, numberblanklines=false, frame=single}
%\lstset{caption={\texttt{tuple} and \collect{} algorithms},captionpos=b,label={alg:collect}}
\makeatletter
\lst@Key{countblanklines}{true}[t]%
{\lstKV@SetIf{#1}\lst@ifcountblanklines}

\lst@AddToHook{OnEmptyLine}{%
	\lst@ifnumberblanklines\else%
	\lst@ifcountblanklines\else%
	\advance\c@lstnumber-\@ne\relax%
	\fi%
	\fi}
\makeatother

\hide{
\begin{figure}
	\caption{Transaction Phases. $k$ is the processor id.\vspace{-.2in}}
	\begin{minipage}{.51\textwidth}
\begin{lstlisting}[caption={Read Transaction}]
v = M.acquire(k);	  //@\phaseone@
user_code(v);		  	//@\phasetwo@
//Respond to user here
res = M.release(k);	//@\phasethree@
if (res) {			   	//@\phasefour@
	collect(v);	}		
\end{lstlisting}
	\end{minipage}\hfill
	\begin{minipage}{0.51\textwidth}
\begin{lstlisting}[caption={Write Transaction}]
v = M.acquire(k);		//@\phaseone@
new = user_code(v);	//@\phasetwo@
M.set(new);				  //@\phasethree@
res = M.release(k);	
if (res) {					//@\phasefour@
	collect(v);	}
\end{lstlisting}
	\end{minipage}
	\label{alg:framework}
\end{figure}}

%We now prove that our framework of version maintenance problem guarantees that the transactions are strictly serializable, assuming that we have a linearizable instance, $\vm$, of the version maintenance problem.
%In order to do so,

\begin{maybeappendix}{serial-proof}
\begin{lemma}
	Let $H$ be a transactional history, and $w$ a write transaction that commits version $v_w$ in $H$.
	Let $S$ be a serialization of $H$ according to the serialization points outlined in Definition~\ref{def:serpts}.
	A read transaction $r$ uses $v_w$ as its version if and only if $w$ is the last write transaction before $r$ in $S$.	
	%Any read transaction $r$ whose serialization point is before that of $w$ in $H$ returns a version at least as recent as $v_w$.
	%Furthermore, any read transaction $r'$ that is serialized before $w$ does not return $v_w$.
\end{lemma}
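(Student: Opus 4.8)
The plan is to reduce the claim to the correctness property of \acquire{} that holds for any linearizable solution to the version maintenance problem, namely that each \acquire{} returns the data pointer associated with the \emph{current version} (the version installed by the most recently linearized \set{}; see Definition~\ref{def:current} and the sequential specification in Section~\ref{sec:vm-prop}). The two serialization rules of Definition~\ref{def:serpts} are chosen precisely so that a read transaction is ordered at its \acquire{} linearization point and a write transaction at its \set{} linearization point. Consequently, the relative order of transactions in $S$ is \emph{literally} the order of the corresponding \acquire{}/\set{} linearization points in the linearized history of $\vm$, and the entire argument amounts to translating the phrase ``uses $v_w$'' into a statement about these linearization points.

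First I would record the one-to-one correspondence underlying everything: each write transaction performs exactly one $\vm.$\set{}, and each \set{} installs a fresh version. Hence the version $v_w$ committed by $w$ is installed by a unique \set{} operation, and by Definition~\ref{def:serpts} the linearization point of that \set{} is exactly the serialization point of $w$. In particular, distinct write transactions install distinct versions, so ``$r$'s \acquire{} returns the data pointer of $v_w$'' is equivalent to ``the last \set{} linearized before $r$'s \acquire{} is the one installing $v_w$.''

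For the forward direction, assume $r$ uses $v_w$, i.e. $r$'s call to $\vm.$\acquire{} returns the data pointer of $v_w$. By the acquire-returns-current-version property, this data pointer is the one written by the last \set{} linearized before $r$'s \acquire{} linearization point; by the bijection above, that \set{} is precisely the \set{} of $w$. Thus $w$'s serialization point precedes $r$'s, and because this \set{} is the \emph{last} one before $r$'s \acquire{}, no other write transaction's serialization point lies between them; hence $w$ is the last write transaction before $r$ in $S$. For the converse, assume $w$ is the last write transaction before $r$ in $S$. Unwinding Definition~\ref{def:serpts}, this says the \set{} of $w$ is the last \set{} whose linearization point precedes the linearization point of $r$'s \acquire{}. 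Applying the acquire-correctness property again, $r$'s \acquire{} returns the data pointer written by $w$'s \set{}, which is that of $v_w$; hence $r$ uses $v_w$.

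The main obstacle I anticipate is purely bookkeeping: being careful that ``last write transaction before $r$ in $S$'' and ``last \set{} linearized before $r$'s \acquire{}'' name the same event. This relies on (i) the bijection between write transactions, their \set{} operations, and their committed versions, and (ii) the observation that read serialization points are \acquire{} points and install no versions, so they can never change which \set{} is ``last'' before $r$. Once these are pinned down, both directions follow immediately from the linearizability of $\vm$ and the sequential specification of \acquire{}.
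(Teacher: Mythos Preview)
Your proposal is correct and follows essentially the same approach as the paper: both reduce the claim to the sequential specification of \acquire{} (it returns the current version at its linearization point) together with the identification of serialization points with \acquire{}/\set{} linearization points. If anything, you are slightly more thorough than the paper, which argues only the ``only if'' direction explicitly; your treatment of the bijection between write transactions, their \set{} operations, and committed versions, and your explicit handling of both directions, make the argument cleaner.
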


\begin{proof}
	In our framework, a transaction always uses the version returned by its call to $\vm.$\acquire[]().
	By the definition of the \VerMain{} problem, the $\vm.$\acquire[]() \op{} returns the current version of $\vm$ at the time that the \acquire[] is linearized. Thus, $r$ uses the version that is current in $\vm$ at the time that it serializes (since its serialization point is the same as the linearization point of its call to \acquire[]).
	Recall that the current version of a \VerMain{} instance is by definition the version that was set by the most recent \set[] \op{}.
	Note that in the transactional history, the only calls to $\vm.$\set[] are from write transactions, and each write transaction serializes at the linearization point of its only \set[] \op{}.
	Thus, if the read transaction, $r$, uses version $v_w$, $v_w$ must have been the current version at $r$'s serialization point. Since only write transactions call $\vm.$\set[], and they serialize at the linearization point of this \set[] \op{}, by definition of the current version, $w$ must have been the last write transaction serialized before $r$.
\end{proof}

To complete the proof of serializability, we also need to show that the write transactions are atomic, i.e., that the current version never changes between when the write transaction acquires a version and when it commits a new version. However, this trivially holds, since we do not allow concurrent write transactions.
Thus, we conclude Theorem \ref{thm:serial}.
\end{maybeappendix}

%In Section \ref{sec:gc}, we discuss the \collect{} function, and show that our implementation of it is correct, i.e., that it frees tuples that become unreachable, and does not free others. %Relying on this fact, in Section \ref{sec:app-gc} of the Appendix we show that our transactional framework guarantees safe and precise garbage collection.

\hide{
\paragraph{Extension to Multi-writer scenario.} We discuss methods to extend our approach to multiple writers in the supplement material. In our experiments, we get around this by batching concurrent updates \cite{hendler2010flat,batcher}.
}

%For this purpose, in addition to having a linearizable \VerMain{} object, we also assume the existence of a \collect{} function that guarantees the following property: a call to \collect(v) frees exactly the tuples of $v$ that are not shared by any other live version at the time that the $\vm.$\release[v]() that returns \true{} linearizes. \Naama{should work on this definition.}
%In Section \ref{sec:gc}, we show how to implement such a \collect{} \op{}.

\begin{maybeappendix}{gc-correctness}
We now turn to showing that garbage collection can be safe and precise under this framework.

\begin{theorem}
	If we have a correct \collect{} \op{}, then garbage collection is precise.
\end{theorem}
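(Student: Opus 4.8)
The plan is to prove the contrapositive of the definition of precision: rather than show directly that the allocated space is a subset of $R(T)$, I will show that every \tuple{} $u$ lying outside $R(T)$ at a time $t$ has already been freed. Equivalently, for an arbitrary allocated (not yet freed) \tuple{} $u$, I will exhibit a version whose reachable set contains $u$ and which is either the current version or the version of some \activet{} transaction, so that $u \in R(T)$.

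First I would invoke the correctness of \collect{}: by the contrapositive of the first bullet of Definition~\ref{def:collect}, if $u$ has not been freed by time $t$, then there exists at least one version $v$ that $u$ belongs to for which \collect(v){} has not terminated by $t$. Fix such a $v$. Since $u$ belongs to $v$, it is reachable from $v$'s version root, so it suffices to argue that $v \in \{c\} \cup \{V(t') \mid t' \in T\}$, where $c$ is the current version and $T$ is the set of \activet{} transactions; this immediately yields $u \in R(T)$.

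The heart of the argument is therefore the claim: \emph{if \collect(v){} has not terminated, then $v$ is the current version or the version of some active transaction.} To establish it I would combine two facts from earlier. From the transaction framework (Figure~\ref{alg:framework}), \collect(v){} is invoked only in the \phasefour, after the corresponding \release[v]{} returns \true{}; and from the version-maintenance specification, \release[v]{} returns \true{} exactly when $v$ transitions from live to not live. I then split on whether \collect(v){} has been called. If it has not, the \release[v]{} that would return \true{} has not yet returned, so $v$ is still live at $t$; by Definition~\ref{def:live} a live version is either $c$ or a version that has been acquired but not released, and the latter equals $V(t')$ for the (active) transaction $t'$ that acquired it, giving $v \in \{c\}\cup\{V(t')\mid t'\in T\}$. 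If \collect(v){} has been called but is still in progress, then the transaction $t'$ executing it satisfies $V(t') = v$ and has not yet responded, hence is still active, so again $v \in \{V(t')\mid t'\in T\}$. In both cases $u \in R(T)$, completing the contrapositive.

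I expect the in-progress \collect{} case to be the main obstacle, for two reasons. First, it forces the proof to rely on the deliberately \emph{broad} notion of \activet{} transaction (spanning the whole interval, through the \phasefour, past the \release[]), and I would need to confirm this matches the $T$ used in the statement of precision; this generous choice is precisely what makes the case benign. Second, I must read ``$u$ belongs to $v$'' as a property of the version's committed content (the fixed DAG produced by \set[v]{}), not of the partially torn-down live memory graph, so that reachability of $u$ from $v$'s root is not spuriously broken by concurrent freeing. This reading is consistent with how Definition~\ref{def:collect} quantifies over the set of versions a \tuple{} belongs to. Handling these two points cleanly is where the care is needed; the remainder is a direct unfolding of the definitions.
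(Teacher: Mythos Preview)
Your proposal is correct and follows essentially the same approach as the paper, just argued in the contrapositive direction: the paper starts from an unreachable tuple and shows it has been freed, while you start from an unfreed tuple and show it is reachable. The key step---that a version whose \collect{} has not terminated is either current or the version of a still-\activet{} transaction---and the handling of the in-progress \collect{} case via the executing transaction's activeness are identical to the paper's reasoning.
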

%\Naama{Need to define `reachable' to include current version, even if no transaction is active.}
\begin{proof}
	Recall that garbage collection is said to be \emph{precise} if the allocated space at every point in the execution is a subset of the reachable space at that point. %Note that memory can only become unreachable when a transaction terminates, since the reachable space is defined by the versions that \activet{} transactions hold.
	%Thus, we simply have to show that if a \tuple{} becomes unreachable after a transaction $t$ terminates, then it is freed before $t$ terminates.
	%For this purpose, we relate reachability of a memory location to the liveness of the version it is reachable from.
	Consider some \tuple{} $u$ that becomes unreachable at time $t$. We must show that $u$ is freed by $t$.
	%is reachable before a transaction $t$ terminates, and unreachable immediately after $t$ terminates.
	%Then $u$ must belong to $V(t)$, the version that $t$ holds.
	
	By definition, a version $v$ is live at a configuration $C$ if and only if either (1) $v$ is the current version at $C$, or (2) $v$ has been acquired but not yet released. Note that if the second property holds, then there is a transaction $t$ for which $V(t) = v$. Thus, if $v$ is live, then its root is also reachable (and so are all \tuple{}s that belong to $v$).
	
	Let $V_u$ be the set of versions that $u$ belongs to.
	From the above, we know that for all versions $v \in V_u$, $v$ must not be live by time $t$, since otherwise $u$ would still be reachable, contradicting the  definition of $u$.
	This means that for all $v \in V_u$, some \release[v]() \op{} has returned \true{} (by the \VerMain{} problem definition), and thus the transaction that executed this \release[v]() called \collect(v).
	
	Assume by contradiction that by time $t$, there is some $v\in V_u$ for which the call to \collect(v) has not yet terminated. Let $\tau$ be the transaction whose \release[v]() \op{} returned \true{}. Since $\tau$'s \collect(v) hasn't yet terminated, $\tau$ is still \activet{}. But then $u$ is reachable from an \activet{} transaction, $\tau$, contradicting the definition of $u$.
	Therefore, for all $v\in V_u$, \collect(v) has terminated by time $t$.
	By the correctness of the \collect{} function, $u$ must have been freed by $t$.
	%
	%By the \VerMain{} problem definition, we know that if a version $v$ stops being live after a \release[v]() \op{} $rel$, then $rel$ returns \true.
	%In that case, the transaction $t$ that executed $rel$ goes on to call \collect(v). By assumption, the \collect(v) function frees all tuples of $v$ that are unreachable. Furthermore, $t$ terminates only after the \collect{} returns.
\end{proof}

\begin{theorem}
	If we have a correct \collect{} \op{}, then garbage collection is safe.
\end{theorem}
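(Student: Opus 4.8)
The plan is to prove the contrapositive at the granularity of a single \tuple{}. Recall that safety requires the allocated space to be a superset of $R(T')$ at every configuration, where $T'$ is the set of transactions not yet in their \phasethree{} and, by the definition of $R(\cdot)$, $R(T')$ always contains everything reachable from the current version $c$. Equivalently, I will show that if a \tuple{} $u$ has been freed by some time $t$, then $u \notin R(T')$ at $t$; that is, $u$ is reachable neither from $c$ nor from $V(\tau)$ for any $\tau \in T'$. Since $R(T')$ is exactly the set of \tuple{}s that belong to some version in $\{V(\tau) \mid \tau \in T'\} \cup \{c\}$, establishing this for an arbitrary freed \tuple{} yields the claim.

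First I would translate ``freed'' into a statement about \collect{} calls. By the contrapositive of the second condition of Definition~\ref{def:collect}, if $u$ has been freed by $t$, then for \emph{every} version $v$ that $u$ belongs to, \collect($v$) has already been called. In our framework \collect($v$) is invoked only in the \phasefour{} of a transaction whose \release[v] returned \true{}, and by the specification of the \VerMain{} problem a \release[v] returns \true{} precisely when $v$ transitions from live to not live. Hence the existence of a freed \tuple{} $u$ forces every version it belongs to be not live.

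The core of the argument is then a two-case analysis of how $u$ could lie in $R(T')$, each resolved by contradiction. If $u$ is reachable from the current version $c$, then $u$ belongs to $c$, so \collect($c$) has been called and $c$ is not live; but the current version is live by Definition~\ref{def:live}, and by Observation~\ref{lem:ss1} liveness holds over a contiguous range of configurations, so a version that has ceased to be live cannot be current again --- a contradiction. If instead $u$ is reachable from $V(\tau)$ for some $\tau \in T'$, then because \release{} is executed in the \phasethree{} and $\tau$ has not yet reached its \phasethree, $\tau$ has acquired $V(\tau)$ but not released it; thus $V(\tau)$ is live, no \release{} of it has returned \true{}, and \collect($V(\tau)$) has not been called. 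By the second condition of Definition~\ref{def:collect} applied to the version $V(\tau)$ to which $u$ belongs, $u$ cannot have been freed --- again a contradiction.

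I expect the delicate step to be the handling of the current version: I must use that although the identity of the current version changes over the execution, the particular version $c$ that is current at time $t$ is live at $t$, so the contiguity of liveness (Observation~\ref{lem:ss1}) rules out that any \release[c] returned \true{} at an earlier configuration. The second case is more routine but still hinges on correctly tying the phase structure of the transaction framework (the fact that \release{} sits in the \phasethree) to the definition of a live version, so that a transaction still in an earlier phase provably keeps its version live and hence keeps every \tuple{} reachable from it un-collected.
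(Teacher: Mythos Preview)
Your proof is correct and follows essentially the same line as the paper's: both hinge on the second clause of Definition~\ref{def:collect} together with the fact that versions held by transactions that have not yet entered their \phasethree{} are still live (so no \release{} has returned \true{} and hence no \collect{} has been invoked on them). The paper argues this directly, while you take the contrapositive at the level of a single freed tuple; the logical content is the same.

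One difference worth noting: you explicitly treat the current-version case, using Observation~\ref{lem:ss1} to rule out a \release{} of $c$ having returned \true{} at an earlier point, whereas the paper's written proof handles only the versions in $\mathcal{V}(T_{early})$ and does not spell out why tuples reachable from $c$ alone are safe. Your two-case split is therefore slightly more complete than the paper's version of the argument, and your identification of the contiguity observation as the ``delicate step'' is exactly right.
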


\newcommand{\cal}[1]{{\mathcal{V}(#1)}}
\begin{proof}
	Recall that garbage collection is said to be \emph{safe} if at any point in an execution, the allocated space is a superset of $R(T_{early})$, where $T_{early}$ is the set of transactions that have not finished executing user code yet, and $R(T)$ is the reachable space from a set of transactions $T$. %\Naama{work on notation here, probably change defn in the prelims as well.}
	
	Let ${\cal{V}}(T_{early})$ be the set of versions of transactions in $T_{early}$. By the sequential specification of the \VerMain, all versions in ${\cal{V}}(T_{early})$ are still live (since they've been acquired but not released).
	This means that no \release[v]() \op{} has returned \true{} for any version $v \in {\cal{V}}(T_{early})$. Thus, for every such version $v$, \collect(v) has not been called so far in the execution.
	
	Thus, from the correctness of the \collect() function, no \tuple{} that is reachable from any such version $v$ has been freed.
	%
	%By assumption, no tuple that is reachable from any such version $v$ is freed by any \collect().
\end{proof}

\end{maybeappendix}

%\myparagraph{Using Our Wait-free Algorithm and Correct GC for Single-writer Concurrency.}

\subsection{Delay and Contention}
Here we prove bounds on delay and contention experienced by
transactions assuming we use the wait-free algorithm for the version
maintenance problem (Section \ref{sec:waitfree}), and our \collect{}
function (Section \ref{sec:gc}).  A summary of the results is shown in
Table \ref{tab:allresults}.

\begin{theorem}[Step Complexity]
  \label{thm:transaction-delay}
  With our transactional system using the \PSWF{} algorithm for
  \VerMain{},
  \begin{itemize}
  \item all read transactions are delay-free,
  \item all single-writer transactions have $O(P)$ delay, and
  \item all write transactions are lock-free.
  \end{itemize}
  Furthermore, for single-writers, the time complexity of the garbage
  collection across a sequence of transactions is bounded by the
  number of unique tuples used across all versions.
\end{theorem}

\begin{proof}
  The proof follows almost directly from previous
  theorems~\ref{thm:waitfree-step} and \ref{thm:collect-correct}.  In
  particular, a read-transaction is active during the \acquire{} and
  the user code.  The \acquire{} takes $O(1)$ time by Theorem
  \ref{thm:waitfree-step}, and the user code requires no extra time since the
  code is not changed from the original sequential code.  The
  transaction is therefore delay-free.  A write transaction is active
  during the \acquire{}, user code and until the end of a successful
  \set{}.  The cost of \acquire{} is $O(1)$, the cost of
  \set{} is $O(P)$ and the user code takes no more
  time than it would sequentially.  If there is no concurrent writer
  it will succeed on the first try and hence have delay $O(P)$.  If
  concurrent with other writers it can only fail and restart if some other writing transaction succeeds.  Hence it is lock-free. %\guy{Naama, please review this.}

  In the single-writer context, all values are successfully written and
  hence the number of tuples needed to collect is bounded by the tuples
  that appear across all versions.  By Theorem \ref{thm:collect-correct} each
  takes constant time to collect.
  \end{proof}

\begin{theorem}
  \label{thm:transContention}
  For the single-writer setting, all shared-memory operations except inside the garbage
  collector have $O(1)$ amortized contention.
  %\guy{should check this.}
\end{theorem}
\begin{proof}
  This follows the bounds on contention in
  Theorem~\ref{thm:contention} for \acquire{}, \set{}, and \release{}.
  Each has amortized contention proportional to its time complexity.   Furthermore in the single-writer context, only a single transaction is allocating and incrementing reference counts at any time.   However, in the garbage collection there can
  be contention when decrementing reference counts.
  \end{proof}

%% As an example, we combine our algorithms with a functional balanced tree structure and discuss the bounds. A transaction that looks up a key in the tree takes $O(\log n)$ steps before it responds to the user, asymptotically no more than the bound for a sequential, single-versioned tree (delay-free). A transaction creating a new version with an insertion/deletion/update costs $O(\log n +P)$ steps before the new version is visible to the world, which is only $O(P)$ additional overhead ($P$ is the number of processes). Taking the cost of \release{} and GC into account, the delay might be longer. In the supplement material, we further discuss this issue.%, and the tradeoff of the precision of GC with step complexity.
%, because although the user gets the respond after $O(\log n)$ steps, the process can still be busy releasing the version or collecting garbage, making it unavailable for executing the next transaction.
%However,  in the usage scenario we are thinking of, a request comes in remotely and has to be answered.   The response can happen before the \release{} or GC.   As long as there are enough processes, such requests can continue being serviced.   In Appendix \ref{sec:discussion} in the supplement material, we further discuss this issue, and the tradeoff of the precision of GC with step complexity.
%Appendix \ref{sec:discussion} in

\subsection{Discussion about Functional Data Structures}

The important features of the functional code for our purposes is that it
is fully persistent and safe for concurrency, both by default.  As
previously mentioned, persistence can also be achieved by using
version lists on each
object~\cite{Reed78,BG83,papadimitriou1984concurrency,Kumar14,neumann2015fast}.
This requires modifying every read and write, and can asymptotically
increase the time complexity of user code.  There has been theoretical work on
efficiently supporting version-list based persistence based on node
splitting~\cite{driscoll1989making}.  This approach, however, has
several drawbacks in our context.  Firstly it requires at most a
constant number of pointers to all objects.  This would disallow, for
example, even having the processes point to a common object.
Secondly, it is not safe for concurrency.  Making it safe would be an
interesting and non-trivial research topic on its own.  Thirdly, the
approach does not address garbage collection---it assumes all versions
are maintained.  Again, adding garbage collection would be an interesting
research topic on its own.  Finally, constant time operations are only
supported for what is called partial persistence---i.e. a linear
history of changes.  Supporting lock-free writers seems to require
that multiple writers simultaneously update their versions, which
requires what is called full persistence, which allows for branching
of the history.

We note that a disadvantage of functional data structures as compared
to version lists is that they sequentialize write transactions even
when on different parts of a data structure.  With version lists, if
two transactions are race-free (the set of objects that one writes is
disjoint from the set that the other reads and writes), then they can
proceed in parallel and serialize in either order.  For this
reason, we believe our approach is best suited either in situations
when the transaction load is dominated by readers, or when the updates
can be batched, as described in our experiments.  As mentioned in the
introduction, due to dependences it is impossible to bound the delay
for writers independently of the other concurrent writers.  It might
be possible, however, to bound delays relative to inherent
dependences---i.e., the delay is no more than forced by a dependence.

%\section{Discussion}\label{sec:discussion}
\begin{maybeappendix}{discussion}
\section{Discussion}\label{sec:discussion}
	%One interesting direction is to explore the tradeoff between step complexity and the strength of the garbage collection guarantee. For example, maybe we can design a faster algorithm if we allow at most $2P$ versions to be uncollected at any point in time. This would require modifying the Version Maintenance Problem so that \release{} operations return a list of versions to be collected. With the new definition, we allow \release{} operations to potentially build up a back-log of versions and then process them all at once.

Some of the techniques in our algorithm can also be found in wait-free universal
construction algorithms \cite{fatourou2011highly, herlihy1990methodology,herlihy1993methodology}). %In particular, we both
%maintain a global pointer to the most recent version of the data structure,
%perform updates on a new copy of the data structure and swing the global
%pointer over when we are done.
%Universal constructions with very good theoretical
%bounds tend to be unrealistic because they store the entire data structure in
%a single object.
Most universal constructions tend to be impractical because they copy the state of
the data structure for each new operation.
%Maybe talk about PSIM and other constructions that don't have these assumptions
Viewed from the universal construction perspective, we presented a single-writer
universal construction algorithm that (1) does not use large registers, (2) reduces
amount of variables copied by using functional data structures and path copying, (3) special cases read
operations so that they do not have to copy and (4) garbage collects old versions
in a precise manner.
The last point in particular is interesting because we have not seen any other universal
construction algorithms with precise garbage collection and this is the problem
that our Version Maintenance Problem is designed to address.
\end{maybeappendix}

	\section{Other VM Algorithms}
\label{sec:vm_algs}

In this section, we present three additional solutions to the Version Maintenance problem. One solution is based on Read-Copy-Update RCU~\cite{McKenney98} and the other two are based on widely used memory reclamation techniques: Hazard Pointers (HP) \cite{michael2004hazard} and Epoch Based Reclamation (EP) \cite{fraser2004practical}.
These solutions are simple to describe,
%are conceptually simpler than the \PSWF{} algorithm and
%use only $O(1)$ steps for \set{}
but have various drawbacks. The HP and EP based solutions are not precise. RCU leads to a precise solution, but writers block waiting for readers.
% With HP, the \acquire{} method is lock-free rather than wait-free, and with EP, one stalled process can cause an unbounded number of uncollected versions.
Researchers have proposed numerous extensions to the original HP and EP techniques~\cite{aghazadeh2014making, cohen2015efficient, wen2018interval, brown2015reclaiming}. Some of these directly translate to new ways of solving the VM problem. Our \PSWF{} algorithm can be understood as a wait-free and precise extension of the HP based algorithm.
We experimentally compare these version maintenance strategies in Section \ref{sec:exp:rangesum}.
%their performance is similar, but the wait-free algorithm uses a lot less versions in many cases.

\vspace{1mm}
\myparagraph{Read-Copy-Update (RCU).}
The basic RCU interface provides 3 methods: \rlock{}, \runlock{}, and \rsync{}. \\ \rlock{} and \runlock{} mark the beginning and end of read-side critical sections. \rsync{} blocks until all the currently active read-side critical sections have completed. Note that \rsync{} only needs to wait for the read-side critical sections that existed at the start of its execution.
%It doesn't have to wait for any read-side critical section that starts during its execution interval.

The RCU-based \acquire{} method calls \rlock{} and then reads and returns the current version. The \set{} method updates the current version using a CAS (similar to the \PSWF{} algorithm). If the CAS succeeds, it remembers the old version. If \release{} does not follow a successful \set, it simply calls \runlock{} and returns the empty set. Otherwise, it also has to call \rsync{} and return the old version to be garbage collected. The downside of RCU is that write transactions have to wait for read transactions which led to slow write throughput in our experiments. We use the Citrus \cite{arbel2014concurrent} implementation of RCU for our experiments.

\vspace{1mm}
\myparagraph{Hazard Pointers (HP).}
To \acquire{} a version in the HP based algorithm, a process $p$ first reads the current version and announces it. This announcement tells other processes that the version is potentially being used. Then $p$ reads the current version again to check if it has changed. If not, then the announced version was still current at the time of the announcement and $p$ can safely return the version it announced. Otherwise, the \acquire{} has to restart. A \set{} operation simply updates the current version using a CAS, and if the CAS succeeds, it adds the old version to its retired list. A \release{} operation by $p$ first clears its announcement location and if its retired list reaches size $2P$, it scans the announcement array and it removes and returns all the versions in its retired list that were not announced.
Any version retired by $p$ that is was not announced is safe to collect because it cannot be returned by a future \acquire{} operation; it might be announced by a future \acquire{}, but that operation would detect that the current version has changed and restart.
%Thus, to compute a list of collectable versions, $p_i$ just has to scan the announcement array and make a list of all the versions in its retired list that were not announced. These versions are then removed from the retired list before being returned.
If the retired list has size $2P$, then the \release{} operation returns at least $P$ versions and can be implemented using $O(P)$ time. Otherwise, the \release{} operation returns an empty list and takes $O(1)$ time. There are at least $P$ fast \release{} operations between each expensive one so its amortized time complexity is $O(1)$. Note that \release{} always returns an empty list for read-only processes.

\vspace{1mm}
\myparagraph{Epoch Based Reclamation (EP).}
In EP, the execution is divided into epochs and for each epoch, we maintain the set of versions that were retired during that epoch. An \acquire{} operation simply reads and announces the current epoch, and then reads and returns the current version. A \release{} operation reads the current epoch and scans the announcement array.
If everyone has announced this epoch, it tries to increment the current epoch with a CAS. If the CAS succeeds, it returns all the versions retired 2 epochs ago. Since everyone has announced the previous epoch, these versions cannot be accessed anymore. In all other cases, the \release{} operation returns an empty list. It is only necessary to maintain a set of retired versions for the last 3 epochs.

To reduce the number of times we scan the announcement array, we only do this for \release{} operations that follow a successful \set{} operation. All other \release{} operations are allowed to return right away. This optimization increases the number of uncollected versions by at most 1.
% In our experiments, this approach leads to marginally higher transaction throughput compared to our wait-free algorithm, but one slow operation also causes a lot of uncollected versions to build up. 

	\renewcommand{\figurename}{Figure}
\section{Experiments}
\label{sec:exp}
In this section, we study the performance of our approach using ordered
maps implemented with balanced binary trees.  For the ordered maps we
use the C++ PAM library~\cite{pam} since it already supports
functional tree structures, and has a reference counting collector.
%PAM supports a variety of tree types, but our experiments are with
%weight-balanced trees.
For the experiments, we have implemented five
versions of the \VerMain{}: our \PSWF{} algorithm, our algorithm without helping, an
imprecise version based on epochs, an imprecise version based on
hazard pointers, and a blocking version based on RCU. We do not compare to general purpose software transactional memory systems
since previous results show they are not competitive to direct concurrent
implementations~\cite{Gramoli15}.

We run two types of experiments.  The first studies query and
update operations under a single-writer multi-reader concurrent
setting.  The experiments are designed to understand the overheads of
the different \VerMain{} algorithm and how much
garbage they leave behind.  The
second type measures the throughput of concurrent operations on
functional trees, comparing
to five existing trees (or skiplists).
%% skiplists~\cite{skiplist}, OpenBW trees~\cite{openbwtree}, Masstree
%% \cite{masstree}, B+trees \cite{openbwtree} and concurrent Chromatic
%% trees~\cite{Brown14,chromatic}.
It uses batching for our functional
tree structure.  The
goal is to understand the overhead of using functional trees.

\myparagraph{Setup.}  For all experiments, we use a 72-core Dell R930 with
4 x Intel(R) Xeon(R) E7-8867 v4 (18 cores, 2.4GHz and 45MB L3 cache),
and 1Tbyte memory.  Each core is 2-way hyperthreaded giving 144
hyperthreads.  Our code was compiled using g++ 5.4.1 with
the Cilk Plus extensions.  We compile with \texttt{-O3}. We use
\texttt{numactl -i all} in all experiments, evenly spreading the
memory pages across the processors in a round-robin fashion.
All the numbers are taken by averaging of 3 runs. In experiments, we use ``threads'' to refer
to ``processes'' as we use in our theoretical analysis.

\subsection{Evaluating the VM Algorithms and GC}
\label{sec:exp:rangesum}
In this section, we experiment with five different \VerMain{}
algorithms: our precise, safe and wait-free algorithm from Section~\ref{sec:vermain} (\emph{\PSWF}), our algorithm without helping (which only guarantees lock-freedom, referred to as \emph{PSLF}),
a hazard-pointer-based algorithm (\emph{HP}), an epoch-based algorithm (\emph{EP}), and an RCU-based algorithm (\emph{RCU}). The implementation of the latter three is discussed in Section \ref{sec:vm_algs}. We note that \PSWF{}, PSLF and RCU guarantee precise garbage collection, while EP and HP do not. RCU guarantees that at any point there are at most two live versions, but will block writers if there are readers working on the old version. HP, EP, and our \PSWF{} algorithm are non-blocking.
%We also conduct experiments to evaluate the contention in our wait-free \VerMain{} algorithm.

We use the functional augmented tree structure in PAM as the underlying data structure.
We use integer keys and values, and conduct parallel range-sum queries while
updating the tree with insertions. Each query asks for the sum of values
in a key range in time $O(\log n)$ with augmentation.
The initial tree size is $n=10^8$. We use $P=141$ threads to
invoke concurrent transactions, among which one thread continually
commits updates, each containing $n_u$ sequential insertions, and
$140$ threads conduct queries, each containing $n_q$ range-sum
queries.  We control the granularity of update and query transactions
by adjusting $n_u$ and $n_q$, respectively.  We set the total running time to be 15
seconds, and test different combinations of update and query granularity.
%Before each update happens, we record the number of living versions at that point. This will slightly affect the performance, but the difference seems to be small.
We keep track of the number of live versions before each update, and report the maximum number of versions. %, and also report the throughput of the updates and queries.
The results are shown in Table \ref{tab:vmexp} and Figure \ref{fig:max_live}.

\begin{fullversion}
\begin{table}[!h!t]
  \centering
    \begin{tabular}{|r|r|r|r|r|r|}
    \hline
    \multicolumn{1}{|c|}{{\textbf{Query}}} & \multicolumn{5}{c|}{\textbf{Update Granularity}} \\
\cline{2-6} \multicolumn{1}{|c|}{{\textbf{Granularity}}}          & \multicolumn{1}{c|}{\textbf{1}} & \multicolumn{1}{c|}{\textbf{10}} & \multicolumn{1}{c|}{\textbf{100}} & \multicolumn{1}{c|}{\textbf{1000}} & \multicolumn{1}{c|}{\textbf{10000}} \\
    \hline
    \textbf{1} & 8     & 4     & 3     & 2     & 1 \\
    \hline
    \textbf{10} & 9     & 4     & 3     & 2     & 1 \\
    \hline
    \textbf{100} & 42    & 7     & 3     & 2     & 1 \\
    \hline
    \textbf{1000} & 113   & 54    & 9     & 2     & 1 \\
    \hline
    \textbf{10000} & 141   & 130   & 55    & 8     & 2 \\
    \hline
    \end{tabular}%
      \caption{The maximum number of live versions.}
  \label{tab:maxversion}%
\end{table}%

\begin{table}[!h!t]
  \centering
    \begin{tabular}{|r|r|r|r|r|r|}
    \hline
    \multicolumn{1}{|c|}{\textbf{Query}} & \multicolumn{5}{c|}{\textbf{Update Granularity}} \\
\cline{2-6}    \multicolumn{1}{|c|}{\textbf{Granularity}} & \multicolumn{1}{c|}{\textbf{1}} & \multicolumn{1}{c|}{\textbf{10}} & \multicolumn{1}{c|}{\textbf{100}} & \multicolumn{1}{c|}{\textbf{1000}} & \multicolumn{1}{c|}{\textbf{10000}} \\
    \hline
    \textbf{1} & 1.92  & 1.29  & 1.16  & 1.00  & 1.00  \\
    \hline
    \textbf{10} & 3.77  & 1.21  & 1.21  & 1.00  & 1.00  \\
    \hline
    \textbf{100} & 25.47  & 4.34  & 1.22  & 1.00  & 1.00  \\
    \hline
    \textbf{1000} & 81.71  & 34.56  & 4.93  & 1.03  & 1.00  \\
    \hline
    \textbf{10000} & 133.55  & 109.30  & 40.76  & 5.51  & 1.02  \\
    \hline
    \end{tabular}%
      \caption{Average number of live versions. }
  \label{tab:aveversion}%
\end{table}%
\end{fullversion}

\paragraph{\textbf{The number of live versions.}}
The number of live versions for all five algorithms in different settings is shown in Table \ref{tab:vmexp}.
Figure \ref{fig:max_live} shows the maximum live versions of the five VM algorithms,
with different update granularity
when $n_q=10$.
%Note that we use log-scale for all figures.
The general trends for all five algorithms are similar.
When $n_u$ is large or $n_q$ is small, there are few versions live. This is
because when updates are less frequent or queries finish fast, most queries
will catch recent versions.
When $n_u$ is small or $n_q$ is large,
the number of live versions gets larger.  This is because when new versions
are generated frequently, or queries take a long time, it is more likely
for queries to be behind the current version, and keep more old versions live.
%The only exceptions are RCU (always 1) and HP (always 282). %We will show analysis below.
%while the old versions are still used in queries and not yet collected.

\begin{figure}[!t]
\centering
\begin{minipage}[t]{0.48\columnwidth}
\includegraphics[width=\columnwidth]{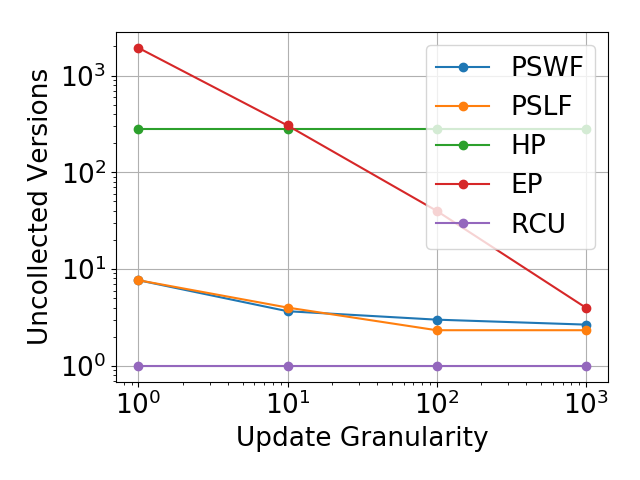}
\begin{minipage}[c]{0.9\columnwidth}
\vspace{-.1in}
\caption{\small Maximum number of uncollected versions for different VM algorithms. $n_q$ is 10, 140 query threads.\label{fig:max_live}}
\end{minipage}
\end{minipage}
\begin{minipage}[t]{0.48\columnwidth}
\includegraphics[width=\columnwidth]{figures/ycsb}
\begin{minipage}[c]{0.9\columnwidth}
\vspace{-.1in}
\caption{\small Throughput of six data structures on YCSB workloads A (read/update, 50/50), B (read/update, 95/5) and C (all reads). \label{fig:augsumexp}}
%The initial tree size is $5\times 10^7$. The number of transactions is $10^7$.
\end{minipage}
\end{minipage}
\vspace{-.15in}
\end{figure}

%\begin{figure}
%  \centering
%  \includegraphics[width=0.8\columnwidth]{graphs/max_live_vary_nu}
%  \caption{Maximum number of uncollected versions for different VM algorithms. $n_q$ is 10, 140 query threads.}\label{fig:max_live}
%\end{figure}

We now compare the five VM algorithms.
The maximum number of live versions for HP is always $2P=282$.
For EP, when $n_u$ is large, the number of live versions
is reasonable and mostly below 100. However, for frequent updates,
the number of versions can reach up to 1000 (see Figure \ref{fig:max_live}), because queries cannot
catch up with the latest version. Many recent (but not current)
versions cannot be collected, even if no queries are working on them.
Theoretically the epoch-based algorithm can leave an unbounded number of versions behind.
RCU keeps only 1 version before \set{}
since the writer will wait to collect the old version before generating a new version.
Although the amount of garbage is small, the writer is blocked and update granularity
is low as we will show later in this section.
For our \PSWF{} algorithm, the number of total versions is at most $141$ for small $n_u$ and large $n_q$.
This case is possible but rare to occur. In the settings we shown in this paper, the maximum number of versions is
within 100. In most of the cases, the maximum of living versions is around $10$, which is $1/14$ of the total query
threads.  Because our GC is precise, all out-of-date versions are
collected immediately. The helping scheme is our \PSWF{} does not affect much of the number of maximum versions.
For all tested setting, the number of versions
kept by our \PSWF{} algorithm is only 1.5-83$\times$ less than EP, and
about 7-120$\times$ less than HP.

%On average, for all tested setting, EP keeps
%2.5$\times$ more versions, and HP keeps 14$\times$ more versions
%than our WF algorithm.

%On average, for all tested setting,
%our WF algorithm keeps 60\% and 93\% less live versions compared to EP
%and HP, respectively.

\newcommand{\cmark}{\ding{51}}%
\newcommand{\xmark}{\ding{55}}%

\paragraph{\textbf{The throughput of queries and updates.}}
We report the query and update throughput (millions of queries/updates per second) for different settings in
Table \ref{tab:vmexp}. We compare the throughput
numbers for base cases when no VM (and thus no GC) algorithms are adopted, noted as ``Base''
in the Tables. %We show the maximum/average number of live versions for reference.
%We use underlines to highlight the best value among the three VM algorithms for each test setting.

\begin{table}[!t]
  \centering
  \small
    \begin{tabular}{rr|rrrrrr}
    %\hline
      $\boldsymbol{n_q}$ & \multicolumn{1}{c}{$\boldsymbol{n_u}$} & \multicolumn{1}{c}{\textbf{Base}} & \multicolumn{1}{c}{\underline{\textbf{\PSWF}}} & \multicolumn{1}{c}{\underline{\textbf{PSLF}}} & \multicolumn{1}{c}{\textbf{HP}} & \multicolumn{1}{c}{\textbf{EP}} & \multicolumn{1}{c}{\textbf{RCU}} \\
    \hline
     & & \multicolumn{6}{c}{\textbf{Query Throughput (Mop/s)}}  \\

        %\textbf{10} & \textbf{---} & 49.5& 50.1& 50.02& 50.46& 49.96& 49.83 \\
        %\textbf{1000} & \textbf{---} & 51.76& 52.41& 51.82& 52.25& 51.99& 51.88 \\
        \textbf{10} & \textbf{10} & 44.40& 39.79& 39.51& 39.46& 39.07& 39.20 \\
        \textbf{10} & \textbf{1000} & 44.63& 39.40& 39.51& 42.31& 39.74& 39.55 \\
        \textbf{1000} & \textbf{10} & 46.24& 40.54& 40.53& 41.16& 40.29& 47.74 \\
        \textbf{1000} & \textbf{1000} & 46.22& 41.10& 40.56& 43.76& 40.94& 41.45 \\

    \hline
%    \end{tabular}%
%
%    \begin{tabular}{rr|rrrrrr}
%    \hline
    & & \multicolumn{6}{c}{\textbf{Update Throughput (Mop/s)}}  \\
%         &       & \multicolumn{1}{c}{\textbf{Base}} & \multicolumn{1}{c}{\underline{\textbf{\PSWF}}} & \multicolumn{1}{c}{\underline{\textbf{PSLF}}} & \multicolumn{1}{c}{\textbf{HP}} & \multicolumn{1}{c}{\textbf{EP}} & \multicolumn{1}{c}{\textbf{RCU}} \\
        %\textbf{---} & \textbf{10} & 0.251& 0.197& 0.198& 0.173& 0.195& 0.196 \\
        %\textbf{---} & \textbf{1000} & 0.279& 0.197& 0.195& 0.121& 0.193& 0.197 \\
        \textbf{10} & \textbf{10} & 0.133& 0.101& 0.104& 0.053& 0.064& 0.056 \\
        \textbf{10} & \textbf{1000} & 0.158& 0.133& 0.134& 0.074& 0.071& 0.073 \\
        \textbf{1000} & \textbf{10} & 0.130& 0.105& 0.107& 0.056& 0.063& 0.003 \\
        \textbf{1000} & \textbf{1000} & 0.154& 0.133& 0.134& 0.077& 0.074& 0.060 \\
    \hline
%    \end{tabular}%
%
%    \begin{tabular}{rr|rrrrrr}
%    \hline
& & \multicolumn{6}{c}{\textbf{Max \# Versions}}  \\
        \textbf{10} & \textbf{10} & ---& 3.67& 4.00& 282.00& 304.67& 1.00 \\
        \textbf{10} & \textbf{1000} & ---& 2.67& 2.33& 282.00& 4.00& 1.00 \\
        \textbf{1000} & \textbf{10} & ---& 36.33& 36.33& 282.00& 324.00& 1.00 \\
        \textbf{1000} & \textbf{1000} & ---& 2.33& 2.00& 282.00& 3.33& 1.00 \\
    \hline
    \end{tabular}%
  \caption{\small The query throughput, update throughput, and the number of live versions in each VM algorithm under various settings. Throughput numbers are reported as millions of operations per second (Mop/s).}\label{tab:vmexp}%
  %\vspace{-.25in}
\end{table}%

\hide{
\begin{table}[!t]
  \centering
  \small
    \begin{tabular}{rr|rrrrrr}
    \hline
    \multicolumn{1}{c}{\multirow{2}[4]{*}{$\boldsymbol{n_q}$}} & \multicolumn{1}{c|}{\multirow{2}[4]{*}{$\boldsymbol{n_u}$}} & \multicolumn{6}{c}{\textbf{Query Throughput (Mop/s)}}  \\
\cline{3-8}          &       & \multicolumn{1}{c}{\textbf{Base}} & \multicolumn{1}{c}{\underline{\textbf{\PSWF}}} & \multicolumn{1}{c}{\underline{\textbf{PSLF}}} & \multicolumn{1}{c}{\textbf{HP}} & \multicolumn{1}{c}{\textbf{EP}} & \multicolumn{1}{c}{\textbf{RCU}} \\
    \hline
        %\textbf{10} & \textbf{---} & 49.5& 50.1& 50.02& 50.46& 49.96& 49.83 \\
        %\textbf{1000} & \textbf{---} & 51.76& 52.41& 51.82& 52.25& 51.99& 51.88 \\
        \textbf{10} & \textbf{10} & 44.40& 39.79& 39.51& 39.46& 39.07& 39.20 \\
        \textbf{10} & \textbf{1000} & 44.63& 39.40& 39.51& 42.31& 39.74& 39.55 \\
        \textbf{1000} & \textbf{10} & 46.24& 40.54& 40.53& 41.16& 40.29& 47.74 \\
        \textbf{1000} & \textbf{1000} & 46.22& 41.10& 40.56& 43.76& 40.94& 41.45 \\

    \hline
%    \end{tabular}%
%
%    \begin{tabular}{rr|rrrrrr}
%    \hline
    \multicolumn{1}{c}{\multirow{2}[4]{*}{$\boldsymbol{n_q}$}} & \multicolumn{1}{c|}{\multirow{2}[4]{*}{$\boldsymbol{n_u}$}} & \multicolumn{6}{c}{\textbf{Update Throughput (Mop/s)}}  \\
\cline{3-8}          &       & \multicolumn{1}{c}{\textbf{Base}} & \multicolumn{1}{c}{\underline{\textbf{\PSWF}}} & \multicolumn{1}{c}{\underline{\textbf{PSLF}}} & \multicolumn{1}{c}{\textbf{HP}} & \multicolumn{1}{c}{\textbf{EP}} & \multicolumn{1}{c}{\textbf{RCU}} \\
    \hline
        %\textbf{---} & \textbf{10} & 0.251& 0.197& 0.198& 0.173& 0.195& 0.196 \\
        %\textbf{---} & \textbf{1000} & 0.279& 0.197& 0.195& 0.121& 0.193& 0.197 \\
        \textbf{10} & \textbf{10} & 0.133& 0.101& 0.104& 0.053& 0.064& 0.056 \\
        \textbf{10} & \textbf{1000} & 0.158& 0.133& 0.134& 0.074& 0.071& 0.073 \\
        \textbf{1000} & \textbf{10} & 0.130& 0.105& 0.107& 0.056& 0.063& 0.003 \\
        \textbf{1000} & \textbf{1000} & 0.154& 0.133& 0.134& 0.077& 0.074& 0.060 \\
    \hline
%    \end{tabular}%
%
%    \begin{tabular}{rr|rrrrrr}
%    \hline
    \multicolumn{1}{c}{\multirow{2}[4]{*}{$\boldsymbol{n_q}$}} & \multicolumn{1}{c|}{\multirow{2}[4]{*}{$\boldsymbol{n_u}$}} & \multicolumn{6}{c}{\textbf{Max \# Versions}}  \\
\cline{3-8}          &       & \multicolumn{1}{c}{\textbf{Base}} & \multicolumn{1}{c}{\underline{\textbf{\PSWF}}} & \multicolumn{1}{c}{\underline{\textbf{PSLF}}} & \multicolumn{1}{c}{\textbf{HP}} & \multicolumn{1}{c}{\textbf{EP}} & \multicolumn{1}{c}{\textbf{RCU}} \\
    \hline
        \textbf{10} & \textbf{10} & ---& 3.67& 4.00& 282.00& 304.67& 1.00 \\
        \textbf{10} & \textbf{1000} & ---& 2.67& 2.33& 282.00& 4.00& 1.00 \\
        \textbf{1000} & \textbf{10} & ---& 36.33& 36.33& 282.00& 324.00& 1.00 \\
        \textbf{1000} & \textbf{1000} & ---& 2.33& 2.00& 282.00& 3.33& 1.00 \\
    \hline
    \end{tabular}%
  \caption{\small The query throughput, update throughput, and the number of live versions in each VM algorithm under various settings. Throughput numbers are reported as millions of operations per second (Mop/s).}\label{tab:vmexp}%
  \vspace{-.25in}
\end{table}%
}

Generally, from Table \ref{tab:vmexp} we can see that introducing a VM algorithm
always lowers the throughput of queries and updates. This is not only
because of the overhead in maintaining versions, but also from
the possible GC cost. For both updates and queries, we do not see a significant difference between
our \PSWF{} algorithm and PSLF algorithm. Generally this means that in practice, it is very rare that the writer
needs to help the readers a lot. We do see a more notable difference in extreme cases (e.g., $n_u=1$) \cite{singlewriterarxiv}.
%We show the results in the full version of this paper.

\emph{Queries.} For all the five algorithms and all the four settings,
the overhead of introducing GC and VM algorithms is around 10\% for queries.
The five VM algorithms have comparable performance. RCU usually has much better query performance,
this is possibly because all the queries of RCU are working on the same version, and thus leading to
better locality.
%, and our \PSWF{} algorithm is always no more than 10\% slower than the best of the three VM algorithms in query throughput.

\emph{Updates.} Generally, larger $n_u$ results in better update throughput. There are mainly two reasons. Firstly, batching more updates in one transaction reduces the overhead in calling \acquire{}, \set{} and \release{} for version maintenance. Secondly, larger update transactions allow more query threads to catch more recent versions, and thus a larger fraction of the current version will appear in cache, making updates faster. The overhead of introducing GC and VM algorithms is within 20\% for our \PSWF{} algorithm, but can be more for the other algorithms.
Our algorithms are always the best among all the algorithms in terms of update throughput.
It is likely because for HP, EP and RCU, the writer is responsible to do all GC work, while in \PSWF{},
queries and updates share the responsibility of GC. Note that although RCU has the best performance in queries,
it has much lower update performance than the others, because the writer can be blocked by unfinished queries.

\emph{Overall.} Generally, our \PSWF{} algorithm is comparable to
the EP and HP, and slightly slower than RCU in queries, but is always much faster in updates than all the other implementations. As mentioned, this is mostly due to the difference in
GC responsibility. %Also, our \PSWF{} algorithm keeps the least number of live versions
%for all tested settings.
Therefore, our algorithms have the best overall
performance.

%When $n_q$ is large, this overhead is slightly higher because it is more likely that the queries are keeping some very old versions, and are responsible for collecting them.

\hide{
\paragraph{\textbf{Thread Contention Between Queries.}}
We also use different number of query threads to test the contention
between threads for our \PSWF{} algorithm. %Results are shown in Figure \ref{fig:augsumexp}
(a) and (b). We show result
with $n_u=100$ (Figure \ref{fig:augsumexp} (a)) and $n_u=1000$ (Figure \ref{fig:augsumexp} (b)).
We try different number of $n_q$.  We report the
query throughput per thread in the figure.

The contention mainly lies in the interaction of readers and
writers via the CASes, which has been shown to be bounded by our Theorem \ref{thm:contention} since
we only adopt a single writer here.
Ideally when there is no contention, this quantity
should stay the same for different settings, but in practice more
working threads and faster query transactions should imply higher
contention.  Indeed for $n_q=1$, queries are frequent enough that we
can see significant overhead comparing with larger $n_q$.
Also, as shown in Figure
\ref{fig:augsumexp} (c), when $n_q$ is not too small, even when
$n_q=10$, we hardly see the throughput dropping with more working
threads, meaning that there is little contention between
threads. We note that in many test cases, more threads leads to slightly
better performance. This is likely of the reason that more threads shared the responsibility

Also, when $n_q$ is $10$ the performance is very close to when
$n_q$ is as large as $1000$. These results are consistent with our
theory that the contention is bounded (constant amortized contention).
In all, for query transactions that are not too small (even for $n_q$
around 10), there is little contention (no more than 20\% and mostly within 10\%) caused by using
more query threads. %, as well as caused by more frequent queries.
%For $n_q=1$, queries are too frequent such that the overhead is still significant.
}

\hide{
\paragraph{The throughput of queries.}
We also present the throughput of queries in Table \ref{tab:query} and Figure \ref{fig:augsumexp} (c). We compare the throughput with when there are only queries, based on our version or directly using PAM (no versioning).

First, we can see that when $n_q$ is small, increasing $n_q$ will result in larger throughput. This is because more queries are batched in one transaction, amortizing the overhead across each query. When $n_q$ gets larger, the ratio of overhead gets smaller, and thus the throughput stays almost the same for $n_q\ge 100$.

The granularity of updates does not seem to affect the query much. When there is no update, all queries will attempt to acquire the only version and use CAS to update the counter. In this case, the query throughput will even get smaller and when the queries are frequent ($n_q=1$ or $10$). %This because of the high contention on updating the counter.
%Since we know that no update will happen and the current version will never be collected, we can directly use the current version without updating the counter.
Hence we also show the performance of directly using PAM without versioning. It has significantly better performance with more frequent queries because of no contention caused by CAS.
When $n_q$ gets larger, our version with $n_u=0$ gets similar performance with PAM.

When $n_q$ is small, the overhead of adding update transactions running in background comes mainly from the contention in the contention in updating the counter. When $n_q$ is large, the overhead of maintaining the counters is only a smaller fraction comparing with the query time. In this case the overhead may come from the garbage collection.
This doe not seem to be much (around $20\%$). This means that in practice, our lock-free algorithm has very good performance: all the concurrent queries can continually get updated versions, and the overhead of this is reasonably small with a reasonable query rate comparing to the case with no updates.

\paragraph{The Throughput of Updates.}
We present the throughput of updates in Table \ref{tab:updatethroughput} and Figure \ref{fig:augsumexp} (b). We compare the throughput with when there are only updates, based on our version or directly using PAM (no versioning).

In general, less frequent query transactions give higher throughput due to less contention in updating the counter.
The throughput of directly using PAM is significantly larger than the others. This is mainly because in this case, there is no version maintenance or garbage collection, and thus all versions are preserved. This is also illustrated by the fact that when $n_u$ gets larger, the gap narrowed between the bars of PAM and $n_q=0$. This is because when new version are generated less frequent, the overhead in garbage collection also drops. Even so our lock-free algorithm achieve about a half throughput in update comparing with PAM, which need to preserve all history versions---it can be proportional to the total number of update transactions.

Another interesting observation is that, when $n_u$ is small, the update throughput without simultaneous query is even smaller than when there is query. This is because in this case, the garbage collection is mostly done by the writer, bringing down the update throughput.}

\subsection{Functional Concurrent Operations}
%\begin{figure}
%  \centering
%  \includegraphics[width=0.8\columnwidth]{../figures/ycsb}
%  \caption{Throughput of six data structures on YCSB workloads A (read/update, 50/50), B (read/update, 95/5) and C (all reads). \label{fig:augsumexp}}
%\end{figure}
In this section test the throughput of concurrent operations on
the functional tree in PAM.
% to several state-of-the-art concurrent trees.
%We tested two cases: 1) using batching and functional data structure
%to deal with mixed work loads of read/write operations, and 2) using functional
%data structure for a simple application of inverted index searching.

\myparagraph{Concurrent Operations with Batching.}
We compare the functional tree to several state-of-the-art concurrent data structures:
skiplists~\cite{skiplist}, OpenBW trees~\cite{openbwtree}, Masstree
\cite{masstree}, B+trees \cite{openbwtree} and concurrent Chromatic
trees~\cite{Brown14,chromatic} (all in C++).
For all
structures we turn GC off since we are interested in the performance
of the trees and not the GC.
We use the Yahoo! Cloud Serving Benchmark
(YCSB) microbenchmarks, which have skewed access patterns (Zipfian
distributions) to mimic real-world access patterns.  We test YCSB
workloads A (read/update, 50/50), B (read/update, 95/5) and C (all
read).  The original dataset (before updates) has $5\times 10^7$
elements, and each workload contains $10^7$ transactions.  We use
64-bit integers.

For PAM we use batching to collect concurrent updates so they
can be updated in parallel using single-writer.  The batching works by accumulating update
requests in a buffer and when there are a sufficiently many, applying
them using PAM's multi-insert function, which is a parallel divide-and-conquer algorithm~\cite{blelloch2016just}.
The batch size is controlled so the latency for an update is no more than 50ms.  More
details on batching are given in Appendix \ref{sec:batching}.  The
reads (finds in the tree) do not need to be batched since any number
of readers can run concurrently.

The results on operation throughput are presented in Figure
\ref{fig:augsumexp}.  In all the three workloads, our
implementation outperforms the best of the others by 20\%-300\%.
%We note that the Masstree shows the closest performance to our
%implementation.  It, however, uses a trie-like structure that is
%partially radix-based, and hence is not generic across types, while the
%others, including ours, only use comparisons.
There are a few factors
contribute to the good performance of our implementation.  Firstly,
the code for a query is just a standard tree search with no additional
cost for synchronization.  Secondly, since the code for the batched
updates uses a parallel divide-and-conquer algorithm for each batch,
it generates no contention between writes.

We note that the comparison is not apples-to-apples.  Due to batching,
our updates have higher
latency than the others.  This will
not be appropriate in some applications.  On the other hand, our
approach allows multiple operations to be applied atomically,
while the others only support
atomicity at the granularity of individual operations.

%% The throughput number for queries (workload C) is higher than shown in
%% Table \ref{tab:vmexp} (140 vs. 40).  This is because the queries
%% here are finds instead of range sums (approximate half the work), the
%% initial tree size is smaller ($5\times 10^7$ vs. $10^8$), and the
%% queries in YCSB have better cache performance since they obey the
%% Zipfian instead of random distribution.

\hide{In summary, in situations in which updates can be batched, our
approach seems to outperform the throughput of other concurrent trees,
and in situations where atomicity of multiple-operation is required,
the other concurrent trees are not even directly applicable.}

\begin{fullversion}
\begin{table}[htbp]
  \centering
    \begin{tabular}{@{}r@{}l@{}||rrrrrr}
    \hline
          &       &       & \multicolumn{1}{c}{\textbf{skip-}} & \multicolumn{1}{c}{\textbf{Open-}} & \multicolumn{1}{c}{\textbf{mass-}} & \multicolumn{1}{c}{\textbf{B+}} & \multicolumn{1}{c}{\textbf{Chro-}} \\
          & \multicolumn{1}{c||}{\textbf{R/U}} & \multicolumn{1}{c}{\textbf{PAM}} & \multicolumn{1}{c}{\textbf{list}} & \multicolumn{1}{c}{\textbf{BW}} & \multicolumn{1}{c}{\textbf{tree}} & \multicolumn{1}{c}{\textbf{tree}} & \multicolumn{1}{c}{\textbf{matic}} \\
    \hline\hline
    \textbf{A} & \textbf{ (50/50)} & \textbf{27.62} & 0.92  & 8.77  & 14.57 & 16.39 & 19.19 \\
    \textbf{B} & \textbf{ (95/5)} & \textcolor[rgb]{ .133,  .133,  .133}{\textbf{102.21}} & 1.32  & 30.78 & 84.96 & 78.50 & 41.46 \\
    \textbf{C} & \textbf{ (100/0)} & \textbf{139.11} & 13.24 & 52.12 & 98.16 & 91.02 & 43.84 \\
    \hline
    \end{tabular}%
    \caption{Throughput of six concurrent data structures on YCSB workloads A (read/update, 50/50), B (read/update, 95/5)and C (all reads).}
  \label{tab:addlabel}%
\end{table}%
\end{fullversion}

\hide{
\begin{figure}[!h!t]
  \centering
  \includegraphics[width=1\columnwidth]{../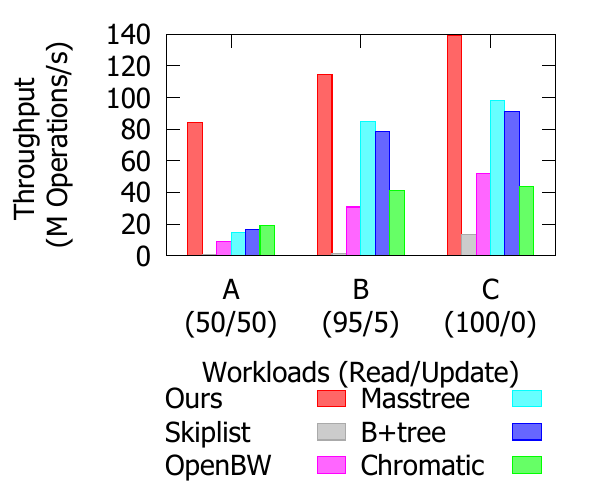}
  \caption{Throughput of six concurrent data structures on YCSB workloads A (read/update, 50/50), B (read/update, 95/5)and C (all reads).}\label{fig:ycsb}
\end{figure}
}
\myparagraph{Inverted Index Searching.}
We test the functional tree on searching an inverted index \cite{rajaraman2011,ZM06}
%(also called an inverted file or posted file)
to show the overhead of read/write transactions on functional data structures. %We maintain a tree for the inverted index. Read-only searching queries come in concurrently, and a single thread is responsible for adding new documents to the index. The update can be done in parallel since multiple words are added simultaneously. 
%Due to page limitation, we omit the details here. More results are shown in the full version of this paper \cite{singlewriterarxiv}.

We also test our algorithm on searching a weighted inverted index \cite{ZM06,rajaraman2011} (also called an inverted file or posted file). Given a set of documents, which each consists of a set of terms (words), an inverted index build a mapping from each term to a list of documents (called the \emph{posting list} of the term) it appears in, each document assigned a weight $w$ corresponding to the term-document pair. Usually the weight reflects how the term is related to the document, and how important is the document itself. %This structure can answer the conjunction and disjunction documents on given terms by taking \texttt{union} and \texttt{intersection} on the corresponding post lists, with weights combined. One useful application is to only report the documents with high weight, say, top-$k$ of them, as a search engine would list on the first page. Throughout the section we assume all weight combining function is just a simple addition.

We implement the mapping using a tree $T$, where the value (the \emph{posting list}) of each term $t$ is a inner map structure, noted as $pl(t)$, mapping each document $d$ to a weight $w_{t,d}$. We augment the inner tree with the maximum weight in its subtree. Both the inner and the outer trees are functional using path-copying. The static setting is basically the same as in \cite{pam}. In this paper, we evaluate the throughput of the tree in the dynamic setting, i.e., when updates and queries are done concurrently.

\hide{
We use the augmented map framework to represent $M_I$ and $M_O$ as follows:
{\small
\begin{tabular}{l@{}@{ }l@{ }@{}l@{}@{ }l@{ }@{}l@{ }@{}l@{ }@{}l@{ }@{}l@{ }@{}l@{ }@{}l@{ }@{}l@{ }@{}l@{ }}
$M_I$ &$=$& $\mm$&$($&$D$,&$W$,&$<_D$,&$W$,&$(k,v) \mapsto v$, &${\max}_W$,&$0$&)\\
$M_O$ &$=$& $\mm$&$($&$W$,&$M_I$,&$<_T$&$)$
\end{tabular}
}}

In the dynamic setting, new documents are added to the corpus, and some of the old ones are removed. Simultaneously multiple users are querying on the index. Usually updates are conducted by the server, and can be easily wrapped in one write transaction. In addition, adding one document means a large set of term-document relations added to the database, and we want a whole document is combined into the database \emph{atomically}, i.e., the queries will never read a partially updated document in the database. %This fits our single-writer framework, and 
The correctness would be supported by the functional tree structure. 
Assume we are adding a new document $d$ with a list of terms $\{t_i\}$ each with weight $w_i$ into the current outer tree $T$. We first build an outer-tree structure $T'$ based on all mappings $t_i\mapsto (d\mapsto w_i)$. Then we take a \texttt{union} on this tree $T'$ and the current corpus tree $T$, and whenever duplicate keys (terms) appear, we take a \texttt{union} on their values. This means that if a term $t_i\in d$ has already appeared in the current corpus $T$, the posting lists of the same term will be combined. The PAM library supports such \texttt{union} function that applies a user-specified binary operation to combine values when duplicates appear. This is done by the join-based \texttt{union} algorithms \cite{blelloch2016just}, which also runs in parallel.
\hide{
\begin{tabular}{l}
$f(\{(d_i,w_i)\})$ = $M_I.$\texttt{build\_reduce}$((d,w_i),+)$\\
$m'=M_O.$\texttt{build\_reduce}$(\{(t_i,(d,w_i)), f\})$\\
$m=M_O.$\texttt{union}$(m,m',M_I.$\texttt{union}$)$\\
\end{tabular}
We use $\{s_i\}$ to denote a sequence of elements $s_i$. In this process, all mappings of $t_i$ to $d$ are first built into an outer tree $m'$. If a term appears multiple times in the document, their weights are combined (the $f$ function taken by \texttt{build\_reduce}). Then we merge $m'$ into $m$ by \texttt{union}. If a term in $t'$ has already appeared in $t$, the posted-list will be combined using the \texttt{union} function on the inner map ($M_I.$\texttt{union}). The deletion of a file can be done in a similar way, in which the combine function in the last \texttt{union} is a $M_I.$\texttt{difference} instead of a \texttt{union}.
}

We test ``and''-queries, which means each query takes two terms and return the top-10 ranked documents in which both terms appear. We carefully choose the query terms such that the output is reasonably valid. The query is done by first read the posted-list of both terms, and take an \texttt{intersection} on them. Because of persistence, the two posting lists are just snapshots of the current database, and hence each query will not affect any other queries nor the update by the writer.

\textbf{Experimental Results.} The throughput numbers of using PAM to build or run only queries on the inverted index have been shown in \cite{pam}, and in this paper our experiments aim at showing that simultaneous updates and queries does not have much overhead comparing to running them separately. we use the publicly available Wikipedia database~\cite{Wikipedia} (dumped on Oct. 1, 2016) consisting of 8.13 million documents. We use the same pre-processing approach as in~\cite{pam}.

We first build a tree with $1.6\times 10^9$ word-doc pairs. We use different number of threads to generate queries, and the rest are used for doing updates. We note that the thread allocation for running query/update ratio depends on the scheduler. We do not use versioning or GC. We run both update and query simultaneously in 30 seconds, and record the throughput for each. We then test the same number of updates or queries running separately using all available threads (144 of them). Both update and query run in parallel---not only multiple queries run in parallel, but each single query is also parallel (using parallel intersection algorithm). The update uses a parallel union algorithm.
We report the time for running them separately as $T_u$ (purely update) and $T_q$ (purely query). Numbers are shown in Table \ref{tab:indexexp}.
As we use more threads to generate queries, the update ratio gets lower. This is because the sub-tasks in queries are generated more frequently, hence is more likely to be stolen.
In conclusion, the total time of running them almost add up to 30 seconds, which is the time running them in parallel together.

In practice, the ratio of queries running on such search engines should be much more than the updates. In this case, our experiments show that adding a single writer to update the database does not cause much overhead in running time, and the queries and gradually get the newly-added documents.

\begin{table}[!h!t]
  \centering
    \begin{tabular}{|r|r|r|r|r|}
\hline
    \multicolumn{1}{|c|}{$\bf p$} & \multicolumn{1}{c|}{$\bf T_u$} & \multicolumn{1}{c|}{$\bf T_q$} & \multicolumn{1}{c|}{$\bf T_u+T_q$} & \multicolumn{1}{c|}{$\bf T_{u+q}$} \\
\hline
    10    & 13.4  & 17.3  & 30.7  & 30 \\
\hline
    20    & 8.22  & 21.6  & 29.82 & 30 \\
\hline
    40    & 4.18  & 25.1  & 29.28 & 30 \\
\hline
    80    & 1.82  & 27    & 28.82 & 30 \\
\hline
    \end{tabular}%
  \label{tab:indexexp}%
    \caption{The running time (seconds) on the inverted index application. $T_{u+q}$ denote the time for conducting updates and queries simultaneously, using $p$ threads generating queries. We set $T_{u+q}$ to be 30s. We then record the number of updates and queries finished running, and test the same number of updates/queries separately on the initial corpus. When testing separately we use all 144 threads.}
\end{table}%

\hide{
\subsection{Summary}
Our experiments evaluated the two main aspects of our approach: the VM algorithm, and the functional data structures.
Experiments show that our functional tree structure using parallel bulk update is more efficient than many concurrent data structures, for various workloads of mixed reads and updates.
For the VM algorithm, our experiments show that the overhead caused by introducing our wait-free VM algorithm with GC is reasonably small. Comparing to the other two algorithms (HP and EP), our \PSWF{} algorithm achieves comparable queries throughput, but is faster in updates. In terms of GC, our algorithm collects out-of-date versions fairly timely (and precisely) and leaves much less garbage behind than the other two algorithms.
}
%For 140 queries threads and a single writer thread, the number of live versions is mostly within 10, and always below $p+1=141$. The memory usage is much lower than HP (always over 120 versions) and EP (over 10000 in the worst case).

	\section{Related Work}\label{sec:related}

Multiversioning has been studied extensively since the 70s
~\cite{Reed78,BG83,papadimitriou1984concurrency}.  However, most
previous protocols, like multiversion timestamp ordering (MVTO)
\cite{Reed78} and read-only multiversion (ROMV)
~\cite{Papadimitriou1986,Weikum01} are time-stamp based, maintaining
version lists for every object, which are traversed to find the object
with the proper timestamp.  This approach inherently delays user code
since version lists can be long.  It also complicates garbage
collection.  Kumar et al.~\cite{Kumar14} revisit the MVTO protocol and
develop a concrete algorithm with GC that has similar properties to
ours if the GC is applied frequently enough.  However this requires
scanning whole version lists for objects and requires locks.  Also in
their algorithm the writer can still delay readers and the readers can
abort the writer.  As far as we know no work based on multiversioning
with version lists has shown bounds on time or space.

Perelman, Fan and Keidar~\cite{perelman2010maintaining} showed
resource bounds for multiversion protocols.  They define the notion of
MV-permissiveness, which means that only write transactions abort (or
restart), and only if they conflict.  They also define useless prefix
(UP) GC, which is similar but slightly weaker than our notion of
precise GC (it only collects proper prefixes of the versions).  They
describe an algorithm that is MV-permissive and satisfies UP GC.  They
do not give any time bounds---the delay could take time that is a
function of data structure size and number of processes, even when
there is a single writer, since the approach is based on copying an old
value to all previous active versions.

Beyond RCU~\cite{McKenney98}, the read-log-update (RLU) protocol also
supports two versions such that readers can read an old version, while
the writer updates the current version~\cite{Matveev15}. The RLU allows
readers to see the currently updated version, but still blocks before
the next version can be updated until all processes reach a quiscent
period.  Attiya and Hillel~\cite{Attiya11} suggest a similar idea
that allows readers to proceed while blocking writers (even a single
writer).

Path-copying is a default implementation in functional languages, where data cannot be overwritten \cite{okasaki1999purely}.
Similar techniques have been used for maintaining multiversion B-tree or B+tree structures or their variants \cite{minuet,becker1996asymptotically}, and is used in real-world database systems like LMDB \cite{LMDB}, CouchDB \cite{couchdb}, Hyder \cite{hyder} and InnoB \cite{innodb}, as
well as many file-systems \cite{Rodeh2013BTRFSTL,chutani1992episode,bonwick2003zettabyte,craig2003metadata,hitz1994file}.

Some techniques in our algorithm can also be found in wait-free
universal construction algorithms \cite{fatourou2011highly,
  herlihy1990methodology,herlihy1993methodology}. More details can be found in the full version of this paper.

\section{Acknowledgement}
This work was supported in part by NSF grants CCF-1408940, CCF-1533858, and CCF-1629444.

    %\input{discussion}

	%\input{garbage}
	
	%\input{batching}

    %\input{application}
	
	%\bibliographystyle{plain}
	%\bibliography{main}

%\input{figures}

\ifappendix
%\newpage
\appendix
\section{Properties of the Version Maintenance Problem}\label{sec:vm-prop}%\label{sec:prop-proof}
\putmaybeappendix{prop-proof}

%\section{A Lock-free Version Maintenance Algorithm}\label{sec:lockfree-algo}

%\section{Lock-free Version Maintenance Proofs}\label{sec:lockfree-proof}
%\putmaybeappendix{lockfree-correctness}

\section{Wait-free Version Maintenance Proofs}\label{sec:waitfree-proof}

\subsection{Proof of Correctness}\label{sec:waitfree-correctness}
\putmaybeappendix{waitfree-correctness}

\subsection{Proof of Time and Contention Bounds}\label{sec:waitfree-time}
\putmaybeappendix{waitfree-time}

%\section{Garbage Collection Proofs}\label{sec:app-gc}
%\putmaybeappendix{collect-def}
\section{Proof of Correct \collect{} Function}
\putmaybeappendix{collect-correctness}

\section{Proof for Single-writer Concurrency}
\subsection{Proof of Serializability}
\label{app:theorem1}
We first prove the following Lemma.
\putmaybeappendix{serial-proof}
\subsection{Proof of Safe and Precise Garbage Collection}
\label{app:theorem2}
%\putmaybeappendix{gc-correctness}
We now show that the \collect{} algorithm is correct. First we prove that it satisfies the first part of Definition \ref{def:collect}.

\begin{lemma} \label{lem:collect-1}
Let $u$ be a shared \tuple{}. For any shared \tuple{} $w$, let $V_w$ be the set of versions that $w$ belongs to. If a \collect{} operation has terminated for each version in $V_u$, then $u$ has been freed.
\end{lemma}

\begin{proof}
Fix an execution history and a configuration $C$. Consider the set $G$ of all shared \tuple{}s $w$ such that for each version $v \in V_w$, a \collect(v){} operation has terminated. %for each version in $V_w$.
It suffices to show that for each \tuple{} in $G$, there is a \collect{} operation that frees the tuple and terminates before $C$.

First, we show that no local \tuple{s} can affect the \tuple{s} of $G$. To see this, fix a \tuple{} $u \in G$. We want to show that there cannot be any pointers to $u$ from local \tuple{s}, and thus that its reference count cannot be affected by local \tuple{}s. Assume by contradiction that there is a local \tuple{} $\ell$ that is pointing to $u$ in configuration $C$. Note that only write transactions ever create \tuple{s}, and that the writer cleans up local \tuple{s} in its \texttt{output} \op{}, and therefore never leaves any local \tuple{s} or effect on the reference counts of shared \tuple{s} after returning.
Therefore, $\ell$ must have been created by a write transaction $t$ that is currently running user code. For $t$ to be able to create a \tuple{} that points to $u$, there are two cases: (1) $u$ must be a part of the version that $t$ commits, or (2) $u$ must be reachable from the version that $t$ acquired.
Note that in the first case, $u$ is not a shared \tuple{} itself, since it has been created by a transaction that has not yet finished its user code.
For the second case, recall that for $u$ to be in $G$, all versions that $u$ belongs to must have been collected. However, $u$ belongs to $V(t)$, and since $t$ is running user code, $V(t)$ is live at $C$, and therefore cannot have been collected yet. This contradicts the definition of $G$.
Therefore, $\ell$ cannot exist.

Notice that $G$ forms a DAG. Furthermore, for each \tuple{} $w \in G$, $G$ contains every shared \tuple{} that points to $w$. This is because a \tuple{} belongs to all of the versions that its parent belongs to. Therefore we can proceed by structural induction on $G$.

For the base of the induction, we prove that each of the roots in $G$ has been freed by a completed \collect{} operation. Let $u$ be some root in $G$.
We just need to show that each increment of $u$'s reference count has a completed \collect($u$) operation corresponding to it. We've already shown that there are no outstanding increments from local \tuple{s} affecting $u$.
%is a completed \collect($u$) operation for each local \tuple{} that points to $u$.
This also holds for increments by \outputt($u$) operations because all of the versions that $u$ belongs to have already been collected. Since $u$ is a root, its reference count is not incremented anywhere else, so one of the completed \collect($u$) operation sets the reference count of $u$ to $0$ and frees $u$.

Now we prove the inductive step by fixing some \tuple{} $u$ in $G$ and assuming that all of its parents have been freed by some completed \collect{} operation. Similar to the base case, we show that each increment of $u$'s reference count has a completed \collect($u$) operation corresponding to it.
%We've already shown that there is a completed \collect($u$) operation for each local \tuple{} that points to $u$.
All arguments from the base case hold here, and therefore we do not need to worry about increments from local \tuple{s} or \outputt{} \op{}s.
%This also holds for increments by \outputt($u$) operations because all of the versions that $u$ belongs to has already been collected.
So we just need to show that for each shared \tuple{} $w$ that point to $u$, there is also a completed \collect($u$) \op{}. By the inductive hypothesis, there is a completed \collect{} operation that frees $w$, and we can see from the code that this operation executes a \collect{} on $u$. Therefore one of the completed \collect($u$) operation sets the reference count of $u$ to $0$ and frees $u$.
By structural induction, each tuple in $G$ has been freed and this completes the proof.
\end{proof}

Next we prove that our collect algorithm satisfies the second part of Definition \ref{def:collect}.

\begin{lemma} \label{lem:collect-2}
Let $u$ be a shared \tuple{} and let $V_u$ be the set of versions that it belongs to. If a \collect{} operation has not started for some version $v \in V_u$, then $u$ has not been freed.
\end{lemma}

\begin{proof}
	%We first prove that the reference count of $u$ will not be incremented after it is set to $0$ by a \collect{} operation.

	Next we claim that each \collect($u$) operation corresponds to an unique increment of $u$'s reference counter.
	This can be seen by a close inspection of the code; let $c$ be a \collect($u$) call and consider two cases. Case (1): $c$ is not called from inside another \collect. That is, $u$ is the root of a version that is being collected. In that case, $c$ corresponds to the increment of $u.\mbox{\it ref}$ in the \outputt{} \op{} of the write that committed this version. Case (2): $c$ is called recursively from a \collect($u'$) \op{}. In this case, the $c$ corresponds to the increment of $u.\mbox{\it ref}$ during the creation of $u'$.

	Let $v \in V_u$ be the version for which no \collect($v$) call has been invoked.
	Since $u$ belongs to $v$, there must be a path from $v$'s version root $r$ to $u$ in the memory graph. We show by induction that no \tuple{} along that graph has been freed, thus implying that $u$ has not been freed.
	
	\textsc{Base:} Consider $v$'s root, $r$. $r.ref$ has been incremented by the \texttt{output} call of the writer that created the version $v$ and the \collect($v$) operation corresponding to this increment has not been invoked yet. Therefore the reference count of $r$ is non-zero, so it has not been freed.
	%We know that for every version for which $r$ is the root, $r.ref$ has been incremented once
	%in the version.
	%Furthermore, there can only be one call to \collect($r$) per version for which $r$ is the root, and every \
	%collect{} \op{} can only decrement a single \tuple{}'s reference count by $1$. Therefore,
	%there have been more increments than decrements on $r$, and so it is not freed.
	
	\textsc{Step:} Assume that the $i$th \tuple{}, $u_i$ in the path from $r$ to $u$ is not freed. We want to show that the $i+1$th \tuple{} on this path, $u_{i+1}$ has not been freed either.
	Consider the \texttt{\tuple} \op{} that made $u_i$ the parent of $u_{i+1}$ in the memory graph. That \op{} incremented $u_{i+1}$'s reference count by $1$ and the \collect($v$) operation corresponding to this increment has not been invoked yet because $u_i$ has not been freed.
	%Furthermore, since $u_i$ has not been freed, there has not been a \collect($u_{i+1}$) that was called from inside a \collect($u_i$).
	Thus, $u_{i+1}$'s reference count is greater than $0$, and therefore it cannot have been freed.
\end{proof}

Finally, we prove that our \collect{} algorithm is efficient.

\begin{lemma} \label{lem:collect-3}
A \collect{} operation takes $O(S+1)$ time where $S$ is the number of \tuple{}s that were freed by the operation.
\end{lemma}

\begin{proof}
Not counting the recursive calls, each \collect{} operation needs a constant time. Each time a \tuple{} is freed, a \collect{} operation is called on each of its $l$ children. Therefore, the total number of \collect{} operations spawned by a \collect{} operation $C$ is $l\times S$, where $S$ is the number of \tuple{}s that were freed by $C$. Since $l$ is constant, $C$ has $O(S+1)$ time complexity in total.
\end{proof}

Together, Lemmas \ref{lem:collect-1}, \ref{lem:collect-2} and \ref{lem:collect-3} imply Theorem \ref{thm:collect-correct}.
\putmaybeappendix{discussion}

%\begin{maybeappendix}{batching}
\section{Batching} % {Extending to the Multi-version Setting}
\label{sec:batching}
\hide{
\lstset{basicstyle=\footnotesize\ttfamily, tabsize=2, escapeinside={@}{@}}
\lstset{literate={<}{{$\langle$}}1  {>}{{$\rangle$}}1}
\lstset{language=C, morekeywords={CAS,commit,empty,local,job,taken,entry,GOTO}}
\lstset{xleftmargin=5.0ex, numbers=left, numberblanklines=false, frame=single}
%\lstset{caption={Lock-free algorithm for the Version Maintenance Problem},captionpos=b,label={alg:lockfree}}
\begin{figure}
	\caption{Batching Algorithm}
\begin{lstlisting}[countblanklines=false]
int head[P], tail[P], offset[P];
void update(operation k, size_t id) {
	buffer[id].push_back(k); }

void single_writer() {
	while (true) {
		while (not enough operations
			&& not a long time) {}
		int start[P], end[P], offset[P];
		for (int i = 0; i < P; i++) {
			start[i] = head[i]; end[i] = tail[i];
			int block = end[i]-start[i];
			if (block > max_op_each) {
				end[i] = start[i]+max_op_each;
				block = max_op_each;
			}
			offset[i+1] = offset[i]+block;
			head[i] = end[i]; }
		int m = offset[P];
		entry_type* a = new entry_type[m];
        		
		par_for (int_t i=0; i@<@P; i++) {
			par_for(int j=start[i]; j@<@end[i]; j++) {
				int ind = offset[i]+j-start[i];
				a[ind] = buffer[i][k]; } }
		d = commit_batch(d, a, m); }
}	
\end{lstlisting}
\vspace{-.3in}
\label{alg:batching}
\end{figure}}

%% Using single-writer guarantees many good properties for the transactional system, at the cost of lower update throughput. Furthermore, in many applications, concurrent updates are inevitable.
%% Here we discuss how to extend our approach to multi-writer scenarios.

%% \subsection{Batching}
For multi-writer settings, our approach allows abort to avoid write-write conflict.
In order to allow no-abort, an option is to use batching with a global single writer transaction.
This transaction is responsible for collecting concurrent updates, and committing the whole batch
atomically using \set{}. This writer transaction can itself run in parallel.
%This allows for (parallel) batching updates, which can be committed atomically by the single writer using \set{}.% on the root of the updated structure.
%% The easiest way to extend our approach to multi-writers is to use batching. This means to use a single ``writer'' thread assembling ongoing updates, and commit them altogether in one batch, possibly in parallel. In this case, there is still at most one active \acquire{}-\set{}-\release{} trio at any time, but concurrent updates can be handled efficiently.
Similar ideas appear in flat combining \cite{hendler2010flat}, which is known to be efficient in practice.

In this paper, we use this technique as part of the user-code of the writer to commit a batch of write operations.
We use a simple strategy, where each process is allocated a buffer array with a head and a tail index.
Each process submits all its updates to the buffer by adding them to the tail. Periodically, the writer goes over each array, assembles all operations between the current head and tail into the batch, and then moves the head index to the current tail index (plus one). There is no contention between processes because each reader only operates its own buffer at the tail, and the single writer only operate on the head index of all buffers.

%Because the single-writer only operates the head index, and the other processors only operate the tail index, there is no contention between them the batching process.
%The batching process also works in parallel using a parallel packing.

The updates are then committed to the database in a batch, possibly in parallel. For example, in our experiments we use functional tree structures as the underlying data structure, and multiple inserts or updates can be done using a parallel \texttt{multi\_insert} function~\cite{pam}. At any time, no two concurrent threads can work on the same tree node. This avoids contention between writes, while utilizing multiple cores to improve throughput. Our experiments show that it is efficient in practice.
Our approach also allows each batch to be committed \emph{atomically}, since committing the new version root makes all new tuples visible atomically.
For more complicated transactions, we need to first build the dependency between transactions.

%An issue of using batched updates is to control the latency for each update since larger batch size achieves
Typically, a larger batch size leads to higher throughput because of better parallelism, but at the cost of longer latency. In our implementation, we control the latency to be at the same magnitude of network latency, such that the latency waiting for a batch to finish does not dominate the cost.

We note that the batching scheme invalidates the guarantee of wait-freedom, but we will show in the experiments that it is generally fast in practice, and can be even faster than state-of-the-art concurrent data structures.

\hide{
\subsection{Inverted Index}
\label{sec:indexapp}

In this section we consider a benchmark based on an inverted index for
document searching.  The updates involve adding documents to the
index, and queries involve term-based logical searches (e.g. return
all documents that contain ``concurrent'' and ``algorithm'').  We use
the tree-based index structure included in PAM~\cite{pam} and extend
it with document updates---the original structure could only build the
index from scratch.  Unlike the YCSB benchmark, here updates and
queries do significantly work---a typical update involves 1000s of
dictionary updates, one for each term, and a query takes unions and
intersections of potentially large sets of documents.  For our
experiments we use the same data as described in~\cite{pam} (the
publicly available Wikipedia database~\cite{Wikipedia} consisting of
8.13 million documents) and use the same queries, and initial
construction of the index.  Since each update is itself large, we do
not batch them.

We first build an initial index with $1.6\times 10^9$ word-document pairs and
experiment with various mixtures of queries and updates.
%We do not use versioning or GC.
We first run experiments with concurrent queries and updates over a period of 30 seconds, and record the
throughput both over that period.  We then test the same
number of updates or queries running separately using all available
threads (144 threads).   Both updates and queries have internal parallelism.
%---not only multiple queries run in parallel, but each single query is also parallel (using parallel intersection algorithm). The update uses a parallel union algorithm.
We report the time for running them separately as $T_u$ (just updates)
and $T_q$ (just queries).  Results are given in
Table~\ref{tab:indexexp}.  As we use more threads to generate queries,
the ratio of updates to queries gets lower.  This is because the
sub-tasks in queries are generated more frequently, and hence more
likely to be stolen by the work-stealing scheduler.  The results show
that in all cases the total time of running queries and updates
separately add up to about 30 seconds, which is the time running them
concurrently together.  Hence the overhead for running them jointly is
small.

In practice, the  query throughput for running search engines
should be much higher than the update throughput.  In this case, our
experiments show that adding a single writer to update the database
does not cause much overhead in running time, and the queries quickly
receive newly-added documents. Our implementation also has the
advantage that a document is added to the database \emph{atomically},
since committing the new version root makes all new \tuple{s} visible
atomically.  To do this with the concurrent trees used in the YCSB
experiments would require taking a lock and blocking readers while the
writes happen, otherwise the readers could see partial results.

\begin{table}[!h!t]
  \centering\small
  \hspace{-.1in}
  \flushleft
  \begin{minipage}{0.52\columnwidth}
    \begin{tabular}{@{}lrrrr}
\hline
    \multicolumn{1}{@{}c@{ }}{$\boldsymbol{p}$} & \multicolumn{1}{c}{$\boldsymbol{T_u}$} & \multicolumn{1}{c}{$\boldsymbol{T_q}$} & \multicolumn{1}{@{}c@{}}{$\boldsymbol{T_u}$+$\boldsymbol{T_q}$} & \multicolumn{1}{@{}c@{}}{$\boldsymbol{T_{u+q}}$} \\
\hline
   \bf 10    & 13.40  & 17.3  & 30.70  & 30 \\
   \bf 20    & 8.22  & 21.6  & 29.82 & 30 \\
   \bf 40    & 4.18  & 25.1  & 29.28 & 30 \\
   \bf 80    & 1.82  & 27.0    & 28.82 & 30 \\
\hline
\end{tabular}%
  \end{minipage}%\flushright
  \hspace{.1in}
  \begin{minipage}{0.4\columnwidth}
    $T_{u+q}$: running updates and queries simultaneously, using $p$ threads invoking queries.
    $T_u$ and $T_q$: running the same number of updates/queries separately.
  \end{minipage}
    \caption{The running time (seconds) on index searching. \label{tab:indexexp}}\vspace{-.2in}
    %using $p$ threads invocating queries. %We set $T_{u+q}$ to be 30s. We then test the same number of updates/queries separately on the initial corpus using all 144 threads.
    \vspace{-.1in}
\end{table}%
}

%\section{Applications and Experimental Results}

%\putmaybeappendix{exp-index}
%\putmaybeappendix{indexexp}

%\input{proofs}

\fi
\end{document}